\newtheorem{remark}[theorem]{Remark}
\newcommand{\norm}[1]{\left\lVert#1\right\rVert}
\definecolor{donkergroen}{rgb}{0.0, 0.5, 0.0}
\newcommand{\TheTitle}{The communication-hiding Conjugate Gradient method with deep pipelines} 
\newcommand{\TheAuthors}{Jeffrey Cornelis, Siegfried Cools and Wim Vanroose}
\title{{\TheTitle}\thanks{Submitted to the editors on \today.
\funding{This work was funded by the Research Council of the University of Antwerp under the University Research Fund (BOF) (J.\,Cornelis) and the Research Foundation Flanders (FWO) under grant 12H4617N (S.\,Cools).}}}
\author{
  Jeffrey Cornelis\thanks{Applied Mathematics Group, Department of Mathematics and Computer Science, University of Antwerp, Building G, Middelheimlaan 1, 2020 Antwerp, Belgium.} \and
  Siegfried Cools$^{\dagger}$\and
  Wim Vanroose$^{\dagger}$
}
\begin{document}

\maketitle

\begin{abstract}
  Krylov subspace methods are among the most efficient solvers for large scale linear algebra problems. 
	Nevertheless, classic Krylov subspace algorithms do not scale well on massively parallel hardware due 
	to synchronization bottlenecks. Communication-hiding pipelined Krylov subspace methods offer increased 
	parallel scalability by overlapping the time-consuming global communication phase with computations such 
	as \textsc{spmv}s, hence reducing the impact of the global synchronization and avoiding processor idling. 
	One of the first published methods in this class is the pipelined Conjugate Gradient method (p-CG). 
	However, on large numbers of processors the communication phase may take much longer than the computation 
	of a single \textsc{spmv}. This work extends the pipelined CG method to deeper pipelines, denoted as p($l$)-CG, 
	which allows further scaling when the global communication phase is the dominant time-consuming factor. 
	By overlapping the global all-to-all reduction phase in each CG iteration with the next $l$ \textsc{spmv}s 
	(deep pipelining), the method hides communication latency behind additional computational work. 
	The p($l$)-CG algorithm is derived from similar principles as the existing p($l$)-GMRES method and by exploiting 
	operator symmetry. The p($l$)-CG method is also compared 
	to other Krylov subspace methods, including the closely related classic CG and D-Lanczos methods and the 
	pipelined CG method by Ghysels et al.. By analyzing the maximal accuracy attainable by the p($l$)-CG 
	method it is shown that the pipelining technique induces a trade-off between performance and numerical stability.
	A preconditioned version of the algorithm is also proposed and storage requirements and performance estimates 
	are discussed. Experimental results demonstrate the possible performance gains and the 
	attainable accuracy of deeper pipelined CG for solving large scale symmetric linear systems. 
\end{abstract}

\begin{keywords}
  Krylov subspace methods, Parallel performance, Global communication, Latency hiding, Conjugate Gradients.
\end{keywords}

\vspace{-0.3cm}

\begin{AMS}
  65F10, 65N12, 65G50, 65Y05, 65N22.
\end{AMS}

\section{Introduction} \label{sec:intro}

Krylov subspace methods \cite{greenbaum1997iterative,liesen2012krylov,meurant1999computer,saad2003iterative,van2003iterative} are well-known as efficient iterative solvers for large scale linear systems of the general form $Ax =b$, where $A$ is an $n \times n$ matrix and $b \in \mathbb{R}^n$. 
These iterative algorithms construct a sequence of approximate solutions $\{x_i\}_i$ with $x_i \in x_0 + \text{span}\{r_0,Ar_0,A^2r_0,\ldots,A^{i-1}r_0\}$, where $r_0 = b-Ax_0$ is the initial residual. The Conjugate Gradient (CG) method \cite{hestenes1952methods}, which allows for the solution of systems with symmetric and positive definite (SPD) matrices $A$, is generally considered as the first Krylov subspace method. Driven by the ongoing transition of hardware towards the exascale regime, research on the scalability of Krylov subspace methods on massively parallel architectures has recently (re)gained attention in the scientific computing community \cite{dongarra2011international,
dongarra2013toward,dongarra2015hpcg,fuller2011future}. Since for many applications the system operator $A$ is sparse (e.g.~given by a local stencil) and thus rather inexpensive to apply in terms of computational and communication cost, the main bottleneck for efficient parallel execution is typically not the sparse matrix-vector product (\textsc{spmv}), but the communication overhead due to global reductions in dot product computations and the related global synchronization bottleneck. 
A dot product of two distributed vectors requires the local computation of the dot product contributions, followed by a global reduction tree of height $\mathcal{O}(\log(N))$, where $N$ is the number of nodes. 
The explicit synchronization of processes involved in this reduction makes the dot product one of the most time consuming operations in the Krylov subspace algorithm on large parallel hardware.

Over the past decades there have been a variety of scientific efforts to reduce or eliminate the synchronization bottleneck in Krylov subspace methods. The earliest papers on synchronization reduction date back to the late 1980's and 1990's \cite{strakovs1987effectivity,d1992reducing,demmel1993parallel,erhel1995parallel}. A notable reduction of the number of global synchronization points was introduced by the so-called $s$-step methods \cite{chronopoulos1989s,chronopoulos1996parallel,chronopoulos2010block,carson2013avoiding,carson2014residual,imberti2017varying}. Other scalable approaches to Krylov subspace methods include hierarchical \cite{mcinnes2014hierarchical}, enlarged \cite{grigori2016enlarged} and an iteration fusing \cite{zhuang2017iteration} Krylov subspace methods. In addition to avoiding communication, research on hiding global communication by overlapping communication with computations can be found in the literature \cite{demmel1993parallel,de1995reducing,ghysels2013hiding,yamazaki2017improving,sanan2016pipelined,eller2016scalable}. The current work is situated in the latter branch of research.

Introduced in 2014, the so-called ``pipelined'' CG method (p-CG) \cite{ghysels2014hiding} aims at hiding global synchronization latency by overlapping the communication phase in the Krylov subspace algorithm by the application of the \textsc{spmv}. 
Hence, idle core time is reduced by simultaneous execution of the time-consuming synchronization phase and independent compute-bound calculations. The reorganization of the algorithm introduces auxiliary variables, resulting in several additional \textsc{axpy} ($y \leftarrow \alpha x +y$) operations required to recursively compute updates for these variables. Since these are local operations, the extra recursions have no impact on the communication flow of the algorithm. However, they may influence numerical stability, as analyzed by the authors in \cite{cools2018analyzing}. Additionally, other error effects (e.g.\,delayed convergence due to loss of basis orthogonality in finite precision \cite{greenbaum1992predicting,gutknecht2000accuracy,strakovs2002error,gergelits2014composite,carson2016numerical}, hard faults and soft errors \cite{agullo2016hard}
, etc.) may affect the convergence of pipelined CG. 

The original pipelined CG method performs well on problem/hardware setups where the time to compute the \textsc{spmv} (plus preconditioner application, when applicable) roughly equals the time spent communicating in the global reduction phase, such that a good overlap can be achieved. In this optimal scenario the time to solution can be reduced by a factor $3\times$ compared to classic CG on sufficiently large numbers of processors, see \cite{ghysels2014hiding}, Table 1 and Section 5. However, in situations where the global reduction takes significantly longer than one \textsc{spmv}, the performance gain of p-CG over classic CG may be less pronounced. In heavily communication-bound scenarios, which are typically encountered when solving large-scale problems on HPC hardware, a \emph{deeper pipeline} would be required to overlap the global reduction phase with the computational work of multiple \textsc{spmv}s. 

The concept of deep pipelines was introduced by Ghysels et al.\,\cite{ghysels2013hiding} for the Generalized Minimal Residual (GMRES) method. In the current work we extend the notion of deep pipelining to the CG method. Assuming the symmetry of the system matrix $A$, we establish theoretical properties that allow to derive the algorithm starting from the pipelined Arnoldi process in the original p($l$)-GMRES algorithm \cite{ghysels2013hiding}. Subsequently, it is shown that the p($l$)-CG algorithm has several interesting properties compared to p($l$)-GMRES, including shorter recurrences for the basis vectors and the Hessenberg elements and significantly reduced storage requirements. We also indicate limitations of the p($l$)-CG method, in particular in comparison to the (length-one) pipelined CG method \cite{ghysels2014hiding} which was derived using a notably different framework to the one presented here, see also \cite{cools2017communication}.

Reorganizing a Krylov subspace algorithm into a communication reducing variant typically introduces issues with the numerical stability of the algorithm. In exact arithmetic the pipelined CG method produces a series of iterates identical to the classic CG method. However, in finite precision arithmetic their behavior can differ significantly as local rounding errors may decrease attainable accuracy and induce delayed convergence. The impact of round-off errors on numerical stability of classic CG has been extensively studied 
\cite{greenbaum1989behavior,greenbaum1992predicting,greenbaum1997estimating,gutknecht2000accuracy,strakovs2002error,strakovs2005error,meurant2006lanczos,gergelits2014composite}. Similar observations have been made for other classes of communication reducing methods, see e.g.~\cite{chronopoulos1989s,carson2014residual} for the influence of the $s$-step parameter on the numerical stability of communication avoiding methods, and \cite{cools2018analyzing,carson2016numerical} for an overview of the stability analysis of the 
pipelined Conjugate Gradient method proposed in \cite{ghysels2014hiding}.

The remainder of this work is structured as follows. In Section \ref{sec:deriv} we introduce the mathematical context and notations of this paper by revisiting the p($l$)-GMRES method.
We then show how 
the p($l$)-Arnoldi process simplifies in the case of a symmetric system matrix $A$ 
and derive the p($l$)-CG algorithm with pipelines of general length $l$. We also comment on a variant of the algorithm that includes preconditioning. Section \ref{sec:imple} gives an overview of some crucial implementation issues and corresponding solutions related to the p($l$)-CG algorithm. This section contributes to a better understanding of  key aspects of the method from a performance point of view. 
Section \ref{sec:analysis} analyzes the behavior of local rounding errors that stem from the multi-term recurrence relations in 
p($\ell$)-CG.
We characterize the propagation of local rounding errors throughout the algorithm and discuss the influence of the pipelined length $l$ and the choice of the auxiliary Krylov basis on the maximal accuracy attainable by p($l$)-CG. 
The numerical analysis is limited to the effect of local rounding errors on attainable accuracy; a discussion of the loss of orthogonality \cite{greenbaum1992predicting,gutknecht2000accuracy} and consequential delay of convergence in pipelined CG is beyond the scope of this work.
Numerical experiments validating the p($l$)-CG method are provided in Section \ref{sec:experiments}. These illustrate the attainable speed-up of deeper pipelines on a distributed multicore hardware setup, but also comment on the possibly reduced attainable accuracy when longer pipelines are used. 
The paper concludes by presenting a summary of the current work and a short discussion on future research directions in Section \ref{sec:conclusions}.

\section{From $l$-length pipelined GMRES to $l$-length pipelined CG} \label{sec:deriv}

We use the classic notation $V_k = [v_0,\ldots,v_{k-1}]$ for the orthonormal basis of the $k$-th Krylov subspace $\mathcal{K}_k(A,v_0)$. Here the index $k$ denotes the number of basis vectors $v_j$ in $V_k$, with indices $j$ ranging from $0$ up to $k-1$. The length of each basis vector $v_j$ is the column dimension of the system matrix $A$. 

\subsection{Brief recapitulation of p($l$)-GMRES}
Let $V_{i-l+1}:=[v_{0},v_{1},\ldots,v_{i-l}]$ be the orthonormal basis for the Krylov subspace $\mathcal{K}_{i-l+1}(A,v_{0})$. These vectors satisfy the Arnoldi relation $AV_j = V_{j+1} H_{j+1,j}$ for $1 \leq j \leq i-l$, where $H_{j+1,j}$ is the $(j+1) \times j$ upper Hessenberg matrix. This translates in vector notation to:
\begin{equation}\label{plbetr}
v_{j}=\frac{Av_{j-1} - \sum_{k=0}^{j-1} h_{k,j-1} v_{k}}{h_{j,j-1}}, \qquad 1 \leq j \leq i - l.
\end{equation}
We define the auxiliary vectors $Z_{i+1}:=[z_{0},z_{1},\ldots,z_{i-l},z_{i-l+1},\ldots,z_{i}]$ as 
\begin{equation} \label{eq:defz}
z_{j}:= \left\{ \begin{matrix} 
v_{0}, & j=0, \\ 
P_{j}(A)v_{0}, & 0<j\leq l, \\ 
P_{l}(A)v_{j-l}, & j>l, 
\end{matrix} \right. 
\qquad \text{with} \qquad P_{i}(t) := \prod_{j=0}^{i-1} (t-\sigma_{j}), \qquad  \text{for} ~ i\leq l,
\end{equation}
where the polynomials $P_{i}(t)$ are defined 
with shifts $\sigma_{j}\in \mathbb{R}$ that will be specified later. The basis $Z_{i+1}$ can alternatively be defined using three-term recurrences, where we refer to Remark \ref{remark:remark4} and \cite{ghysels2013hiding} (Section 4.3) for more details on choosing the basis. Note that for any $i \geq 0$ the bases $V_{i+1}$ and $Z_{i+1}$ span the same Krylov subspace. 
One has the following recurrence relations for successive $z_{j}$:
\begin{equation} \label{zid2}
z_{j} = \left\{ \begin{matrix}  
(A-\sigma_{j-1}I)z_{j-1}, & 0 < j \leq l, \\
(Az_{j-1} - \sum_{k=0}^{j-l-1} h_{k,j-l-1}z_{k+l})/h_{j-l,j-l-1}, &  l < j \leq i. 
\end{matrix} \right. 
\end{equation}
These recursive relations for the vectors $z_{j}$ can be summarized in the Arnoldi-type matrix identity
\begin{equation} \label{matid}
AZ_{i}=Z_{i+1}B_{i+1,i}, \qquad \text{with} \qquad 
B_{i+1,i}
= 
{\small
\left(\begin{array}{ccc|ccc} 
\sigma_{0} & & & & & \\
1 & \ddots & & & & \\
& \ddots & \sigma_{l-1} & & & \\ \hline
& & 1 & & & \\
& & & & H_{i-l+1,i-l} & \\
& & & & & 
\end{array}\right).
}
\end{equation}
\noindent


\begin{theorem}\label{zisvg}
\emph{[Ghysels et al.\,\cite{ghysels2013hiding}]} Suppose $k > l$ and let $V_{k}$ be an orthonormal basis for the $k$-th Krylov subspace $\mathcal{K}_{k}(A,v_{0})$. Let $Z_{k}$ be a set of vectors defined by $\eqref{eq:defz}$. Then the identity $Z_{k}=V_{k}G_{k}$ holds 
with $G_{k}$ an upper triangular $k\times k$ matrix. The entries of the last column of $G_{k}$, i.e.\,$g_{j,k-1} = (z_{k-1},v_{j})$ (with $j=0,\ldots,k-1$), can be computed using the elements of $G_{k-1}$ and the dot products $(z_{k-1},v_{j})$ that are available for $j\leq k-l-1$ and $(z_{k-1},z_{j})$ for $k-l-1 < j \leq k-1$: 
\begin{equation} \label{eq:gjk}
g_{j,k-1} = \frac{(z_{k-1},z_{j})-\sum_{m=0}^{j-1}g_{m,j}g_{m,k-1}}{g_{j,j}} \quad \text{and} \quad
g_{k-1,k-1}=\sqrt{(z_{k-1},z_{k-1})-\sum_{m=0}^{k-2}g_{m,k-1}^{2}}. 
\end{equation}
\end{theorem}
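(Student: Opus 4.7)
The plan is to split the proof into two parts: first, establishing the existence and upper triangularity of $G_k$ together with the identification $g_{j,k-1}=(z_{k-1},v_j)$; second, deriving the explicit recurrences \eqref{eq:gjk}. Since $V_k$ is an orthonormal basis of $\mathcal{K}_k(A,v_0)$, the claim $Z_k = V_k G_k$ with $G_k$ upper triangular is equivalent to showing that $z_j \in \mathcal{K}_{j+1}(A,v_0)=\operatorname{span}(v_0,\ldots,v_j)$ for each $j=0,\ldots,k-1$. I would verify this by inspecting the two cases of definition \eqref{eq:defz}. For $j \leq l$ the vector $z_j = P_j(A)v_0$ is a polynomial of degree $j$ in $A$ applied to $v_0$, so it lies in $\mathcal{K}_{j+1}(A,v_0)$. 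For $j > l$, $z_j = P_l(A)v_{j-l}$; since $v_{j-l}$ is itself a polynomial of degree at most $j-l$ in $A$ acting on $v_0$, the product $P_l(A)v_{j-l}$ corresponds to a polynomial of degree at most $j$ applied to $v_0$, so once again $z_j \in \mathcal{K}_{j+1}(A,v_0)$. The upper triangularity of $G_k$ follows, and orthonormality of $V_k$ immediately yields $g_{j,k-1}=(z_{k-1},v_j)$.

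For the recurrence formulas I would use the factorization just obtained to express inner products among the auxiliary vectors purely through entries of $G_k$. Expanding $z_{k-1}=\sum_{m=0}^{k-1} g_{m,k-1}v_m$ and $z_j=\sum_{m=0}^{j} g_{m,j}v_m$ (using triangularity) and invoking orthonormality of $V_k$ gives
\begin{equation*}
(z_{k-1},z_j) = \sum_{m=0}^{j} g_{m,j}\,g_{m,k-1}.
\end{equation*}
Isolating the diagonal term $g_{j,j}g_{j,k-1}$ and dividing by $g_{j,j}\neq 0$ yields the first identity in \eqref{eq:gjk}. Setting $j=k-1$ and taking the positive square root produces the diagonal formula for $g_{k-1,k-1}$ (positivity is guaranteed as long as $z_{k-1} \notin \operatorname{span}(v_0,\ldots,v_{k-2})$, which holds in generic breakdown-free situations).

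Finally, I would verify the availability statement that underlies the pipelined schedule. The sum in the first formula only references entries $g_{m,j}$ with column index $j \leq k-2$, which belong to $G_{k-1}$ and are therefore known by induction, together with entries $g_{m,k-1}$ with $m<j$ that are produced earlier in the sweep $j=0,1,\ldots,k-1$. The dot products $(z_{k-1},v_j)$ for $j\leq k-l-1$ coincide with $g_{j,k-1}$ directly, while for the remaining indices $k-l-1<j\leq k-1$ the vectors $v_j$ are not yet assembled at this stage of the pipeline but the $z_j$ are, which is exactly why the $(z_{k-1},z_j)$ variant of the recurrence is required. I do not anticipate a serious obstacle: essentially everything reduces to the polynomial structure of the $z_j$ and the orthonormality of $V_k$, and the only point that needs careful attention is the index bookkeeping ensuring that the two cases in \eqref{eq:gjk} together cover all $j$ without circular dependency.
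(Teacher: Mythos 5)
Your proof is correct. Note that the paper itself does not prove this theorem---it is imported verbatim from Ghysels et al.\,\cite{ghysels2013hiding}---but your argument is the standard one: the degree count on the polynomials $P_j(A)v_0$ and $P_l(A)v_{j-l}$ gives $z_j\in\mathcal{K}_{j+1}(A,v_0)$ and hence the upper triangularity of $G_k$, and expanding $(z_{k-1},z_j)=\sum_{m=0}^{j}g_{m,j}g_{m,k-1}$ via orthonormality of $V_k$ is exactly the column-wise Cholesky update of the Gram matrix $Z_k^TZ_k=G_k^TG_k$ that yields \eqref{eq:gjk}. Your closing remarks on which quantities are available at which pipeline stage, and on the nondegeneracy conditions $g_{j,j}\neq 0$ and $z_{k-1}\notin\operatorname{span}(v_0,\ldots,v_{k-2})$, correctly identify the only delicate points (the latter being precisely the square-root breakdown discussed in Remark \ref{remark:sqrt_breakdown}).
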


\noindent Given the set of vectors $Z_{i-l+2}$, the basis $V_{i-l+1}$ can 
be extended to $V_{i-l+2}$ by applying Theorem \ref{zisvg} with $k=i-l+2$. As soon as the dot products $(z_{i-l+1},v_{j})$ for $0 \leq j \leq i-2l+1$ and $(z_{i-l+1},z_{j})$ for $i-2l+1< j \leq i-l+1$ are calculated, the vector $v_{i-l+1}$ can be computed recursively as
\begin{equation} \label{eq:vil1}
v_{i-l+1} = \frac{z_{i-l+1}-\sum_{j=0}^{i-l}g_{j,i-l+1}v_{j}}{g_{i-l+1,i-l+1}},
\end{equation}
i.e.~using the identity $Z_{i-l+2}=V_{i-l+2}G_{i-l+2}$.
For $k > 0$, the Hessenberg matrix $H_{k+1,k}$ can be computed from the matrices $H_{k,k-1}$, $G_{k+1}$ and $B_{k+1,k}$ in a column-wise fashion as follows.

\begin{theorem} \label{hessenberg}
\emph{[Ghysels et al.\,\cite{ghysels2013hiding}]} Assume $k > 0$. Let $G_{k+1}$ be the upper triangular basis transformation matrix for which $Z_{k+1} = V_{k+1} G_{k+1}$ and let $B_{k+1,k}$ be the upper Hessenberg matrix that connects the vectors in $Z_k$ via $AZ_{k}=Z_{k+1}B_{k+1,k}$. Then the Hessenberg matrix for the basis $V_{k+1}$ can be constructed column by column as
\begin{equation} \label{hessup}
H_{k+1,k} = \begin{bmatrix} H_{k,k-1} & (G_{k}b_{:,k-1}+g_{:,k}b_{k,k-1}-H_{k,k-1}g_{:,k-1})g^{-1}_{k-1,k-1} \\ 0 & g_{k,k}b_{k,k-1}g^{-1}_{k-1,k-1}\end{bmatrix}.
\end{equation}
\end{theorem}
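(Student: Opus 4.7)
The plan is to collapse the three identities in play---the Arnoldi relation $AV_k=V_{k+1}H_{k+1,k}$, the auxiliary recurrence $AZ_k=Z_{k+1}B_{k+1,k}$ of \eqref{matid}, and the basis transformations $Z_k=V_kG_k$, $Z_{k+1}=V_{k+1}G_{k+1}$ supplied by Theorem~\ref{zisvg}---into a single compact matrix equation. Writing $AZ_k=AV_kG_k$ in two ways and cancelling $V_{k+1}$ from the left (which is legitimate because its columns are orthonormal) yields
\[
H_{k+1,k}\,G_k \;=\; G_{k+1}\,B_{k+1,k}.
\]
The entire derivation of \eqref{hessup} then reduces to reading off the appropriate column of this identity.

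Next I would argue by induction on $k$. Assume $H_{k,k-1}$ is already known and sits, with a zero row appended, in the first $k-1$ columns of $H_{k+1,k}$, so only column $k-1$ is unknown. I would then extract column $k-1$ of both sides. On the left, since $G_k$ is upper triangular, its last column couples with the $k-1$ already known columns of $H_{k+1,k}$---contributing a term of the form $H_{k,k-1}\,g_{:,k-1}$---plus the diagonal coefficient $g_{k-1,k-1}$ multiplying the unknown new column. On the right, I would split the last column of $B_{k+1,k}$ into its top $k$ entries and its bottom entry $b_{k,k-1}$: the first part produces $G_k\,b_{:,k-1}$, and the second contributes $b_{k,k-1}\,g_{:,k}$, which is the only column of $G_{k+1}$ with a nonzero entry in row $k$, by upper triangularity. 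Separating the resulting $(k+1)$-dimensional vector identity into its top $k$ rows and its final row, and dividing through by $g_{k-1,k-1}$, produces precisely the vector expression for the top block of the new column of $H_{k+1,k}$ and the scalar $g_{k,k}b_{k,k-1}/g_{k-1,k-1}$ for its bottom entry, which is \eqref{hessup}.

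The main obstacle is purely bookkeeping rather than anything conceptual. The symbols $g_{:,k-1}$, $g_{:,k}$ and $b_{:,k-1}$ appearing in \eqref{hessup} denote different truncations of the same columns depending on which of $H_{k,k-1}$, $G_k$ or $G_{k+1}$ multiplies them, so I would fix notation once at the start (naming the $(k-1)$-, $k$- and $(k+1)$-truncations explicitly) and then check that the split of the column equation into its first $k$ rows and its final row is consistent with the zero-padding pattern of $H_{k+1,k}$, $G_{k+1}$ and $B_{k+1,k}$ that is implicit throughout the induction.
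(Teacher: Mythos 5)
Your proposal is correct and follows essentially the same route as the paper: the paper's proof consists precisely of deriving $H_{k+1,k}=G_{k+1}B_{k+1,k}G_k^{-1}$ from the three identities $AV_k=V_{k+1}H_{k+1,k}$, $Z_{k+1}=V_{k+1}G_{k+1}$ and $AZ_k=Z_{k+1}B_{k+1,k}$, and then reading off the last column. Your version merely writes the identity in the multiplied-through form $H_{k+1,k}G_k=G_{k+1}B_{k+1,k}$ and carries out the column extraction and triangular bookkeeping explicitly, which the paper leaves implicit.
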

\begin{proof}
This follows directly from $H_{k+1,k} = G_{k+1}B_{k+1,k}G^{-1}_{k}$, which is derived using the identity $AV_{k} = V_{k+1}H_{k+1,k}$, the transformation $Z_{k+1} = V_{k+1} G_{k+1}$ and the relation $AZ_{k}=Z_{k+1}B_{k+1,k}$.
\end{proof}

\begin{algorithm}[t]
{\small
\caption{Pipelined GMRES method ($p(l)-$GMRES) \hfill \textbf{Input:} $A$, $b$, $x_0$, $l$, $m$}\label{algo:plGMRES}
\begin{algorithmic}[1]
\State $r_{0}:=b-Ax_{0};$ 
\State $v_{0}:= r_{0}/\|r_{0}\|_{2};$
\State $z_{0}:=v_{0}; ~ g_{0,0}:=1;$
\For {$i=0,\ldots,m+l$}
\State $z_{i+1}:=\left\{ \begin{matrix}(A-\sigma_{i}I)z_{i}, & i<l \\ Az_{i}, & i \geq l \end{matrix}\right.$
\If {$i\geq l$}
\State $g_{j,i-l+1} := (g_{j,i-l+1}-\sum_{k=0}^{j-1}g_{k,j}g_{k,i-l+1})/g_{j,j}; \qquad j=i-2l+2,\ldots,i-l$  
\State $g_{i-l+1,i-l+1} :=\sqrt{g_{i-l+1,i-l+1}-\sum_{k=0}^{i-l}g_{k,i-l+1}^2};$
\State \# Check for breakdown and restart if required
\If {$i<2l$}
\State $h_{j,i-l}:=(g_{j,i-l+1}+\sigma_{i-l}g_{j,i-l}-\sum_{k=0}^{i-l-1}h_{j,k}g_{k,i-l})/g_{i-l,i-l}; \qquad j=0,\ldots,i-l$
\State $h_{i-l+1,i-l}:=g_{i-l+1,i-l+1}/g_{i-l,i-l}$;
\Else
\State $h_{j,i-l}:=(\sum_{k=0}^{i-2l+1}g_{j,k+l}h_{k,i-2l}-\sum_{k=j-1}^{i-l-1}h_{j,k}g_{k,i-l})/g_{i-l,i-l}; \qquad j=0,\ldots,i-l$
\State $h_{i-l+1,i-l}:=(g_{i-l+1,i-l+1}h_{i-2l+1,i-2l})/g_{i-l,i-l};$
\EndIf
\State \textbf{end if}
\State $v_{i-l+1} := (z_{i-l+1} - \sum_{j=0}^{i-l} g_{j,i-l+1} v_{j})/g_{i-l+1,i-l+1};$
\State $z_{i+1} := (z_{i+1} - \sum_{j=0}^{i-l} h_{j,i-l} z_{j+l})/h_{i-l+1,i-l};$
\EndIf
\State \textbf{end if}
\State $g_{j,i+1}:=\left\{ \begin{matrix}(z_{i+1},v_{j}); & j=0,\ldots,i-l+1 \\ (z_{i+1},z_{j});  &j=i-l+2,\ldots,i+1 \end{matrix}\right.$
\EndFor 
\State \textbf{end for}
\State $y_{m}:= \text{argmin} \|H_{m+1,m}y_{m}-\norm{r_{0}}_{2}e_{1}\|_{2};$
\State $x_{m} := x_{0}+V_{m}y_{m};$
\end{algorithmic}
}
\end{algorithm}

\noindent Once the Hessenberg matrix $H_{i-l+2,i-l+1}$ has been computed, the basis $Z_{i+1}$ can be extended to $Z_{i+2}$ by adding the vector $z_{i+1}$ that can be computed using the expression \eqref{zid2}. The above considerations lead to the p($l$)-GMRES algorithm shown in Alg.\,\ref{algo:plGMRES}.

\begin{remark} \textbf{Krylov basis choice.} \label{remark:remark4}
The choice of appropriate shift values $\sigma_{i}$ is vital to ensure numerical stability for the p($l$)-GMRES algorithm. The monomial basis $\left[v_{0},Av_{0},\ldots, A^{k} v_{0} \right]$ for $\mathcal{K}_{k+1}(v_{0},A)$ may become ill-conditioned very quickly, which can be resolved by choosing an alternative basis of the form $\left[v_{0},P_{1}(A)v_{0},\ldots, P_{k}(A) v_{0} \right]$ such as the Newton basis $P_{i}(A)=\prod_{j=0}^{i-1}(A-\sigma_{j}I)$. Different shift choices $\sigma_{i}$ are possible, e.g.\,the Ritz values of $A$ or the zeros of the degree-$l$ Chebyshev polynomial. If all eigenvalues are located in $[\lambda_{min},\lambda_{max}]$ (excl.~zero), the Chebyshev shifts are 
\begin{equation} \label{eq:cheb_shifts}
  \sigma_{i} = \frac{\lambda_{\max}+\lambda_{\min}}{2}+\frac{\lambda_{\max}-\lambda_{\min}}{2} \cos\left(\frac{(2i+1)\pi}{2l}\right), \qquad i=0,\ldots,l-1.
\end{equation}
More information 
can be found in the works by Hoemmen \cite{hoemmen2010communication} and 
Ghysels et al.\cite{ghysels2013hiding}.
\end{remark}

\subsection{Deriving p($l$)-CG from p($l$)-GMRES}

We now derive the p($l$)-CG algorithm starting from the Arnoldi procedure in p($l$)-GMRES, Alg.\,\ref{algo:plGMRES}, based on arguments that are similar to the ones used in the classical derivation of CG from GMRES, see e.g.\,\cite{saad2003iterative,van2003iterative,liesen2012krylov}.

\subsubsection{Exploiting the symmetry: the Hessenberg matrix} \label{sec:hess}

Application of p($l$)-GMRES to a symmetric matrix $A$ induces tridiagonalization of the Hessenberg matrix $H_{i-l+2,i-l+1}$.
Thus only three Hessenberg elements need to be computed in each iteration $i$ in lines 11/12 and 14/15 of Alg.\,\ref{algo:plGMRES}, namely $h_{i-l-1,i-l}$, $h_{i-l,i-l}$ and $h_{i-l+1,i-l}$. Due to symmetry $h_{i-l-1,i-l}$ equals $h_{i-l,i-l-1}$, which was already computed in iteration $i-1$. Furthermore, the ranges of the sums in the right-hand side of the expressions for $h_{i-l,i-l}$ and $h_{i-l+1,i-l}$, see Alg.\,\ref{algo:plGMRES} line 11-15, 
are reduced significantly.

\begin{corollary} \label{corollary}
Let $A$ be a symmetric matrix, let $k > 0$ and let the matrices $V_{k+1}$, $Z_{k+1}$, $G_{k+1}$, $B_{k+1,k}$ and $H_{k,k-1}$ as defined in Theorem \ref{hessenberg} be available. Then the tridiagonal Hessenberg matrix $H_{k+1,k}$ for the basis $V_{k+1}$ can be constructed from $H_{k,k-1}$ using the following expressions:
\vspace{-0.1cm}
\begin{equation} \label{eq:hkk1}
h_{k-1,k-1} =  
\left\{
  \begin{aligned} 
    &\sigma_{k-1}+(g_{k-1,k}-g_{k-2,k-1}h_{k-1,k-2})/g_{k-1,k-1}, & k \leq l, \\ 
    &h_{k-l-1,k-l-1}+(g_{k-1,k}h_{k-l,k-l-1} 
		-g_{k-2,k-1}h_{k-1,k-2})/g_{k-1,k-1}, & k > l,
  \end{aligned} 
	\right.
\end{equation} \vspace{-0.3cm}
\begin{equation} \label{eq:hkk2}
h_{k,k-1} = \left\{ 
  \begin{matrix} 
    g_{k,k}/g_{k-1,k-1}, & k \leq l, \\ 
    (g_{k,k}h_{k-l,k-l-1})/g_{k-1,k-1}, & k > l. 
  \end{matrix} 
	\right.
\end{equation}
Note that for the cases $k = 1$ and $k = l+1$ the values $h_{0,-1}$ and $h_{-1,0}$ 
should be considered zero.
\end{corollary}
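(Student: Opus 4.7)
The plan is to specialize Theorem \ref{hessenberg} to the symmetric case and exploit both the tridiagonality of $H$ (forced by the symmetry of $A$) and the upper triangular shape of $G$.

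First I would pin down the nonzero entries of column $k-1$ of $B_{k+1,k}$. From the block structure displayed in (4) together with the recurrences (3), this column contains at most two nonzero entries: for $k \leq l$ these are $b_{k-1,k-1} = \sigma_{k-1}$ and $b_{k,k-1} = 1$, coming from the shifted-monomial block, while for $k > l$ they lie in the embedded Hessenberg block, giving $b_{k-1,k-1} = h_{k-l-1,k-l-1}$ and $b_{k,k-1} = h_{k-l,k-l-1}$. This case split is exactly what produces the two branches in (10) and (11).

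Next I would substitute into the update formula for the new column of $H_{k+1,k}$ from Theorem \ref{hessenberg}. The subdiagonal entry $h_{k,k-1} = g_{k,k}\, b_{k,k-1} / g_{k-1,k-1}$ is immediate and reproduces (11) after substituting the two values of $b_{k,k-1}$. For the diagonal entry $h_{k-1,k-1}$ I would invoke two structural facts. Upper triangularity of $G_{k+1}$ (Theorem \ref{zisvg}) forces $g_{k-1,i} = 0$ for $i < k-1$, collapsing $(G_{k+1}b_{:,k-1})_{k-1}$ to the two-term expression $g_{k-1,k-1} b_{k-1,k-1} + g_{k-1,k} b_{k,k-1}$. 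Tridiagonality of the already constructed block $H_{k,k-1}$ (which follows from the symmetry of $A$ in the orthonormal basis $V_{k+1}$) forces $h_{k-1,i} = 0$ for $i < k-2$, collapsing $(H_{k,k-1}\, g_{:,k-1})_{k-1}$ to the single term $h_{k-1,k-2}\, g_{k-2,k-1}$. Combining these and dividing by $g_{k-1,k-1}$ yields
\begin{equation*}
h_{k-1,k-1} = b_{k-1,k-1} + \frac{g_{k-1,k}\, b_{k,k-1} - g_{k-2,k-1}\, h_{k-1,k-2}}{g_{k-1,k-1}},
\end{equation*}
which on substituting the appropriate $b$-values from Step 1 yields the two branches of (10), with boundary terms $h_{0,-1}$, $h_{-1,0}$ set to zero by convention.

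The routine calculations are straightforward; the main obstacle is the careful bookkeeping of indices, particularly identifying the correct entries of $B_{k+1,k}$ at the transition $k=l+1$ and making sure that neither upper triangularity of $G$ nor tridiagonality of $H$ silently discards a relevant contribution when the matrix-vector products are collapsed.
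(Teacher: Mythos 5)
Your proof is correct and follows essentially the same route as the paper, which likewise obtains \eqref{eq:hkk1}--\eqref{eq:hkk2} by specializing the column-update formula of Theorem \ref{hessenberg} using the tridiagonality of $H_{k+1,k}$ and the upper triangularity of $G_{k+1}$; your write-up simply makes the index bookkeeping explicit. One small inaccuracy: for $k>l+1$ the column $b_{:,k-1}$ has a \emph{third} nonzero entry, $b_{k-2,k-1}=h_{k-l-2,k-l-1}$, coming from the superdiagonal of the embedded tridiagonal block, but this is harmless since that entry is annihilated by $g_{k-1,k-2}=0$ in the $(k-1)$-st component of $G_{k}b_{:,k-1}$ and plays no role in $h_{k,k-1}$.
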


\begin{proof}
It suffices to compute the entries $h_{k-1,k-1}$ and $h_{k,k-1}$ of $H_{k+1,k}$, since $h_{k-2,k-1} = h_{k-1,k-2}$ due to symmetry and $h_{j,k-1} = 0$ for $0 \leq j \leq k-3$ since $H_{k+1,k}$ is tridiagonal. The expressions \eqref{eq:hkk1} and \eqref{eq:hkk2} are derived directly from Theorem \ref{hessenberg} using the fact that $H_{k+1,k}$ is tridiagonal. 
\end{proof}

\noindent Applying the above theorem with $k = i-l+1$ allows us to compute the Hessenberg matrix $H_{i-l+2,i-l+1}$ in iteration $i$. 
The element $h_{i-l,i-l}$ in $H_{i-l+2,i-l+1}$ is characterized by expression \eqref{eq:hkk1}:
\begin{equation} \label{eq:haa1}
h_{i-l,i-l} =  
\left\{
  \begin{aligned} 
    &(g_{i-l,i-l+1}+\sigma_{i-l}g_{i-l,i-l}-g_{i-l-1,i-l}h_{i-l,i-l-1})/g_{i-l,i-l}, & \hspace{-1cm} l \leq i < 2l, \\ 
    &(g_{i-l,i-l}h_{i-2l,i-2l}+g_{i-l,i-l+1}h_{i-2l+1,i-2l}-g_{i-l-1,i-l}h_{i-l,i-l-1})/g_{i-l,i-l}, & i \geq 2l.
  \end{aligned} 
	\right.
\end{equation}
Note that for $i = l$ the term $-g_{i-l-1,i-l}h_{i-l,i-l-1}/g_{i-l,i-l}$ drops, while for $i = 2l$ the term $g_{i-l,i-l-1}h_{i-2l-1,i-2l}/g_{i-l,i-l}$ should be omitted.
The update for $h_{i-l+1,i-l}$ follows from \eqref{eq:hkk2} by setting $k = i-l+1$, i.e.:
\begin{equation} \label{eq:haa3}
h_{i-l+1,i-l} = \left\{ 
  \begin{matrix} 
    g_{i-l+1,i-l+1}/g_{i-l,i-l}, & l \leq i < 2l, \\ 
    (g_{i-l+1,i-l+1}h_{i-2l+1,i-2l})/g_{i-l,i-l}, & i \geq 2l,
  \end{matrix} 
	\right.
\end{equation}
which is identical to the expressions on lines 12 and 15 in Alg.\,\ref{algo:plGMRES}.
In addition, the recurrence for the auxiliary basis vector $z_{i+1}$, given by \eqref{zid2} for the p($l$)-GMRES method, can due to the symmetry of $A$ be simplified to
a three-term recurrence relation:
\begin{equation} \label{zid3}
z_{i+1} = (A z_{i} - h_{i-l,i-l} z_{i}- h_{i-l-1,i-l} z_{i-1})/h_{i-l+1,i-l}.
\end{equation}

\subsubsection{The band structure of $G_{i+1}$} \label{sec:band}

The symmetry of the matrix $A$ induces a particular band structure and symmetry for the upper triangular basis transformation matrix $G_{i+1}$. 

\begin{lemma}\label{lemma:lemmaband}
Let $A$ be a symmetric matrix, assume $i \geq l$ and let $V_{i+1} = [v_0,v_1,\ldots,v_{i}]$ be the orthogonal basis for $\mathcal{K}_{i+1}(A,v_0)$. Let $Z_{i+1} = [z_0, z_1, \ldots, z_i]$ be the auxiliary basis with 
$z_{i} = P_{l}(A)v_{i-l}$ as defined by \eqref{eq:defz}, then for all $j \notin \{i-2l,i-2l+1,\ldots,i-1,i\}$ it holds that $g_{j,i} = (z_{i},v_j) =  0$.
\end{lemma}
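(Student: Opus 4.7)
The plan is to exploit symmetry together with the polynomial definition $z_i = P_l(A) v_{i-l}$ (valid once $i \geq l$, by the third branch of \eqref{eq:defz}) to show that $z_i$ lies in a short window of the orthonormal basis $V_{i+1}$. Since the matrix $G_{i+1}$ is already known to be upper triangular by Theorem~\ref{zisvg}, the only non-trivial indices are $0 \leq j \leq i$; the entries with $j > i$ vanish automatically and do not need to be treated separately.

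First I would rewrite the inner product by moving $P_l(A)$ across. Because $A$ is symmetric, $P_l(A)$ is also symmetric, so
\begin{equation*}
g_{j,i} = (z_i, v_j) = (P_l(A) v_{i-l}, v_j) = (v_{i-l}, P_l(A) v_j).
\end{equation*}
Next I would use the fact that $v_j \in \mathcal{K}_{j+1}(A, v_0)$, so $P_l(A) v_j \in \mathcal{K}_{j+l+1}(A, v_0) = \operatorname{span}\{v_0, v_1, \ldots, v_{j+l}\}$. If $j < i - 2l$, i.e.\,$j + l \leq i - l - 1$, then $P_l(A) v_j \in \operatorname{span}\{v_0, \ldots, v_{i-l-1}\}$, which is orthogonal to $v_{i-l}$ by the orthonormality of the Arnoldi basis $V_{i+1}$. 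Consequently $g_{j,i} = 0$ for every such $j$, which is exactly the content of the lemma.

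There is no real obstacle here: the argument hinges on two facts that are either assumed (symmetry of $A$) or established earlier (orthonormality of $V_{i+1}$, and the polynomial characterization of $z_i$). The only point that warrants a brief sanity check is the index range $i \geq l$, which guarantees that the formula $z_i = P_l(A) v_{i-l}$ from \eqref{eq:defz} applies and that $i - l \geq 0$ so that $v_{i-l}$ is a basis vector; this is exactly the hypothesis of the lemma.
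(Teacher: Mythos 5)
Your proof is correct and follows essentially the same route as the paper: both hinge on moving the symmetric operator $P_l(A)$ across the inner product to get $(z_i,v_j)=(v_{i-l},P_l(A)v_j)$ and then observing that $P_l(A)v_j$ lies in $\operatorname{span}\{v_0,\ldots,v_{j+l}\}$, which is orthogonal to $v_{i-l}$ when $j<i-2l$. The paper merely phrases this last step through the identity $(v_{i-l},P_l(A)v_j)=(v_{i-l},z_{j+l})=g_{i-l,j+l}$ and the upper triangularity of $G$ from Theorem~\ref{zisvg}, which encodes exactly the Krylov-subspace containment you invoke directly.
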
  

\begin{proof}
For symmetric $A$ the matrix $G_{i+1}$ is symmetric around its $l$-th upper diagonal, since
\begin{equation} \label{eq:symmetry_G}
	g_{j,i} = (z_{i},v_{j}) = (P_{l}(A)v_{i-l},v_{j}) =  (v_{i-l},P_{l}(A)v_{j})= (v_{i-l},z_{j+l}) = g_{i-l,j+l}.
\end{equation}
As $G_{i+1}$ is an upper triangular matrix, only the elements $g_{i-2l,i}, g_{i-2l+1,i},\ldots, g_{i-1,i}, g_{i,i}$ 
differ from zero and $G_{i+1}$ thus has a band structure with a band width of at most $2l+1$ non-zeros.
\end{proof}

\noindent For completeness we note that the alternative characterization of the matrix $G_{i+1}$ (as a function of the matrix $T_{i+1}$) derived in Appendix \ref{sec:lemma} may serve as an equivalent proof of Lemma \ref{lemma:lemmaband}.

The band structure of the matrix $G_{i+1}$ further simplifies the algorithm. 
In the symmetric case the expression for $g_{j,i-l+1}$ (see Alg.\,\ref{algo:plGMRES}, line 7) that is derived from expression \eqref{eq:gjk} reads:
\begin{equation}
g_{j,i-l+1} = \frac{g_{j,i-l+1}-\sum_{k=i-3l+1}^{j-1}g_{k,j}g_{k,i-l+1}}{g_{j,j}}, \qquad j=i-2l+2, \ldots i-l,
\end{equation}
where the sum includes only the non-zero elements $g_{k,i-l+1}$ for which $i-3l+1 \leq k \leq i-l+1$.
The computation of $g_{i-l+1,i-l+1}$ (Alg.\,\ref{algo:plGMRES}, line 8), see \eqref{eq:gjk}, is treated in a similar way, yielding:
\begin{equation}
g_{i-l+1,i-l+1} =\sqrt{g_{i-l+1,i-l+1}-\sum_{k=i-3l+1}^{i-l}g_{k,i-l+1}^2}.
\end{equation}
Furthermore, by exploiting the band structure of $G_{i+1}$ for symmetric matrices $A$, the recurrence for $v_{i-l+1}$ (Alg.\,\ref{algo:plGMRES}, line 17), given in general by \eqref{eq:vil1}, is rewritten as:
\begin{equation} \label{eq:recur_v}
v_{i-l+1} = \frac{z_{i-l+1} - \sum_{j=i-3l+1}^{i-l} g_{j,i-l+1} v_{j}}{g_{i-l+1,i-l+1}},
\end{equation}
such that the recurrence for $v_{i-l+1}$ is only based on the $2l$ previous basis vectors $v_{i-3l+1},\ldots,v_{i-l}$.

Finally, since it follows from Lemma \ref{lemma:lemmaband} that many entries of $G_{i+1}$ effectively equal zero, 
only the dot products $(z_{i+1},z_j)$ and $(z_{i+1},v_j)$ (see Alg.\,\ref{algo:plGMRES}, line 20) for which $j=\max(0,i-2l+1), \ldots, i+1$ 
need to be computed in iteration $i$ of the algorithm.
We obtain
\begin{equation} \label{eq:gdotpr2}
g_{j,i+1}=\left\{ 
  \begin{matrix}
	  (z_{i+1},v_{j}), & j=\max(0,i-2l+1),\ldots,i-l+1, & \text{(only evaluated if $i \geq l-1$)} \\ 
	  (z_{i+1},z_{j}), & j=i-l+2,\ldots,i+1.
	\end{matrix}
\right. 
\end{equation}

\subsubsection{Towards p($l$)-CG} \label{sec:towards}

To compute the solution $x_m$ after the p($l$)-GMRES iteration has finished and the Krylov subspace basis $V_m$ has been constructed, p($l$)-GMRES minimizes the Euclidean norm of the residual ${\|b-Ax_{m}\|}_{2}$ over the Krylov subspace $\mathcal{K}_{m}(A,r_{0})$ as follows:
\begin{equation} \label{eq:minimi}
  \min_{x_0 + V_m y_m \in \mathbb{R}^n} {\| b-Ax_m \|}_2 = \min_{y_m \in \mathbb{R}^m} {\| r_0 - V_{m+1} H_{m+1,m} y_m\|}_2 = \min_{y_m \in \mathbb{R}^m} {\|{\|r_0\|}_2 e_1 - H_{m+1,m} y_m\|}_2,
\end{equation}
where $y_m$ is a column vector of length $m$.
This leads to a least squares 
minimization problem 
with an $(m+1)\times m$ Hessenberg matrix, see Alg.\,\ref{algo:plGMRES}, line 22. The construction of the solution $x_m$  requires the \emph{entire} Krylov subspace basis $V_m$ in p($l$)-GMRES, which gives rise to a high storage overhead.


In contrast, in the Conjugate Gradient method the Ritz-Galerkin condition $r_{m} = b-Ax_m \, \bot \, V_{m}$ together with the Lanczos relation $AV_{m} = V_{m+1}T_{m+1,m}$ are used to find an expression for the approximate solution over the Krylov subspace $x_{0} + K_{m}(A,r_{0})$. More specifically, this implies:
\begin{equation} \label{eq:lanczos}
  0 = V_m^T r_m  = V_m^T (r_0 - A V_m y_m) = V^T_m \left(V_m\left({\|r_0\|}_2e_1 - T_m y_m\right)\right) = {\|r_0\|}_2e_1 - T_m y_m,
\end{equation}
resulting in $y_m = T_{m}^{-1} {\|r_0\|}_2 e_1$ with a symmetric tridiagonal matrix $T_m$. 

\begin{remark} \textbf{Relation to FOM and MINRES.}
By modifying the p($l$)-GMRES algorithm by changing line $22$ in Alg.\,\ref{algo:plGMRES} to $y_{m} = H^{-1}_{m,m}{\|r_{0}\|}_{2}e_{1}$ one immediately obtains a deep pipelined version of the so-called \textit{Full Orthogonalization Method} (FOM), cf.\,\cite{saad2003iterative}. In this section the p($l$)-CG method is derived as a symmetric variant of the FOM algorithm. As a side-note
we also remark that by only exploiting the symmetry of the matrix but constructing the solution using the minimization procedure \eqref{eq:minimi} like in Alg.\,\ref{algo:plGMRES} one could derive a pipelined version of the Minimal Residual method (MINRES) \cite{paige1975solution,saad2003iterative}, which can be applied to symmetric and indefinite systems. However, we focus on deriving a deep pipelined variant of the more widely used CG method in this work.
\end{remark}

\noindent For notational convenience the elements of the matrix $T_{i-l+2,i-l+1}$ 
are renamed in the symmetric setting. Denote $\gamma_{k} := h_{k,k}$ and $\delta_{k} := h_{k+1,k}$ for any $0 \leq k \leq i-l$. 
Then $T_{i-l+2,i-l+1}$ is completely defined by the two arrays ${(\gamma_{k})}_{k=0,\ldots,i-l}$ and ${(\delta_{k})}_{k=0,\ldots,i-l}$.
The square 
symmetric tridiagonal matrix $T_{i-l+1}$ is obtained by omitting the last row of the Hessenberg matrix $T_{i-l+2,i-l+1}$.
Assume that the LU-factorization of the tridiagonal matrix $T_{i-l+1} = L_{i-l+1} U_{i-l+1}$ is given by
\begin{equation} \label{eq:LU} 
\begin{pmatrix} 
\gamma_{0} & \delta_{0} & & & \\ 
\delta_{0} & \gamma_{1} & \delta_{1} & & \\ 
& \delta_{1} & \gamma_{2} & \ddots & \\ 
& & \ddots & \ddots & \delta_{i-l-1} \\ 
& & & \delta_{i-l-1} & \gamma_{i-l}  
\end{pmatrix} 
= 
\begin{pmatrix} 
1 & & & & \\ 
\lambda_{1} & 1 & & &  \\ 
& \lambda_{2} & 1 & & \\ 
& & \ddots & \ddots &  \\ 
& & & \lambda_{i-l} & 1  
\end{pmatrix}
\begin{pmatrix} 
\eta_{0} & \delta_{0} & & & \\ 
& \eta_{1} & \delta_{1} & & \\ 
& & \eta_{2} & \ddots & \\ 
& & \ & \ddots & \delta_{i-l-1} \\ 
& & & & \eta_{i-l} 
\end{pmatrix}.
\end{equation}
Following the procedure outlined in \cite{van2003iterative,liesen2012krylov}, and notably the derivation of D-Lanczos in \cite{saad2003iterative}, Sec.~6.7.1 (see also Remark \ref{remark:remark7}), we now replace the minimization procedure \eqref{eq:minimi} (Alg.\,\ref{algo:plGMRES}, line 22) by an iterative update of the solution $x_{i-l+1}$ based on a search direction $p_{i-l}$ as defined below. 
Note that $\gamma_{0}=\eta_{0}$ and it follows from \eqref{eq:LU} that $\delta_{k-1}= \lambda_{k}\eta_{k-1}$ and that $\gamma_k = \lambda_{k}\beta_{k}+\eta_{k}$, or equivalently 
\begin{equation}
\lambda_{k}=\delta_{k-1}/\eta_{k-1} \qquad \text{and} \qquad \eta_{k}= \gamma_{k}-\lambda_{k}\delta_{k-1}, \qquad 1 \leq k \leq i-l. 
\end{equation}
These expressions allow to compute the elements of the lower/upper triangular matrices $L_{i-l+1}$ and $U_{i-l+1}$. 
From \eqref{eq:lanczos} it follows that the approximate solution $x_{i-l+1}$ is given by 
\begin{equation} \label{eq:x_a}
x_{i-l+1} = x_0 + V_{i-l+1} y_{i-l+1} =  x_{0} + V_{i-l+1} U_{i-l+1}^{-1}L_{i-l+1}^{-1}\norm{r_{0}}_{2}e_{1} = x_{0}+P_{i-l+1} q_{i-l+1},
\end{equation}
where the search directions are defined as $P_{i-l+1} := V_{i-l+1} U_{i-l+1}^{-1}$ and $q_{i-l+1} := L_{i-l+1}^{-1}\norm{r_{0}}_{2}e_{1}$. Note that $p_0 = v_0/\eta_0$. The columns $p_k$ (for $1 \leq k \leq i-l$) of $P_{i-l+1}$ can easily be computed recursively. Indeed, since $P_{i-l+1} U_{i-l+1} = V_{i-l+1}$, it follows that $v_k = \delta_{k-1} p_{k-1}+\eta_k p_k$ for any $1 \leq k \leq i-l$, yielding the recurrence for the search directions $p_k$:
\begin{equation}\label{eq:pdir}
p_k = \eta_k^{-1}(v_k-\delta_{k-1} p_{k-1}), \qquad 1 \leq k \leq i-l.
\end{equation}
Denoting the elements of the vector $q_{i-l+1}$ by $\left[\zeta_0,\ldots,\zeta_{i-l}\right]^T$, it follows from $L_{i-l+1} q_{i-l+1}=\norm{r_{0}}_{2}e_{1}$ that $\zeta_0 = {\|r_0\|}_2$ and $\lambda_k\zeta_{k-1}+\zeta_{k}=0$ for $1\leq k \leq i-l$. Hence, the scalar $\zeta_k$ is computed in each iteration using the recursion:
\begin{equation}
\zeta_{k} = -\lambda_{k}\zeta_{k-1}, \qquad 1 \leq k \leq i-l.
\end{equation}
Using the search direction $p_{i-l}$ and the scalar $\zeta_{i-l}$, which are both updated recursively in each iteration, the approximate solution $x_{i-l+1}$ is updated using the recurrence relation: 
\begin{equation} \label{eq:x_a2}
x_{i-l+1} =x_{i-l} + \zeta_{i-l}p_{i-l}.
\end{equation}
By merging this recursive update for the solution with the simplifications suggested in Sections \ref{sec:hess} and \ref{sec:band}, we obtain a new iterative scheme which we will denote as $l$-length pipelined CG, or p($l$)-CG for short. The corresponding algorithm is shown in Alg.\,\ref{algo:plCG}.

\begin{algorithm}[t]
{\small
\caption{Pipelined Conjugate Gradient method (p($l$)-CG) \hfill \textbf{Input:} $A$, $b$, $x_0$, $l$, $m$}\label{algo:plCG}
\begin{algorithmic}[1]
\State $r_{0}:=b-Ax_{0};$ 
\State $v_{0}:= r_{0}/{\|r_{0}\|}_2;$
\State $z_{0}:=v_{0}; ~ g_{0,0}:=1;$
\For {$i=0,\ldots, m+l$}
\State $ z_{i+1}:=\left\{ \begin{matrix}(A-\sigma_{i}I)z_{i}, & i<l \\ Az_{i}, & i \geq l \end{matrix}\right.$
\If {$i\geq l$}
\State $g_{j,i-l+1} := (g_{j,i-l+1}-\sum_{k=i-3l+1}^{j-1}g_{k,j}g_{k,i-l+1})/g_{j,j}; \qquad j=i-2l+2,\ldots,i-l$  
\State $g_{i-l+1,i-l+1}:= \sqrt{g_{i-l+1,i-l+1}-\sum_{k=i-3l+1}^{i-l}g_{k,i-l+1}^2};$
\State \# Check for breakdown and restart if required
\If {$i<2l$}
\State $\gamma_{i-l}:=(g_{i-l,i-l+1}+\sigma_{i-l}g_{i-l,i-l}-g_{i-l-1,i-l}\delta_{i-l-1})/g_{i-l,i-l};$
\State $\delta_{i-l}:=g_{i-l+1,i-l+1}/g_{i-l,i-l};$
\Else
\State $\gamma_{i-l}:=(g_{i-l,i-l}\gamma_{i-2l}+g_{i-l,i-l+1}\delta_{i-2l}-g_{i-l-1,i-l}\delta_{i-l-1})/g_{i-l,i-l};$
\State $\delta_{i-l}:=(g_{i-l+1,i-l+1}\delta_{i-2l})/g_{i-l,i-l};$
\EndIf
\State \textbf{end if}
\State $v_{i-l+1} := (z_{i-l+1} - \sum_{j=i-3l+1}^{i-l} g_{j,i-l+1} v_{j})/g_{i-l+1,i-l+1};$
\State $z_{i+1} := (z_{i+1} - \gamma_{i-l} z_{i}- \delta_{i-l-1} z_{i-1})/\delta_{i-l};$
\EndIf
\State \textbf{end if}
\State $g_{j,i+1}:=\left\{ \begin{matrix}(z_{i+1},v_{j}); & j=\max(0,i-2l+1),\ldots,i-l+1 \\ (z_{i+1},z_{j});  &j=i-l+2,\ldots,i+1 \end{matrix}\right.$
\State \textbf{end if}
\If {$i=l$}
\State $\eta_{0}:=\gamma_{0};$
\State $\zeta_{0}:={\|r_{0}\|}_2;$
\State $p_{0}:=v_0/\eta_0;$
\Else \, \textbf{if} {$i\geq l+1$} \textbf{then}
\State $\lambda_{i-l}:=\delta_{i-l-1}/\eta_{i-l-1};$ 
\State $\eta_{i-l}:=\gamma_{i-l}-\lambda_{i-l}\delta_{i-l-1};$
\State $\zeta_{i-l}:=-\lambda_{i-l}\zeta_{i-l-1};$
\State $p_{i-l}:=(v_{i-l}-\delta_{i-l-1}p_{i-l-1})/\eta_{i-l};$
\State $x_{i-l}:=x_{i-l-1}+\zeta_{i-l-1}p_{i-l-1};$
\EndIf
\State \textbf{end if}
\EndFor 
\State \textbf{end for}
\end{algorithmic}
}
\end{algorithm}

\begin{remark} \textbf{CG vs.\,D-Lanczos.} \label{remark:remark7}
An important remark on nomenclature should be made here. As indicated earlier, Alg.\,\ref{algo:plCG} is mathematically equivalent (i.e.\,in exact arithmetic) to the direct Lanczos (or \emph{D-Lanczos}) method \cite{saad2003iterative}, rather than the CG method. Indeed, Alg.\,\ref{algo:plCG} could alternatively be called `\emph{p($l$)-D-Lanczos}'. The difference between the two methods is subtle. Unlike classic CG the D-Lanczos algorithm may break down even in exact arithmetic 
due to a possible division by zero in the recurrence relation \eqref{eq:pdir}, which stems from implicit Gaussian elimination without pivoting. Apart from this possible (yet rarely occurring) instability, Alg.\,\ref{algo:plCG} is mathematically equivalent to CG and their convergence histories coincide (for any choice of $l$), see Fig.\,\ref{fig:residuals}. Since Alg.\,\ref{algo:plCG} is intrinsically based on residual orthogonality and search direction $A$-orthogonality (i.e.~the key properties of the CG method), we denote Alg.\,\ref{algo:plCG} as p($l$)-CG.
Alternative formulations of the CG algorithm could be used to derive other pipelined 
variants that are mathematically equivalent to CG. For example, three-term Conjugate Gradients \cite{saad2003iterative} can be rewritten into a pipelined variant that produces iterates identical to those of CG in exact arithmetic; we do not expound on the details of this method here.
\end{remark}

\begin{remark} \textbf{Square root breakdown.} \label{remark:sqrt_breakdown}
Unlike other more common variants of the CG algorithm \cite{saad1984practical,meurant1987multitasking,chronopoulos1989s,ghysels2014hiding}, 
the p($l$)-CG method computes a square root to calculate $g_{i-l+1,i-l+1}$ (Alg.\,\ref{algo:plCG}, line 8) and may break down when the  
root argument becomes negative or zero, just like p($l$)-GMRES. When $g_{i-l+1,i-l+1}-\sum_{k=i-3l+1}^{i-l}g_{k,i-l+1}^2 = 0$ (or sufficiently close to zero in finite precision arithmetic) a happy breakdown occurs, implying the solution has been found. A value 
smaller than zero signals loss of basis orthogonality and results in a hard breakdown caused by numerical rounding errors in finite precision arithmetic. In this case the algorithm has not converged and a restart 
or a re-orthogonalization of the Krylov subspace basis is required. 
In this work we opt for an explicit restart when a square root breakdown occurs, 
using the last computed solution $x_k$ as the new initial guess. Other restart or re-orthogonalization strategies are possible 
\cite{bai2000templates,saad2003iterative}, but are beyond the scope of this work. 
We stress that the square root breakdown is intrinsic to the pipelining procedure and  
that it is unrelated to the possible instability 
in the D-Lanczos method pointed out in Remark \ref{remark:remark7}.
\end{remark}

\subsubsection{The residual norm in p($l$)-CG}

The derivation of the p($l$)-CG method follows the classic procedure from \cite{saad2003iterative,van2003iterative,liesen2012krylov} but does not include a recurrence relation for the residual, similarly to the p($l$)-GMRES algorithm. This is in contrast with most (communication reducing) variants of CG, where the residual is typically computed recursively in each iteration to update the solution, see e.g.\,\cite{ghysels2014hiding,carson2015communication} and Remark \ref{remark:remark9}.
However, the residual norm is a useful measure of deviation from the 
solution that allows (among others) to formulate 
stopping criteria. 
Adding an extra \textsc{spmv} to explicitly compute the residual in each iteration is not advisable, since it would increase the algorithm's computational cost.  The following property allows to compute the residual norm in each iteration of the p($l$)-CG algorithm without the explicit computation of the residual vector. 

\begin{theorem} \label{th:resnorm}
Let the search directions $P_{i-l+1}$ of the p($l$)-CG method be defined by $p_0 = v_0/\eta_0$ and the recurrence \eqref{eq:pdir}, i.e.
$p_k = \eta_k^{-1}(v_k-\beta_k p_{k-1})$, for $1 \leq k \leq i-l$. Let the solution $x_{i-l+1}$ be given by $x_{i-l+1} = x_{0}+P_{i-l+1} q_{i-l+1}$,
where $q_{i-l+1} = (\zeta_0,\ldots,\zeta_{i-l})^T$ is characterized by the entries $\zeta_0 = {\|r_0\|}_2$ and $\zeta_{k} = -\lambda_{k}\zeta_{k-1}$. 
Then it holds that $|\zeta_k| = {\|r_k\|}_{2}$ in any iteration $0 \leq k \leq i-l$.
\end{theorem}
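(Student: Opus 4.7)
The plan is to exploit the Galerkin condition $r_k \perp V_k$ together with the LU factorization of $T_k$ to express the residual explicitly as a scalar multiple of the last Lanczos vector $v_k$, and then identify that scalar with $\zeta_k$ up to sign.

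First I would write $r_k = b - Ax_k = \|r_0\|_2 v_0 - A V_k y_k$ with $y_k = T_k^{-1}\|r_0\|_2 e_1$, and then use the Lanczos relation $AV_k = V_{k+1} T_{k+1,k}$ to get
\begin{equation*}
r_k = V_{k+1}\bigl(\|r_0\|_2 e_1 - T_{k+1,k}\,y_k\bigr).
\end{equation*}
Splitting $T_{k+1,k}$ into its square part $T_k$ (top $k$ rows) and its last row $\delta_{k-1} e_k^T$, the definition of $y_k$ forces $T_k y_k = \|r_0\|_2 e_1$, so the top $k$ components of the bracketed vector vanish and only the last entry survives. This yields the clean identity
\begin{equation*}
r_k = -\,\delta_{k-1}\,(y_k)_{k-1}\,v_k,
\end{equation*}
so that $\|r_k\|_2 = |\delta_{k-1}(y_k)_{k-1}|$ since $\|v_k\|_2 = 1$.

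Next I would extract $(y_k)_{k-1}$ from the LU factorization \eqref{eq:LU}. Since $y_k = U_k^{-1} L_k^{-1}\|r_0\|_2 e_1 = U_k^{-1} q_k$ and the last row of $U_k$ contains only the diagonal entry $\eta_{k-1}$, back-substitution gives $(y_k)_{k-1} = \zeta_{k-1}/\eta_{k-1}$. Combining this with the relation $\lambda_k = \delta_{k-1}/\eta_{k-1}$ from \eqref{eq:LU} produces
\begin{equation*}
\|r_k\|_2 = \bigl|\delta_{k-1}\zeta_{k-1}/\eta_{k-1}\bigr| = |\lambda_k \zeta_{k-1}|,
\end{equation*}
and invoking the recurrence $\zeta_k = -\lambda_k \zeta_{k-1}$ immediately delivers $|\zeta_k| = \|r_k\|_2$ for $k \geq 1$. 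The base case $k = 0$ is handled separately: by construction $\zeta_0 = \|r_0\|_2$, matching $\|r_0\|_2$ trivially.

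I do not expect any step to be a serious obstacle; the only subtlety is being careful with the indexing convention (whether $T_{k+1,k}$ is $(k+1)\times k$ and whether $e_1$ denotes the first or zeroth standard basis vector) so that the cancellation $V_{k+1}\|r_0\|_2 e_1 = \|r_0\|_2 v_0$ is transparent. A secondary point worth a one-line remark is that the argument implicitly assumes $\eta_{k-1} \neq 0$, i.e.\ that no D-Lanczos breakdown has occurred through step $k-1$; this is consistent with Remark \ref{remark:remark7} and is automatic whenever p($l$)-CG proceeds without restart.
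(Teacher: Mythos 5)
Your proposal is correct and follows essentially the same route as the paper's proof: both express $r_k = -\delta_{k-1}(e_k^T y_k)\,v_k$ via the Lanczos relation and the Galerkin cancellation, read off the last entry of $y_k = U_k^{-1} q_k$ as $\zeta_{k-1}/\eta_{k-1}$, and conclude using $\lambda_k = \delta_{k-1}/\eta_{k-1}$. Your added remark on requiring $\eta_{k-1}\neq 0$ (no D-Lanczos breakdown) is a sensible refinement that the paper leaves implicit.
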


\begin{proof}
The first part of the proof follows the classic argumentation of Saad in \cite{saad2003iterative}, p.160. The residual $r_k$ with $0\leq k \leq i-l$ is given by:
\begin{eqnarray*}
r_k &=& b-Ax_{k} = b-A(x_{0}+V_{k}y_{k}) = r_{0} - AV_{k}y_{k} \\
&=&  {\|r_0\|}_2 v_{0} - V_{k}T_{k}y_{k} - \delta_{k-1} (e_{k}^T y_{k}) v_{k}
= -\delta_{k-1} (e_{k}^{T}y_{k}) v_{k}, \qquad 0\leq k \leq i-l,
\end{eqnarray*}
where the last equality holds since ${\|r_0\|}_2 v_{0} - V_{k}T_{k}y_{k} = 0$, see \eqref{eq:lanczos}.
Since $v_{k}$ is normalized, i.e.\,${\|v_{k}\|}_2 = 1$, it holds that ${\|r_k\|}_2 = |{-}\delta_{k-1} (e_{k}^{T}y_{k})|$. 
From the definition $y_{k} = U_{k}^{-1} q_{k}$ it follows that the last element of $y_{k}$ is $\zeta_{k-1} / \eta_{k-1}$.
Hence ${\|r_k\|}_2 = |{-}\zeta_{k-1} \delta_{k-1}  / \eta_{k-1}| = |{-}\zeta_{k-1} \lambda_{k}| = |\zeta_{k}|$.
\end{proof}
The equality $|\zeta_k| = {\|r_k\|}_{2}$ holds in exact arithmetic, but 
rounding errors may contaminate 
$\zeta_{k}$ in a practical 
implementation in each iteration, leading to deviations from the actual residual norm. 
We expound on the numerical behavior of the p($l$)-CG method 
in finite precision in Section \ref{sec:analysis}.

\begin{figure}
\begin{center}
\includegraphics[width=0.47\textwidth]{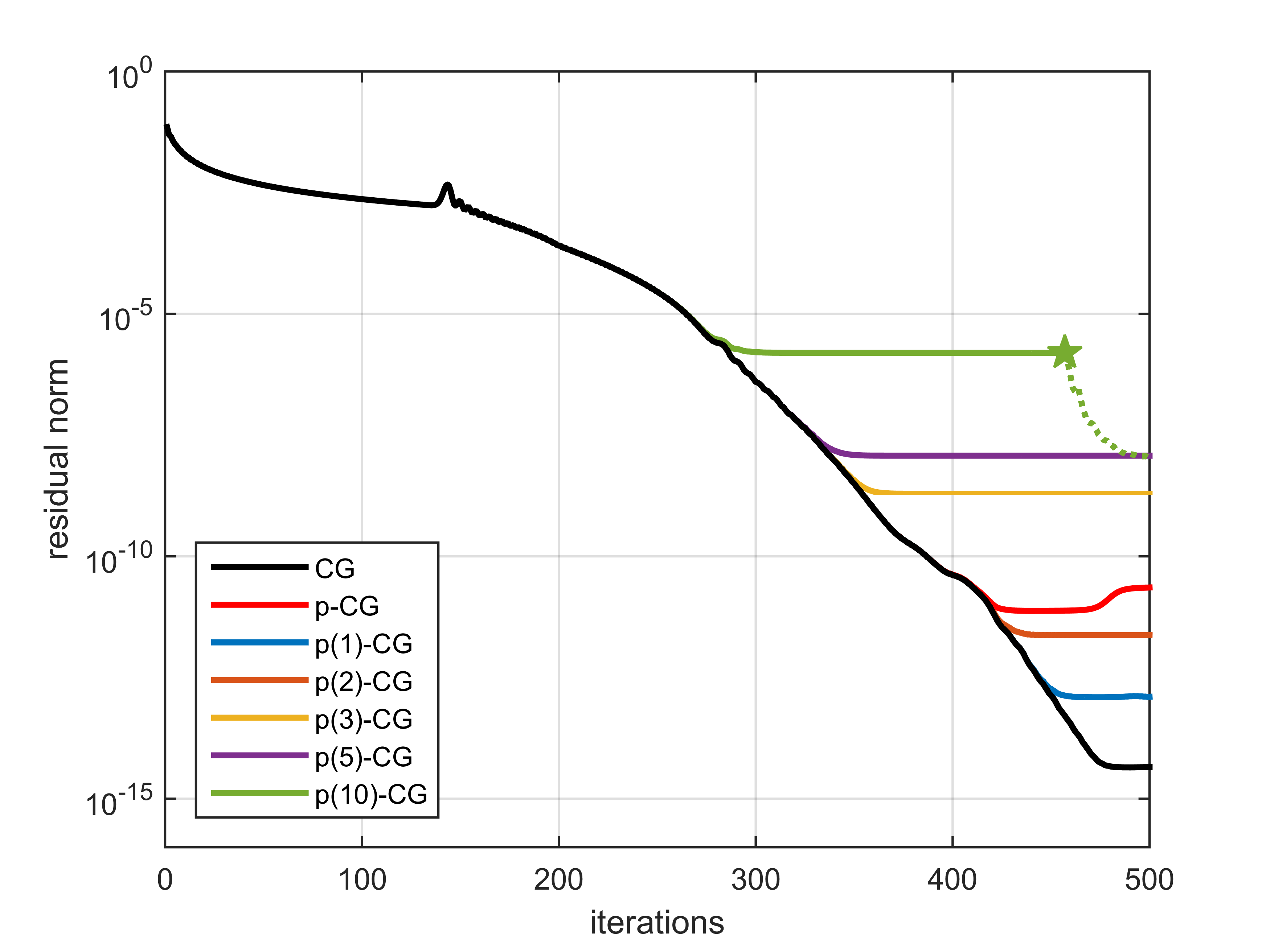} 
\includegraphics[width=0.47\textwidth]{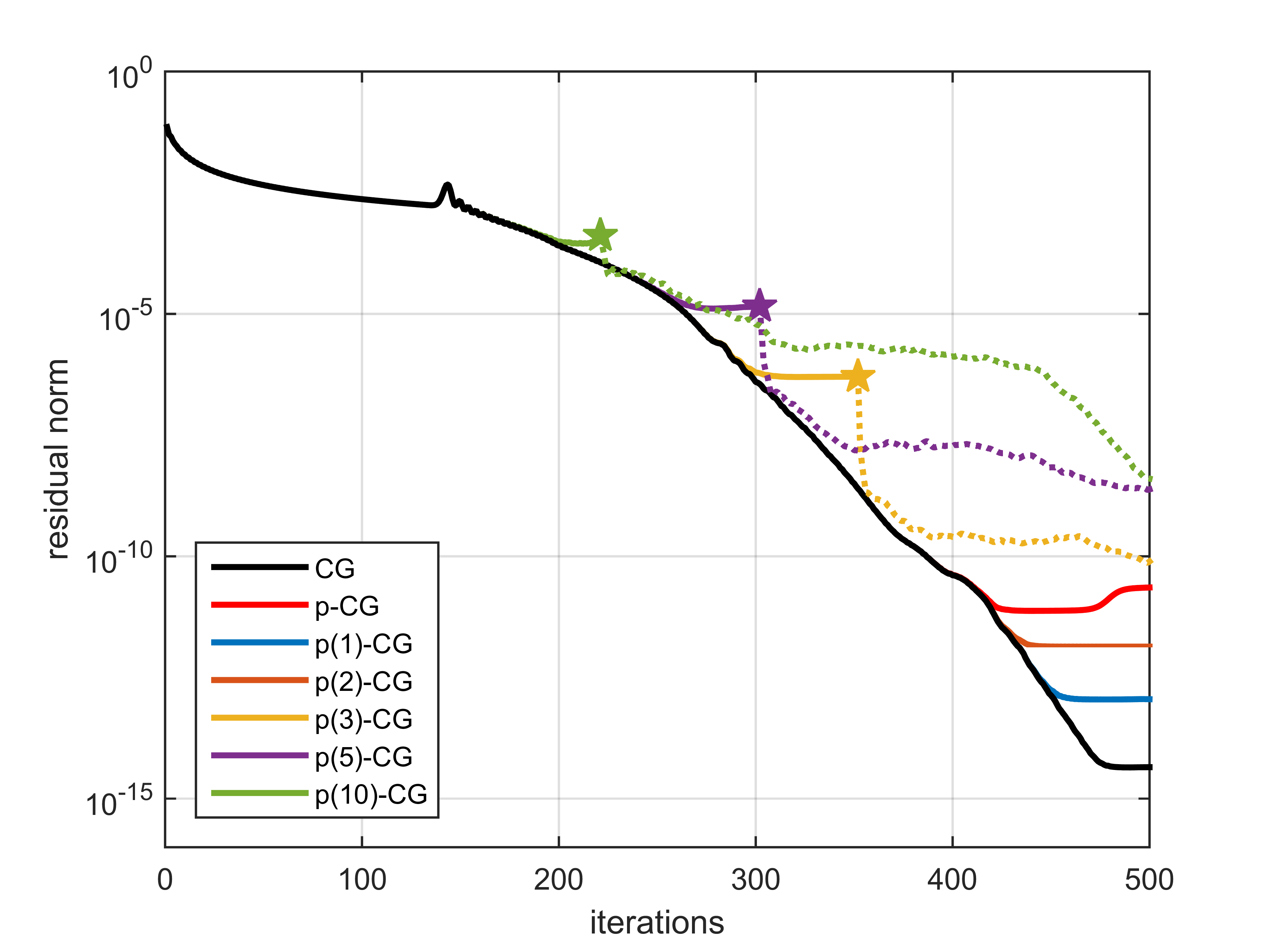}
\end{center}
\caption{Comparison of the residual norm history $\|b-Ax_j\|$ for different CG variants for a 2D Poisson problem with 200 $\times$ 200 unknowns. The stabilizing shifts $\sigma_i$ for p($\ell$)-CG are based on the degree $l$ Chebyshev polynomial on the interval [0,8] (left) (optimal shift choices) and the perturbed interval [0,8*1.005] (right) (slightly sub-optimal shifts). Square root breakdowns in p($\ell$)-CG are indicated by a $\bigstar$ symbol (followed by explicit iteration restart).}
\label{fig:residuals}
\end{figure}

\begin{remark} \textbf{Comparison to p-CG from Ghysels et al.\,\cite{ghysels2014hiding}.} \label{remark:remark9}
Introducing a recurrence relation for the residual and rewriting the corresponding expression 
to achieve an overlap between global communication and \textsc{spmv} computation would lead to the pipelining approach proposed in \cite{ghysels2014hiding} to derive the p-CG and p-CR methods (which are limited to a pipeline length $l = 1$). Although the p-CG and p($l$)-CG variants are both denoted as `pipelined CG methods', the approach to pipelining proposed in this work fundamentally differs from the procedure in \cite{ghysels2014hiding} and the resulting algorithms are quite different from a numerical perspective.
\end{remark}

\begin{remark} \textbf{Stopping criterion.} \label{remark:stopping}
The characterization of the residual norm ${\|r_{i-l}\|}_2 = |\zeta_{i-l}|$ allows to add a stopping criterion to Alg.\,\ref{algo:plCG}. Given a relative residual tolerance $\tau$ (input variable), the following classic stopping criterion can for example be added after line 31 in Alg.\,\ref{algo:plCG}:
\begin{equation*}
  \text{~~\textbf{\emph{if}}~~~} |\zeta_{i-l}|/{\|b\|}_2 \leq \tau \text{~~\textbf{\emph{then}}~ \emph{BREAK} ~\textbf{\emph{end if}}.}
\end{equation*}
Note that the location of the stopping criterion in the algorithm is important. Indeed, the above check could be performed immediately after $\zeta_{i-l}$ has been computed in Alg.\,\ref{algo:plCG} on line 29; however, the update of the solution $x_{i-l}$ that corresponds to the residual $|\zeta_{i-l}|$ is only computed on line 31.
\end{remark}

\begin{remark} \textbf{Solution update.} \label{remark:solution}
In light of Remark \ref{remark:stopping} and the discussion in Section \ref{sec:towards}, 
note that one could already compute the next solution $x_{i-l+1} = x_{i-l} + \zeta_{i-l} p_{i-l}$, 
see \eqref{eq:x_a}-\eqref{eq:x_a2}, on line 31 of Alg.\,\ref{algo:plCG}, since 
$p_{i-l}$ and 
$\zeta_{i-l}$ are both computed in iteration $i$. 
However, the corresponding residual norm $|\zeta_{i-l+1}|$ is then available only after executing line 29 in iteration $i+1$ (where the stopping criterion could be checked). 
To keep the solution and residual norm within a single iteration in sync 
and retain the analogy with Alg.\,\ref{algo:plGMRES} 
we opt to compute the solution $x_{i-l}$ in iteration $i$ in Alg.\,\ref{algo:plCG}. 
\end{remark}

\subsection{Preconditioned p($l$)-CG} \label{sec:preconditioned}

Since preconditioning is 
a crucial aspect for the efficient solution of large scale linear systems, we discuss the extension of the p($l$)-CG algorithm to include a preconditioner. The methodology follows the derivation of the preconditioned CG and p-CG methods outlined in e.g.~\cite{ghysels2014hiding}, aiming to iteratively solve the system $M^{-1}A x =M^{-1}b$ where both $A$ and $M$ are symmetric positive definite matrices. The approximate solutions $x_i$ $(i \geq 0)$ lie in the subspaces $x_0 + \mathcal{K}_i(M^{-1}A,r_0)$. However, the symmetry of $A$ and $M$ in general does not imply that the preconditioned system is symmetric. To preserve symmetry we use the observation that $M^{-1}A$ is self-adjoint with respect to the $M$ inner product $(x,y)_M = (Mx,y) = (x,My)$.

Let $V_{i-l+1} = [v_0,\ldots,v_{i-l}]$ again be the orthonormal basis for the $(i-l+1)$-th Krylov subspace $\mathcal{K}_{i-l+1}(M^{-1}A,r_0)$. 
Note that $r_i$ denotes the preconditioned residual $M^{-1}(b-Ax_i)$ in this section.
We define the auxiliary basis $Z_{i+1} = [z_0,z_1,\ldots,z_i]$ similarly to \eqref{eq:defz}:
\begin{equation} 
z_{j}:= \left\{ \begin{matrix} 
v_{0}, & j=0, \\ 
P_{j}(M^{-1}A)v_{0}, & 0<j\leq l, \\ 
P_{l}(M^{-1}A)v_{j-l}, & j>l, 
\end{matrix} \right. 
\qquad \text{with} \qquad P_{i}(t) := \prod_{j=0}^{i-1} (t-\sigma_{j}), \qquad  \text{for} ~ i\leq l,
\end{equation}
The recurrence relations \eqref{zid2} and \eqref{zid3} 
can in the preconditioned case be summarized as:
\begin{equation} \label{eq:prerecc}
z_{i+1} =\left\{ \begin{matrix} (M^{-1}A-\sigma_{i}I)z_{i} & i<l \\ ( M^{-1}Az_{i} - \gamma_{i-l} z_{i}- \delta_{i-l-1}z_{i-1})/\delta_{i-l} & i \geq l. \end{matrix}\right.
\end{equation}
In addition to the basis $Z_{i+1}$, the \emph{unpreconditioned} auxiliary basis vectors $\hat{Z}_{i+1} = [\hat{z}_0,\hat{z}_1,\ldots,\hat{z}_i]$ are defined as 
$\hat{z}_j = M z_j$
such that 
\begin{equation}
z_j = M^{-1} \hat{z}_j, \qquad j \leq i.
\end{equation}
The matrix $M$ (preconditioner inverse) is generally not explicitly available; however, for $i=0$ the first auxiliary vectors $\hat{z}_0$ and $z_0$ can be computed as 
$\hat{z}_{0} = M r_{0}/{\|r_{0}\|}_M  =  (b-Ax_{0})/{\|r_{0}\|}_M$ and $z_0 = M^{-1} \hat{z}_{0}$.
By multiplying both sides in the recurrence relations \eqref{eq:prerecc} for $z_{i+1}$ by $M$ one readily derives recurrence relations for the unpreconditioned basis vector $\hat{z}_{i+1}$:
\begin{equation} \label{eq:zhatid3}
\hat{z}_{i+1} 
= \left\{ \begin{matrix}Az_{i} -\sigma_{i}\hat{z}_{i}  & i<l \\ (Az_{i} - \gamma_a \hat{z}_{i}- \delta_{a-1} \hat{z}_{i-1})/\delta_{a} & i \geq l.  \end{matrix}\right. 
\end{equation}
The preconditioned auxiliary basis vector $z_{i+1}$ can be computed after the \textsc{spmv} $Az_i$ has been computed by applying the preconditioner to $Az_i$ and using expression \eqref{eq:prerecc}.
Given the basis vectors $V_{i-l+1}$, $\hat{Z}_{i+1}$ and $Z_{i+1}$, we replace the usual Euclidean dot product in Alg.\,\ref{algo:plCG} with the $M$ dot product. 
The dot products $g_{j,i+1}$ for $0 \leq j \leq i+1$ are then computed in analogy to \eqref{eq:gdotpr2} as follows: 
\begin{equation} \label{eq:gdotpr2prec}
g_{j,i+1}=\left\{ 
  \begin{matrix}
	  (z_{i+1},v_{j})_M = (\hat{z}_{i+1},v_{j}), & j=\max(0,i-2l+1),\ldots,i-l+1, \\ 
	  (z_{i+1},z_{j})_M = (\hat{z}_{i+1},v_{j}), & j=i-l+2,\ldots,i+1.
	\end{matrix}
\right. 
\end{equation}
With the above definitions, Lemma \ref{lemma:lemmaband} holds for the preconditioned pipelined CG method with the adapted definition $g_{j,i} = (z_i,v_j)_M = (\hat{z}_i,v_j)$.  
Consequently, the preconditioned p($l$)-CG algorithm is a direct extension of Alg.\,\ref{algo:plCG}, with reformulated dot products and the addition of the recurrence relation \eqref{eq:zhatid3} for the unpreconditioned auxiliary variable $\hat{z}_{i+1}$. 

Note that Theorem \ref{th:resnorm} still holds for the preconditioned system when the Euclidean 2-norm of $r_k$ in the formulation of the theorem is replaced by the $M$-norm of $r_k$. That is: for any iteration $k$ in preconditioned p($l$)-CG 
it holds that $|\zeta_k| = {\|r_k\|}_M = \sqrt{(b-Ax_k, M^{-1}(b-Ax_k))}$. The preconditioned algorithm thus intrinsically computes the $M$-norm of the residual in each iteration.

\begin{remark} \textbf{Newton basis shifts.}  \label{remark:shifts}
The preconditioned linear system also allows for the use of a shifted Newton-type basis for the polynomials $P_i(M^{-1}A)$ that are used to define $\hat{z}_{i+1}$ and $z_{i+1}$ as illustrated by \eqref{eq:prerecc} and \eqref{eq:zhatid3}. However, since the preconditioner application also overlaps with global communication, it may not be required to use deep pipelines in practice when the preconditioner application is sufficiently computationally expensive with respect to the global reduction phase. 
\end{remark}

\section{Implementation considerations} \label{sec:imple}

Section \ref{sec:deriv} gave an overview of the mathematical properties of the p($l$)-CG method, ultimately leading to Alg.\,\ref{algo:plCG}. 
In this section we comment on several important technical aspects concerning the implementation of the p($l$)-CG algorithm.

\subsection{Hiding communication in p($l$)-CG} \label{sec:hiding}

Alg.\,\ref{algo:plCG} gives the classic algebraic formulation of the p($l$)-CG method. However, it may not be directly apparent from this formulation where the overlap of global communication with computational work occurs throughout the algorithm. We therefore introduce a schematic kernel-based representation of the p($l$)-CG algorithm in this section. The following computational kernels are defined in iteration $i$ in Alg.\,\ref{algo:plCG}:\\

\begin{center}
  \begin{tabular}{| c | c | l | c | }
    \hline
    kernel \# & kernel type & kernel description & Alg.\,\ref{algo:plCG} lines  \\ \hline \hline
    (K1) & \textsc{spmv}   & apply $A$ and $M^{-1}$ to compute $z_{i+1}$ and $\hat{z}_{i+1}$      & 5 \\ \hline
    (K2) & \textsc{scalar} & update basis transformation matrix $G_{i-l+2}$ elements              & 7-8 \\ \hline
		(K3) & \textsc{scalar} & update Hessenberg matrix $H_{i-l+2,i-l+1}$ elements                  & 10-16 \\ \hline
		(K4) & \textsc{axpy}   & recursive update of $v_{i-l+1}$, $z_{i+1}$ and $\hat{z}_{i+1}$       & 17-18 \\ \hline
		(K5) & \textsc{dotpr}  & compute dot products $(\hat{z}_{i+1},z_j)$ and $(\hat{z}_{i+1},v_j)$ & 20 \\ \hline
		(K6) & \textsc{axpy}   & update solution $x_{i-l}$ and residual norm $|\zeta_{i-l}|$          & 22-32 \\ \hline
  \end{tabular}
\end{center}\ \\

\noindent The \textsc{spmv} kernel (K1) is considered to be the most computationally intensive part of the algorithm, and hence should be overlapped with the global reduction phase in (K5) to hide communication latency and idle core time. Kernels (K2), (K3), (K4) and (K6) represent purely local scalar and vector operations which are assumed to be executed very fast on multi-node hardware. These operations are also overlapped with the global reduction phase; however, due to their low arithmetic complexity the overlap is not expected to yield any major performance improvement. In (K5) all local contributions to the dot products are first computed by each worker. Subsequently a global reduction phase is performed in which local contributions are added pairwise via a $\log_2(N)$ length reduction tree, where $N$ represents the number of workers. Once the scalar result of each dot product has been collected on a single worker, a global broadcasting phase redistributes the resulting scalars back to all workers for local use in the next iteration. 
The preconditioned p($l$)-CG algorithm can be summarized schematically using these kernel definitions as displayed in Alg.\,\ref{algo:schema}. 

\begin{algorithm}[t]
{\small
\caption{Schematic representation of p($l$)-CG \hfill \textbf{Input:} $A$, $M^{-1}$, $b$, $x_0$, $l$, $m$}\label{algo:schema}
\begin{algorithmic}[1]
\State \textsc{initialization} ;
\For {$i=0,\ldots, m+l$}
\State (K1) \textsc{spmv} ;
\If {$i\geq l$}
\State \texttt{MPI\_Wait(req(i-l), \ldots)} ;
\State (K2) \textsc{scalar} ;
\State (K3) \textsc{scalar} ;
\State (K4) \textsc{axpy} ;
\EndIf
\State \textbf{end if}
\State (K5) \textsc{dotpr} ;
\State \texttt{MPI\_Iallreduce(\ldots, G(max(0,i-2l+1):i+1,i+1), \ldots, req(i))} ;
\State (K6) \textsc{axpy} ;
\EndFor 
\State \textbf{end for}
\end{algorithmic}
}
\end{algorithm}

Our implementation uses MPI with the MPI-3 standard as the communication library. The MPICH-3 library used in our experiments, see Section \ref{sec:experiments}, allows for asynchronous progress in the global reduction by setting the following environment variables:
\vspace{0.2cm}
\begin{itemize}  
\item[] \texttt{MPICH\_ASYNC\_PROGRESS=1};
\item[] \texttt{MPICH\_MAX\_THREAD\_SAFETY=multiple};
\vspace{0.2cm}
\end{itemize} 
Global communication is initiated by a call which starts a non-blocking reduction: 
\vspace{0.2cm} 
\begin{itemize}
\item[] \texttt{MPI\_Iallreduce(\ldots, G(i-2l+1:i+1,i+1), \ldots, req(i));}
\vspace{0.2cm}
\end{itemize}
The input argument \texttt{G(i-2l+1:i+1,i+1)} represents the $2l+1$ elements of the band structured matrix $G_{i+2}$ that are computed using the dot products in (K5) in iteration $i$, see \eqref{eq:gdotpr2prec}. The result of the corresponding global reduction phase is signaled to be due to arrive by the call to
\vspace{0.2cm}
\begin{itemize}
\item[] \texttt{MPI\_Wait(req(i), \ldots);}
\vspace{0.2cm}
\end{itemize}
The \texttt{MPI\_Request} array element \texttt{req(i)} that is passed as an input argument to \texttt{MPI\_Wait} keeps track of the iteration index in which the global reduction phase was initiated.
Since the p($l$)-CG method overlaps $l$ \textsc{spmv}'s with a single global reduction phase, the call to \texttt{MPI\_Wait(req(i), \ldots)} occurs effectively in iteration $i+l$, i.e.\,$l$ iterations after the call \texttt{MPI\_Iallreduce(\ldots, req(i))}.

\begin{table}[t]
\centering
\vspace{0.5cm}
\small
\begin{tabular}{| l | c | c | c | c | c |}
\hline 
							& \textsc{glred} & \textsc{spmv} &  \textsc{time} & \textsc{flops} & \textsc{memory} \\
							&  &  & (\textsc{glred} \& \textsc{spmv}) & (\textsc{axpy} \& \textsc{dotpr}) &  \\
\hline 
CG	 					& 2 & 1 		& 2 \textsc{glred} + 1 \textsc{spmv} & 10			& 3 \\
p-CG					& 1 & 1     & $\max$(\textsc{glred}, \textsc{spmv}) & 16 			& 6 \\
p($l$)-CG    	& 1 & 1     & $\max$(\textsc{glred}$/l$, \textsc{spmv}) & $6l+10$ & $3l+3$ 
\\
p($l$)-GMRES 	& 1 & 1     & $\max$(\textsc{glred}$/l$, \textsc{spmv}) & $6i-4l+8$ & $2i-l+4$ \\
\hline
\end{tabular}
\caption{Theoretical specifications comparing p($l$)-CG to related Krylov subspace methods. 
`CG' denotes classic CG; `p-CG' is the pipelined CG method from \cite{ghysels2014hiding};
`p($l$)-CG' (with $l \geq 1$) denotes Alg.\,\ref{algo:plCG};
`p($l$)-GMRES' is Alg.\,\ref{algo:plGMRES}.
\textsc{glred}: number of global all-reduce communication phases per iteration;
\textsc{spmv}: number of \textsc{spmv}s per iteration;
\textsc{time}: time spent per iteration in \textsc{glred}s and \textsc{spmv}s; 
\textsc{flops}: number of flops ($\times n$) per iteration for \textsc{axpy}s and dot products ($i$ = iteration index);
\textsc{memory}: total number of vectors in memory (excl.~$x_{i}$ and $b$) 
during execution.}
\label{tab:pipelcg}
\vspace{-0.5cm}
\end{table}
 
The schematic representation, Alg.\,\ref{algo:schema}, shows that the global reduction phase that is initiated by \texttt{MPI\_Iallreduce} with request \texttt{req(i)} in iteration $i$ overlaps with a total of $l$ \textsc{spmv}'s, namely the kernels (K1) in iterations $i+1$ up to $i+l$. The corresponding call to \texttt{MPI\_Wait} with request \texttt{req((i+l)-l)} = \texttt{req(i)} takes place in iteration $i+l$ before the computations of (K2) in which the dot product results are required, but after the \textsc{spmv} kernel (K1) has been executed. In each iteration the global reduction also overlaps with a number of less computationally intensive operations from (K2), (K3), (K4) and (K6). Hence, 
the global communication latency of the dot products in (K5) is `hidden' behind the computational work of $l$ p($l$)-CG iterations.

Table \ref{tab:pipelcg} summarizes key theoretical properties of the p($l$)-CG method in comparison to related algorithms (incl.~memory requirements -- see Section \ref{sec:storing}). The \textsc{flops} count reported for p($l$)-CG assumes that the symmetry of $G_{i+1}$, 
see Lemma \ref{lemma:lemmaband}, 
is exploited to reduce the number of dot products that are computed in Alg.\,\ref{algo:plCG}, line 20. The 
$l$ elements $g_{i-2l+1,i+1},\ldots,g_{i-l,i+1}$ have already been computed in previous iterations, 
see expression \eqref{eq:symmetry_G}. 
Note that the results for the p($l$)-GMRES algorithm exclude the computational and storage cost to execute line 22-23 in Alg.\,\ref{algo:plGMRES}.

\subsection{Storing the $V_k$ and $Z_k$ basis} \label{sec:storing}

A clear advantage of the pipelined p($l$)-CG method, Alg.\,\ref{algo:plCG}, in comparison with p($l$)-GMRES, Alg.\,\ref{algo:plGMRES}, is its reduced storage requirements. In p($l$)-GMRES the complete bases $V_{i-l+1}$ and $Z_{i+1}$ need to be built and stored during the entire run of the algorithm, 
since all basis vectors are needed in the recursions for the next basis vectors (see Alg.\,\ref{algo:plGMRES} lines 17, 18 and 20). 
In contrast, the symmetry of the matrix $A$ induces the symmetry 
of the matrix $T_k$, see Corollary \ref{corollary}, which in turn induces a band structure for $G_k$ as shown in Theorem \ref{lemma:lemmaband}.
In the $i$-th iteration of Alg.\,\ref{algo:plCG} 
the new vector $z_{i+1}$ is computed by an \textsc{spmv} with $z_i$ (line 5); subsequently $z_{i-l+1}$ and $v_{i-3l+1}$, $v_{i-3l+2}$, \ldots, $v_{i-l}$ are required to compute the next basis vector $v_{i-l+1}$, which itself also needs to be stored (line 17); next the last three auxiliary vectors $z_{i-1}$, $z_i$ and $z_{i+1}$ are used in the recurrence for $z_{i+1}$ (line 18); and finally the vectors $v_{i-l+1}$ and $z_{i-l+2}$, \ldots, $z_{i+1}$ are needed to compute the dot products (line 20). This implies that the $3l+2$ basis vectors 
$\{v_{i-3l+1},\ldots,v_{i-l+1}\} \in V_{i-l+2}$ and 
$\{z_{i-l+1},\ldots,z_{i+1}\} \in Z_{i+2}$
are required in iteration $i$. 
From iteration $i$ onward basis vectors $v_j$ with indices $j < i-3l+1$, and vectors $z_j$ with $j < i-l+1$ are not used in either the recursive vector updates or the dot products in Alg.\,\ref{algo:plCG}, and should thus no longer be stored. Hence, no more than $3l+2$ basis vectors need to be kept in memory in \emph{any} iteration of the p($l$)-CG algorithm. 
Fig.\,\ref{fig:basis_storage} schematically shows the storage requirements in iteration $i$ of Alg.\,\ref{algo:plCG}. In each iteration an auxiliary vector $z_{i+1}$ is added and a new sequence of dot products is initiated. The results of the dot product calculations arrive $l$ iterations later, see Section \ref{sec:hiding}. Dot products that were initiated $l$ iterations ago are then used to append a basis vector $v_{i-l+1}$. 

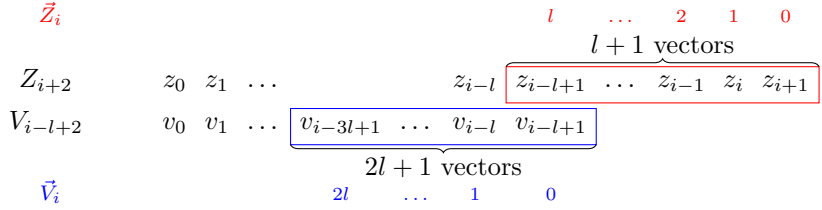
\begin{figure} \label{fig:basis_storage}
  \begin{center}
\begin{tikzpicture}[ node distance=1mm and 0mm, baseline]
\matrix (M1) [matrix of nodes]
 {
  \footnotesize{\color{red}$\vec{Z}_{i}$} ~~~~~~ & & & & & & & \scriptsize{\color{red}$l$} & \scriptsize{\color{red}\ldots} & \scriptsize{\color{red}2} & \scriptsize{\color{red}1} & \scriptsize{\color{red}0} \\[10pt]
  $Z_{i+2}$   ~~~~~~ & $z_0$ & $z_1$ & \ldots & & & $z_{i-l}$ & $z_{i-l+1}$  & \ldots & $z_{i-1}$ & $z_i$ & $z_{i+1}$ \\
  $V_{i-l+2}$ ~~~~~~ & $v_0$ & $v_1$ & \ldots & $v_{i-3l+1}$ & \ldots & $v_{i-l}$  &$v_{i-l+1}$ & & & & \\[10pt]
  \footnotesize{\color{blue}$\vec{V}_{i}$} ~~~~~~ & & & & \scriptsize{\color{blue}$2l$} & \scriptsize{\color{blue}\ldots} &  \scriptsize{\color{blue}1} & \scriptsize{\color{blue}0} & & & & \\
};
\draw[red,thin] 
(M1-2-8.north west) -| (M1-2-12.south east) -| (M1-2-8.north west);
\draw[blue,thin] 
(M1-3-5.north west) -| (M1-3-8.south east) -| (M1-3-5.north west);
\draw[decorate,decoration={brace,amplitude=3pt,mirror}](M1-3-5.south west)--node[below=1pt]{$2l+1$ vectors}(M1-3-8.south east);
\draw[decorate,decoration={brace,amplitude=3pt}](M1-2-8.north west)--node[above=1pt]{$l+1$ vectors}(M1-2-12.north east);
\end{tikzpicture}
\end{center}
\vspace{-0.5cm}
\caption{The sequence of auxiliary vectors $Z_{i+2}$ and basis vectors $V_{i-l+2}$ constructed in the $i$-th iteration of p($l$)-CG, Alg.\,\ref{algo:plCG}. Colored boxes represent vectors that are actively used in iteration $i$ and thus should be kept in memory up to this point, see Section \ref{sec:storing}. They are collected in the sliding windows $\vec{Z}_{i}$ and $\vec{V}_{i}$, see Appendix \ref{sec:sliding}. The super- and sub-indices indicate the position of the corresponding vectors in the sliding windows.}
\end{figure}

\begin{remark} \textbf{Storage in preconditioned p($l$)-CG.}    \label{remark:prec_storage}
A similar analysis of basis storage requirements can be performed for the preconditioned version of p($l$)-CG. 
In addition to the $3l+2$ vectors from the bases $V_{i-j+2}$ and $Z_{i+2}$ pointed out above, only the last three vectors $\hat{z}_{i-1}$, $\hat{z}_{i}$ and $\hat{z}_{i+1}$ in the auxiliary basis $\hat{Z}_i$ are required in the recursive update for $\hat{z}_{i+1}$. Vectors $\hat{z}_j$ with $j < i-1$ 
are not used in current or future iterations of the algorithm from iteration $i$ onward. 
Hence, the preconditioned algorithm stores a maximum of $3l+5$ basis vectors at any point during the algorithm, and thus a total of $3l+6$ vectors are kept in \textsc{memory}, cf.~Table \ref{tab:pipelcg}.
\end{remark}

In Appendix \ref{sec:sliding} we comment on an efficient way to implement the storage of the basis vectors throughout the algorithm by using the concept of \emph{`sliding windows'}.

\section{Analysis of the attainable accuracy of p($l$)-CG in finite precision arithmetic} \label{sec:analysis}

As suggested by Fig.\,\ref{fig:residuals}, replacing the classic CG algorithm by the pipelined p($l$)-CG variant introduces numerical issues. The numerical accuracy attainable by the p($\ell$)-CG method may be reduced drastically for larger pipeline lengths $l$. In exact arithmetic (and provided no square root breakdowns occur in p($l$)-CG, see Remark \ref{remark:sqrt_breakdown}), p($l$)-CG produces a series of iterates identical to the classic CG method. However, in finite precision arithmetic their behavior can differ significantly as local rounding errors may induce a decrease in attainable accuracy and a delay of convergence. The impact of finite precision round-off errors on the numerical stability of classic CG has been extensively studied 
\cite{greenbaum1989behavior,greenbaum1992predicting,greenbaum1997estimating,gutknecht2000accuracy,strakovs2002error,strakovs2005error,meurant2006lanczos,gergelits2014composite}. In communication reducing CG variants the effects of local rounding errors are significantly amplified; we refer to our manuscript \cite{cools2018analyzing} and the work by Carson et al.~\cite{carson2016numerical} for an 
overview of the stability analysis of the depth one pipelined Conjugate Gradient method 
from \cite{ghysels2014hiding}. 
In this section we analyze the behavior of local rounding errors that stem from the multi-term recurrence relations in the p($l$)-CG algorithm in a finite precision framework.
Preconditioning is omitted in this section for simplicity of notation but without loss of generality.\footnote{The extension of the local rounding error analysis 
to the preconditioned p($\ell$)-CG algorithm is trivial since the recurrences for the unpreconditioned variables are decoupled from their preconditioned counterparts. We refer the reader to Section \ref{sec:preconditioned} and our related work in \cite{cools2018analyzing} for more details.}

In this section we use a notation with bars to indicate variables that are 
computed in a finite precision setting.  Furthermore, for variables that are defined recursively in the algorithm, we differentiate between the recursively computed variable and the `actual' variable, i.e.~the variable that would be produced by exact computation using already computed inaccurate quantities. The latter is denoted by a bold typesetting. E.g.~the recursively computed residual in finite precision is denoted as $\bar{r}_j$, whereas the actual residual (which \emph{could} be computed, but typically isn't to reduce computational cost) is $\bar{\bold{r}}_j = b-A\bar{x}_j$. The primary aim of this section is to analyze the gap between the recursively computed variable $\bar{r}_j$ and its local 
`recurrence error-free' counterpart $\bar{\bold{r}}_j$.

We use the classic model for floating point arithmetic with machine precision $\epsilon$. The round-off error on scalar multiplication, vector summation, \textsc{spmv} application and dot product computation on an $n$-by-$n$ matrix $A$, length $n$ vectors $v$, $w$ and a scalar number $\alpha$ are respectively bounded by
\begin{align*}
	\| \alpha v - \text{fl}(\alpha v) \| &\leq \| \alpha v \| \, \epsilon =  |\alpha| \, \|v\| \, \epsilon, &\qquad
	\| v + w - \text{fl}(v + w) \| &\leq (\|v\| + \|w\|) \, \epsilon,\\
	\| Av - \text{fl}(Av) \| &\leq \mu\sqrt{n} \, \|A\| \, \|v\| \, \epsilon, &\qquad
	| \left( v,w \right) - \text{fl}(\,\left(v,w \right)\,) | &\leq n \, \|v\| \, \|w\| \epsilon,
\end{align*}
where $\text{fl}(\cdot)$ indicates the finite precision floating point representation, $\mu$ is the maximum number of nonzeros in any row of $A$, and the norm $\|\cdot\|$ represents the Euclidean 2-norm in this section.

\subsection{Local rounding error behavior in finite precision p($l$)-CG}

We give a very summarily overview of the behavior of local rounding errors in classic CG, 
see e.g.~\cite{greenbaum1992predicting,strakovs2002error,strakovs2005error,meurant2006lanczos}.
The recurrence relations for the approximate solution $\bar{x}_{j+1}$ and the residual $\bar{r}_{j+1}$ 
computed by the \emph{classic} CG algorithm in the finite precision framework are
\begin{equation} \label{eq:recs_xandr}
	\bar{x}_{j+1} = \bar{x}_j + \bar{\alpha}_j \bar{p}_j + \xi_{j+1}^{\bar{x}} , \qquad 
	\bar{r}_{j+1} = \bar{r}_j - \bar{\alpha}_j A\bar{p}_j + \xi_{j+1}^{\bar{r}} , \qquad
\end{equation}
where $\xi_{j+1}^{\bar{x}}$ and $\xi_{j+1}^{\bar{r}}$ 
represent local rounding errors, and where $\bar{s}_j = A \bar{p}_j$. We refer to the analysis in \cite{carson2016numerical,cools2018analyzing} for bounds on the norms of these local rounding errors.
It follows directly from \eqref{eq:recs_xandr} that in classic CG the residual gap is
\begin{equation} \label{eq:f_CG}
  \bar{\bold{r}}_{j+1} - \bar{r}_{j+1} = (b-A\bar{x}_{j+1}) - \bar{r}_{j+1} =  \bar{\bold{r}}_{j} - \bar{r}_j - A\xi_{j+1}^{\bar{x}} - \xi_{j+1}^{\bar{r}} = \bar{\bold{r}}_{0} - \bar{r}_{0} - \sum_{k=0}^{j} ( A  \xi_{k+1}^{\bar{x}} + \xi_{k+1}^{\bar{r}} ).
\end{equation}
In each iteration local rounding errors of the form $A\xi_{j+1}^{\bar{x}} + \xi_{j+1}^{\bar{r}}$ add to the gap on the residual.
Thus, local rounding errors are merely accumulated in the classic CG algorithm, and no amplification of rounding errors occurs. To avoid confusion we stress that \eqref{eq:recs_xandr}-\eqref{eq:f_CG} apply to classic CG only.

We now turn towards analyzing the p($l$)-CG method in a finite precision framework.
Consider the faulty variant of recurrence relation \eqref{eq:x_a2} for $\bar{x}_j$ in p($l$)-CG, Alg.\,\ref{algo:plCG}, in finite precision:
\begin{equation} \label{eq:exp_x}
	\bar{x}_j = \bar{x}_{j-1} + \bar{\zeta}_{j-1} \bar{p}_{j-1} + \xi^{\bar{x}}_j 
						= \bar{x}_0 + \bar{P}_{j} \bar{q}_j + \Theta_j^{\bar{x}} \, \boldsymbol{1},
\end{equation}
where $\bar{q}_{j} = (\bar{\zeta}_0,\ldots,\bar{\zeta}_{j-1})^T$ is characterized by the entries $\bar{\zeta}_j$ which are computed explicitly in Alg.\,\ref{algo:plCG}, 
$\Theta_j^{\bar{x}} = [\xi^{\bar{x}}_1,\xi^{\bar{x}}_1,\ldots,\xi^{\bar{x}}_{j}]$ with $\|\xi^{\bar{x}}_j\| \leq (\|\bar{x}_{j-1}\| + 2 \, |\bar{\zeta}_{j-1}| \, \|\bar{p}_{j-1}\|) \, \epsilon$ are local rounding errors, and $\boldsymbol{1} = [1,1,\ldots,1]^T$. 
Similarly, the finite precision recurrence relation for $\bar{p}_j$ in Alg.\,\ref{algo:plCG} is
\begin{equation} \label{eq:exp_p}
	\bar{p}_j = (\bar{v}_{j} - \bar{\delta}_{j-1} \bar{p}_{j-1})/\bar{\eta}_j + \xi^{\bar{p}}_j \qquad \Leftrightarrow \qquad \bar{V}_{j+1} = \bar{P}_{j+1} \bar{U}_{j+1} + \Theta^{\bar{p}}_{j+1},
\end{equation} 
where $\bar{U}_{j}$ is the upper triangular factor of $\bar{T}_{j,j} = \bar{L}_{j} \bar{U}_{j}$ and $ \Theta^{\bar{p}}_{j+1} = -[\bar{\eta}_0\xi^{\bar{p}}_0,\bar{\eta}_1\xi^{\bar{p}}_1,\ldots,\bar{\eta}_j\xi^{\bar{p}}_j]$ with $\|\xi^{\bar{p}}_j\| \leq (2 / \bar{\eta}_j \, \|\bar{v}_{j-1}\|  + 3 \, |\bar{\delta}_{j-1}|/ \bar{\eta}_j  \, \|\bar{p}_{j-1}\|) \, \epsilon$ are local rounding errors.

Substitution of expression \eqref{eq:exp_p}, i.e.~$\bar{P}_j = (\bar{V}_j - \Theta_j^{\bar{p}}) \, \bar{U}^{-1}_{j}$, 
 into equation \eqref{eq:exp_x} yields
\begin{equation} \label{eq:exp_x2}
	\bar{x}_j 
						= \bar{x}_0 + \bar{V}_j \, \bar{U}^{-1}_{j} \bar{q}_j - \Theta_j^{\bar{p}} \, \bar{U}^{-1}_{j} \bar{q}_j + \Theta_j^{\bar{x}} \, \boldsymbol{1}.
\end{equation}
Consequently, the actual residual $\bar{\bold{r}}_j$ can be written as
\begin{align} \label{eq:extra_res3}
	\bar{\bold{r}}_j = b-A\bar{x}_j =  \bar{\bold{r}}_0 - A \bar{V}_j \bar{U}^{-1}_{j} \bar{q}_j + A \Theta_j^{\bar{p}} \bar{U}^{-1}_{j} \bar{q}_j - A \Theta_j^{\bar{x}} \, \boldsymbol{1}.
\end{align}
In this expression the computed basis vectors $\bar{v}_{j}$ are calculated from the finite precision variant of the recurrence relation \eqref{eq:recur_v}, i.e.
\begin{equation} \label{eq:vbar_rec}
  \bar{v}_{j+1} = \left( \bar{z}_{j+1} - \sum_{k=j-2l+1}^{j} \bar{g}_{k,j+1} \bar{v}_{k} \right) / \bar{g}_{j+1,j+1} + \xi^{\bar{v}}_{j+1}, \qquad 0 \leq j < i-l,
\end{equation}
where the size of the local rounding errors $\xi^{\bar{v}}_{j+1}$ can be bounded in terms of the machine precision $\epsilon$ as $  \| \xi^{\bar{v}}_{j+1} \| \leq ( 2 \, \|\bar{z}_{j+1}\|/|\bar{g}_{j+1,j+1}| + 3 \sum_{k=j-2l+1}^{j} |\bar{g}_{k,j+1}|/|\bar{g}_{j+1,j+1}| \, \|\bar{v}_k\| ) \epsilon$.
On the other hand, for any $j \geq 0$ the actual basis vector $\bar{\bold{v}}_{j+1}$ satisfies the Lanczos relation exactly, that is, it is defined as
\begin{equation} \label{eq:v_arnoldi}
  \bar{\bold{v}}_{j+1}= ( A\bar{v}_{j} - \bar{\gamma}_j \bar{v}_j - \bar{\delta}_{j-1} \bar{v}_{j-1} ) / \bar{\delta}_j, \qquad 0 \leq j < i-l.
\end{equation}
For $j = 0$ it is assumed that $\bar{v}_{-1} = 0$.
By subtracting the computed basis vector $\bar{v}_{j+1}$ from both sides of the equation \eqref{eq:v_arnoldi}, it is easy to see that this relation alternatively translates to
\begin{equation*}
  A\bar{v}_{j} = \bar{\delta}_{j-1} \bar{v}_{j-1} + \bar{\gamma}_j \bar{v}_j +  \bar{\delta}_j \bar{v}_{j+1} + \bar{\delta}_j ( \bar{\bold{v}}_{j+1} - \bar{v}_{j+1} ) , \qquad 0 \leq j < i-l
\end{equation*}
or written in matrix notation:
\begin{equation} \label{eq:AV_BAR}
  A \bar{V}_j = \bar{V}_{j+1} \bar{T}_{j+1,j} + (\bar{\bold{V}}_{j+1}-\bar{V}_{j+1}) \bar{\Delta}_{j+1,j}, \qquad 1 \leq j \leq i-l,
\end{equation}
where $\bar{\Delta}_{j+1,j}$ is
\begin{equation*}
\bar{\Delta}_{j+1,j} =
\left(\begin{array}{cccc} 
0&&& \\
\bar{\delta}_0&0&& \\
&\bar{\delta}_1&0& \\
&&\ddots&0 \\
&&&\bar{\delta}_{j-1}
\end{array}\right).
\end{equation*}
We call $\bar{\bold{V}}_{j+1}-\bar{V}_{j+1}$ the `gaps' on the computed basis vectors, in analogy to the residual gaps.
Expression \eqref{eq:AV_BAR} enables to further work out the expression \eqref{eq:extra_res3} for the actual residual as follows:
\begin{align} \label{eq:extra_res5}
	\bar{\bold{r}}_j &= \bar{\bold{r}}_0 - \bar{V}_{j+1} \bar{T}_{j+1,j} \bar{U}^{-1}_{j} \bar{q}_j + (\bar{\bold{V}}_{j+1}-\bar{V}_{j+1}) \bar{\Delta}_{j+1,j} \bar{U}^{-1}_{j} \bar{q}_j + A \Theta_j^{\bar{p}} \bar{U}^{-1}_{j} \bar{q}_j - A \Theta_j^{\bar{x}} \, \boldsymbol{1} \notag \\
									 &= \bar{r}_j + (\bar{\bold{r}}_0 - \bar{r}_0) - (\bar{\bold{V}}_{j+1}-\bar{V}_{j+1}) \bar{\Delta}_{j+1,j} \bar{U}^{-1}_{j} \bar{q}_j + A \Theta_j^{\bar{p}} \bar{U}^{-1}_{j} \bar{q}_j - A \Theta_j^{\bar{x}} \, \boldsymbol{1} 
\end{align}
where the implicitly computed residual $\bar{r}_j$ corresponds to the residual norm $\|\bar{r}_j\| = |\bar{\zeta}_j|$, see Theorem \ref{th:resnorm}; it is defined as
\begin{equation} \label{eq:extra_res6}
	\bar{r}_j 
	= \bar{r}_0 - \bar{V}_{j+1} \bar{T}_{j+1,j} \bar{U}_{j}^{-1} \bar{q}_{j} 
	= -\bar{\delta}_{j-1} (e_j^T \bar{U}_{j}^{-1} \bar{q}_j) \, \bar{v}_j 
	=  \bar{\zeta}_j \, \bar{v}_j.  
\end{equation}
Expression \eqref{eq:extra_res5} indicates that the gap between $\bar{\bold{r}}_j$ and $\bar{r}_j$ critically depends on the basis vector gaps $\bar{\bold{V}}_{j+1} - \bar{V}_{j+1}$. We therefore proceed by analyzing the gap on the basis vectors in p($l$)-CG.

By setting $\Theta_j^{\bar{v}} = [\theta^{\bar{v}}_0,\theta^{\bar{v}}_1,\ldots,\theta^{\bar{v}}_{j-1}] := [0, \, \bar{g}_{1,1}\xi^{\bar{v}}_1, \, \ldots, \, \bar{g}_{j-1,j-1}\xi^{\bar{v}}_{j-1}]$, we obtain from \eqref{eq:vbar_rec} the matrix expression
\begin{equation} \label{eq:Z_BAR}
  \bar{Z}_j = \bar{V}_j \bar{G}_j + \Theta^{\bar{v}}_j, \qquad 1 \leq j \leq i-l.
\end{equation}
The computed auxiliary basis vector $\bar{z}_{j+1}$ satisfies a finite precision version of the recurrence relation \eqref{zid3}, which pours down to
\begin{equation} \label{eq:zbar_rec}
  \bar{z}_{j+1} =
	\left\{ 
		\begin{matrix}
			(A-\sigma_jI) \, \bar{z}_j + \xi^{\bar{z}}_{j+1}, & 0 \leq j < l \\ 
			( A\bar{z}_{j} - \bar{\gamma}_{j-l} \bar{z}_j - \bar{\delta}_{j-l-1} \bar{z}_{j-1} ) / \bar{\delta}_{j-l} + \xi^{\bar{z}}_{j+1}, & l \leq j < i,
		\end{matrix}
	\right. 
\end{equation}
where the local rounding errors $\xi^{\bar{z}}_{j+1}$ are bounded by
\begin{equation*}
\| \xi^{\bar{z}}_{j+1} \| \leq
	\left\{ 
		\begin{matrix}
			\mu \sqrt{n} \, \|A-\sigma_j I\| \, \|\bar{z}_j\| \, \epsilon, & 0 \leq j < l , \\
			 \left( (\mu \sqrt{n}+2) \frac{\|A\|}{|\bar{\delta}_{j-l}|} \|\bar{z}_j\| + 3 \frac{|\bar{\gamma}_{j-l}|}{|\bar{\delta}_{j-l}|} \|\bar{z}_j\| + 3 \frac{|\bar{\delta}_{j-l-1}|}{|\bar{\delta}_{j-l}|} \|\bar{z}_{j-1}\| \right) \epsilon, & l \leq j < i.
		\end{matrix}
	\right. 		
\end{equation*}
Here $n$ is the number of rows/columns in the matrix $A$ and $\mu$ is the maximum number of non-zeros over all rows of $A$. 
Expression \eqref{eq:zbar_rec} can be summarized in matrix notation as:
\begin{equation} \label{eq:AZ_BAR}
  A \bar{Z}_j = \bar{Z}_{j+1} \bar{B}_{j+1,j} + \Theta^{\bar{z}}_j, \qquad 1 \leq j \leq i,
\end{equation}
where $\Theta^{\bar{z}}_j = [\theta^{\bar{z}}_0,\theta^{\bar{z}}_1,\ldots,\theta^{\bar{z}}_{j-1}]$, with $\theta^{\bar{z}}_{k} = \xi^{\bar{z}}_{k+1}$ for $0 \leq k < l$ and $\theta^{\bar{z}}_{k} = \bar{\delta}_{k-l}\xi^{\bar{z}}_{k+1}$ for $l \leq k < i$.
Furthermore, the recursive definitions of the scalar coefficients $\bar{\gamma}_j$ and $\bar{\delta}_j$ in Alg.\,\ref{algo:plCG} imply that in iteration $i$ the following matrix relations hold:
\begin{equation} \label{eq:rec_coeff}
  \bar{G}_{j+1}  \bar{B}_{j+1,j} = \bar{T}_{j+1,j} \bar{G}_j, \qquad 1 \leq j \leq i-l.
\end{equation}
The gap $\bar{\bold{V}}_{j+1} - \bar{V}_{j+1}$ can now be computed by combining the above expressions. It holds that\footnote{Note that the Moore-Penrose (left) pseudo-inverse $\bar{\Delta}_{j+1,j}^{+} = (\bar{\Delta}_{j+1,j}^* \bar{\Delta}_{j+1,j})^{-1} \bar{\Delta}_{j+1,j}^*$ of the lower diagonal matrix $\bar{\Delta}_{j+1,j}$ in expression \eqref{eq:gap_plcg2} is an upper diagonal matrix, where $\bar{\Delta}_{j+1,j}^*$ is its Hermitian transpose.}
\begin{align} \label{eq:gap_plcg2}
\bar{\bold{V}}_{j+1} - \bar{V}_{j+1} 
				&\stackrel{\eqref{eq:AV_BAR}}{=}  (A \bar{V}_j - \bar{V}_{j+1} \bar{T}_{j+1,j}) \bar{\Delta}^{+}_{j+1,j} \notag \\
				&\stackrel{\eqref{eq:Z_BAR}}{=} (A \bar{Z}_j \bar{G}^{-1}_j - \bar{Z}_{j+1} \bar{G}^{-1}_{j+1} \bar{T}_{j+1,j} - A \Theta^{\bar{v}}_j \bar{G}^{-1}_j  + \Theta^{\bar{v}}_{j+1} \bar{G}^{-1}_{j+1} \bar{T}_{j+1,j}) \bar{\Delta}^{+}_{j+1,j} \notag \\
				&\stackrel{\eqref{eq:rec_coeff}}{=} (A \bar{Z}_j \bar{G}^{-1}_j - \bar{Z}_{j+1} \bar{B}_{j+1,j} \bar{G}^{-1}_j - A \Theta^{\bar{v}}_j \bar{G}^{-1}_j  + \Theta^{\bar{v}}_{j+1} \bar{B}_{j+1,j} \bar{G}^{-1}_j) \bar{\Delta}^{+}_{j+1,j} \notag \\
				&\stackrel{\eqref{eq:AZ_BAR}}{=} (\Theta^{\bar{z}}_j  - A \Theta^{\bar{v}}_j  + \Theta^{\bar{v}}_{j+1}  \bar{B}_{j+1,j}) \, \bar{G}^{-1}_j \, \bar{\Delta}^{+}_{j+1,j}.
\end{align}
Consequently, it is clear that in p($\ell$)-CG the local rounding errors in $\Theta^{\bar{z}}_j$, $A \Theta^{\bar{v}}_j$ and $\Theta^{\bar{v}}_{j+1} \bar{B}_{j+1,j}$ are possibly amplified by the entries of the matrix $\bar{G}^{-1}_j\bar{\Delta}^{+}_{j+1,j}$. 
This in turn 
implies an amplification of local rounding errors in expression \eqref{eq:extra_res5}, leading to reduced maximal attainable accuracy.
Since $\bar{\Delta}^{+}_{j+1,j}$ is a diagonal matrix, the propagation of local rounding errors in p($l$)-CG is primarily governed by the inverse of the (finite precision variant of the) basis transformation matrix $\bar{G}_j$.

\subsection{Bounding the basis vector gaps in finite precision p($l$)-CG} \label{sec:bounding}

As the matrix $\bar{G}^{-1}_j$ fulfills a crucial role in the propagation of local rounding errors in p($\ell$)-CG, see \eqref{eq:gap_plcg2}, we aim to establish 
bounds on the maximum norm of $\bar{G}^{-1}_j$ in this section. Consider the norm 
\begin{equation}
	\| \bar{G}^{-1}_j \|_{\max} = \max_{k,l} | \bar{G}^{-1}_j(k,l) |, 
\end{equation}
which characterizes the propagation of local rounding errors in the $\bar{V}_j$ basis in p($l$)-CG. When the maximum norm is larger than one local rounding errors may be amplified, see expression \eqref{eq:gap_plcg2}.

The inverse of the banded matrix $\bar{G}_j$ is an upper triangular matrix, which can be expressed as
\begin{equation} \label{eq:expan}
	\bar{G}^{-1}_j = \sum_{k=0}^{j-1} \left( - \bar{\Lambda}_j^{-1} \bar{G}_j^{\triangle} \right)^{k} \bar{\Lambda}_j^{-1},
\end{equation}
where $\bar{\Lambda}_j := [\delta_{mk} \bar{g}_{m,k}]$ contains the diagonal of $\bar{G}_j$ and $\bar{G}_j^{\triangle} := \bar{G}_j - \bar{\Lambda}_j$ is the strictly upper triangular part of $\bar{G}_j$.
\begin{lemma} \label{lemma:lemma_0}
	Assume $1 \leq j \leq i-l+1$ such that the basis $\bar{V}_j$ is orthonormal. Let the Krylov subspace basis transformation matrix $\bar{G}_j$ be defined by $ \bar{G}_j = \bar{V}^T_j \bar{Z}_j + \bar{V}^T_j \Theta_j^{\bar{v}}$ for $1 \leq j \leq i-l+1$ as in \eqref{eq:Z_BAR}. Then it holds that
	\begin{equation} \label{eq:est}
		\|\bar{G}_j\|_{\max} \leq  \|P_l(A)\| + \|\xi_k^{\bar{z}}\| + \|\theta_k^{\bar{v}}\|. 
	\end{equation}
\end{lemma}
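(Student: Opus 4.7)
The plan is to argue entrywise. Since by hypothesis $\bar{G}_j = \bar{V}_j^T \bar{Z}_j + \bar{V}_j^T \Theta_j^{\bar{v}}$, a generic entry reads $\bar{g}_{k,m} = (\bar{v}_k,\bar{z}_m) + (\bar{v}_k,\theta_m^{\bar{v}})$, so the max-norm $\|\bar{G}_j\|_{\max}$ is the supremum over $k,m$ of $|\bar{g}_{k,m}|$. The orthonormality assumption on $\bar{V}_j$ gives $\|\bar{v}_k\|=1$, so Cauchy--Schwarz immediately delivers $|(\bar{v}_k,\theta_m^{\bar{v}})| \leq \|\theta_m^{\bar{v}}\|$, which already accounts for the $\|\theta_k^{\bar{v}}\|$ contribution on the right-hand side of \eqref{eq:est}. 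The real work is controlling the inner product $(\bar{v}_k,\bar{z}_m)$.

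The strategy for that term is to compare the finite-precision vector $\bar{z}_m$ to its exact-arithmetic counterpart $P_l(A)\bar{v}_{m-l}$ (and $P_m(A)\bar{v}_0$ when $m \leq l$). I would unroll the finite-precision recurrence \eqref{eq:zbar_rec} / its matrix form \eqref{eq:AZ_BAR} to write $\bar{z}_m = P_l(A)\bar{v}_{m-l} + \varepsilon_m$, where $\varepsilon_m$ collects the local rounding errors $\xi^{\bar{z}}$ introduced at each of the $l$ (or $m$, for small $m$) recurrence steps and propagated through the remaining shifted-polynomial or rescaled three-term factors. One then applies Cauchy--Schwarz once more: the clean part is bounded by $|(\bar{v}_k,P_l(A)\bar{v}_{m-l})| \leq \|P_l(A)\|\,\|\bar{v}_{m-l}\| = \|P_l(A)\|$, giving the leading term of \eqref{eq:est}, while the perturbation is bounded by $\|\varepsilon_m\|$, which in turn is to be absorbed into the $\|\xi_k^{\bar{z}}\|$ contribution.

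The main obstacle I expect is the honest bookkeeping of the two-regime recursion for $\bar{z}_m$. For $m \leq l$ the leg is the Newton-type $(A-\sigma_{m-1}I)\bar{z}_{m-1}$, so a perturbation introduced at step $s$ gets multiplied by the remaining shifted factors $\prod_{t>s}(A-\sigma_t I)$; for $m > l$ the rescaled three-term leg drags in further factors $\bar{\delta}^{-1}$ and $\bar{\gamma}$ from the Hessenberg data. One must verify that, modulo interpreting the free index $k$ on the right-hand side of \eqref{eq:est} as a maximum (or a sum) over the indices actually reached by the unrolling, the accumulated factors combine into the single $\|\xi_k^{\bar{z}}\|$ term advertised in the bound. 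Once this bookkeeping is settled, combining the three Cauchy--Schwarz estimates gives $|\bar{g}_{k,m}| \leq \|P_l(A)\| + \|\xi_k^{\bar{z}}\| + \|\theta_k^{\bar{v}}\|$ for every entry, and taking the max over $k,m$ yields \eqref{eq:est}.
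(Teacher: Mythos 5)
Your proposal is correct and follows essentially the same route as the paper: bound each entry $|\bar{g}_{m,k}|$ by Cauchy--Schwarz against the normalized $\bar{v}_m$, split $\bar{z}_k$ into the polynomial part $P_l(A)\bar{v}_{k-l}$ plus a rounding-error perturbation, and use $\|\bar{v}_{k-l}\|=1$ to extract $\|P_l(A)\|$. The recurrence-unrolling bookkeeping you flag as the main obstacle is precisely the step the paper elides --- it simply writes $\|\bar{z}_k\| \leq \|P_l(A)\bar{v}_{k-l} + \xi_k^{\bar{z}}\|$ with a single aggregate error term $\xi_k^{\bar{z}}$ (and with the free index $k$ understood as ranging over all columns), so your more careful accumulated-error accounting is, if anything, a tightening of the argument rather than a departure from it.
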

\begin{proof}
	Since for any $0 \leq m \leq i-l$ the vector $\bar{v}_m$ is normalized, i.e.~$\|\bar{v}_m\| = 1$, the following bound on the entries of $\bar{G}_j$ holds:
	\begin{align} \label{eq:est2}
	  |\bar{g}_{m,k}| &= |(\bar{z}_k,\bar{v}_m) - (\theta_k^{\bar{v}},\bar{v}_m)| 
				\leq \|\bar{z}_k\| + \|\theta_k^{\bar{v}}\| 
				\leq \|P_l(A) \bar{v}_{k-l} + \xi_k^{\bar{z}}\| + \|\theta_k^{\bar{v}}\| \notag \\
				&\leq \|P_l(A)\| + \|\xi_k^{\bar{z}}\| + \|\theta_k^{\bar{v}}\|, 
				\qquad \qquad \qquad \qquad \qquad \qquad 0 \leq m \leq i-l, \quad 0 \leq k \leq i.
	\end{align}
\end{proof}

We remark that the matrix $\bar{G}_j$ is not necessarily diagonally dominant, since the norms $\|\bar{z}_k\|$ are not guaranteed to be monotonically decreasing in p($\ell$)-CG. Consequently, if the bound in \eqref{eq:est} is tight, Lemma \ref{lemma:lemma_0} suggests that when $\|P_l(A)\|$ is large, $\|\bar{G}^{-1}_j\|_{\max}$ may be (much) larger than one. This observation leads to some interesting 
insights. First, note that the norm $\|\bar{G}_j\|_{\max}$ increases monotonically with respect to the iteration index $j$, since each iteration adds a new column to $\bar{G}_j$. Furthermore, an increase in $j$ also implies an increase in the number of terms in the summation in \eqref{eq:expan}. The norm $\|\bar{G}^{-1}_j\|_{\max}$ thus increases as a function of $j$. Secondly, an increasing pipeline length $l$ increases the number of non-zero diagonals in $\bar{G}_j$ significantly, see Lemma \ref{lemma:lemmaband} and Appendix \ref{sec:lemma}, which impacts the norm of the right-hand side expression in \eqref{eq:expan}. The pipeline length $l$ also relates directly to the norm $\|P_l(A)\|$ which bounds the norm $\|\bar{G}_j\|_{\max}$ in \eqref{eq:est}. Therefore, it is expected that $\| \bar{G}^{-1}_j \|_{\max}$ grows as a function of $l$ and the p($l$)-CG method will attain a less accurate maximal attainable precision with increasing $l$, see Fig.\,\ref{fig:residuals}. Finally, remark that it was also illustrated by Fig.\,\ref{fig:residuals} that the choice of the shifts $\{\sigma_0, \ldots, \sigma_{l-1}\}$ has a significant impact on the numerical stability of the p($l$)-CG method. A sub-optimal choice for the shifts may lead to a significant increase in the norm of the polynomial, $\|P_l(A)\|$, which bounds the propagation of local rounding errors in p($l$)-CG, see Lemma \ref{lemma:lemma_0}. The `optimal' shifts $\sigma_i$ ($0 \leq i < l$) are the roots of the degree $l$ Chebyshev polynomial, see \eqref{eq:cheb_shifts}, which minimize the 2-norm of $P_l(A)$, see \cite{greenbaum1994gmres,faber2010chebyshev}. 

\section{Experimental results} \label{sec:experiments}

Parallel performance measurements in this section result from a PETSc \cite{petsc-web-page} implementation of p($l$)-CG on a distributed memory machine using the message passing paradigm (MPI). The p($l$)-CG method is validated on a two-dimensional Laplacian PDE model with homogeneous Dirichlet boundary conditions, discretized using second order finite differences on a uniform $n = n_x \times n_y$ point discretization of the unit square. The resulting 5-point stencil Poisson problem forms the basis for many HPC applications to which the pipelined CG method can be applied. The conditioning of these types of systems is typically bad for large problem sizes, implying that iterative solution using Krylov subspace methods is non-trivial. Note that vectors are distributed uniformly across the number of available workers and stored locally (distributed memory). This implies that the matrix is partitioned by contiguous chunks of rows across processors (which is the default way in which PETSc partitions matrix operators).

\subsection{Parallel performance}

\paragraph{\underline{Test setup 1}} The first parallel strong scaling experiment is performed on a small cluster with $20$ compute nodes, consisting of two $6$-core Intel Xeon X5660 Nehalem $2.80$ GHz processors each (12 cores per node). Nodes are connected by $4\,\times\,$QDR InfiniBand technology (32 Gb/s point-to-point bandwidth).
We use PETSc version 3.6.3 \cite{petsc-web-page}. The MPI library used for this experiment is MPICH-3.1.3\footnote{\url{http://www.mpich.org/}}. The PETSc environment variables 
\texttt{MPICH\_ASYNC\_PROGRESS=1} and \texttt{MPICH\_MAX\_THREAD\_SAFETY=multiple} are set to ensure optimal parallelism by allowing for non-blocking global communication, see Section \ref{sec:hiding}.
A 2D Poisson type linear system with exact solution $\hat{x}_j = 1$ and right-hand side $b = A\hat{x}$ is solved, and the initial guess is chosen to be $\bar{x}_0 = 0$.
This benchmark problem is available in the PETSc distribution as example $2$ in the Krylov subspace solvers (KSP) folder. 
The simulation domain is discretized using $1000\times1000$ grid points (1 million unknowns). No preconditioner is applied in this experiment. 
The tolerance imposed on the scaled recursive residual norm $\|\bar{r}_i\|_2 / \|b\|_2$ is $10^{-5}$. For p($l$)-CG stabilizing Chebyshev shifts are used based on the interval $[\lambda_{\min}, \lambda_{\max}] = [0,8]$, see Remark \ref{remark:remark4}.

Fig.\,\ref{fig:speedup1} shows the speedup of CG, p-CG and p($l$)-CG (for different values of $l$) over CG on one node. Timings reported are the most favorable results (in the sense of smallest overall run-time) over 5 individual runs of each method. All methods perform the same number of iterations to reach the preset tolerance. For small numbers of nodes pipelined methods are typically slower than classic CG due to the computational overhead in the initial $l$ iterations 
and the extra vector operations in the basis vector recurrences. For large numbers of nodes pipelined methods outperform classic CG, for which speedup stagnates from 4 nodes onward on this problem setup. The p-CG method \cite{ghysels2014hiding} tends to scale slightly better than the 
p($1$)-CG variant. Scaling of length one pipelined variants also inevitably stagnates from a certain number of nodes onward. The p($l$)-CG methods with deeper pipelines ($l = 2,3$) 
continue to scale beyond the stagnation point of p($1$)-CG. The attainable speedup of p($3$)-CG on 20 nodes over classic CG on 20 nodes is roughly $5 \times$, whereas the p($1$)-CG method on 20 nodes is approximately $3 \times$ faster than classic CG on the same number of nodes. 
The theoretical maximum speedup of p($l$)-CG over CG 
is $t(\text{\textsc{spmv}} + 2\,\text{\textsc{glred}}) / \max(t(\text{\textsc{spmv}}),t(\text{\textsc{glred}})/l)$ where t(\textsc{spmv}) is the time to apply the matrix (and preconditioner) and t(\textsc{glred}) is the time of one global reduction phase, see Table \ref{tab:pipelcg}. This model neglects the time spent in local operations and thus holds only on large numbers of nodes. In the ideal p($l$)-CG scenario when $t(\text{\textsc{glred}}) =  l \, t(\text{\textsc{spmv}})$, the model suggests that the maximal speedup of p($l$)-CG over CG is a factor $(2l+1) \times$.

\begin{figure}
\begin{minipage}{0.48\textwidth}
\centering
\includegraphics[width=1.0\textwidth]{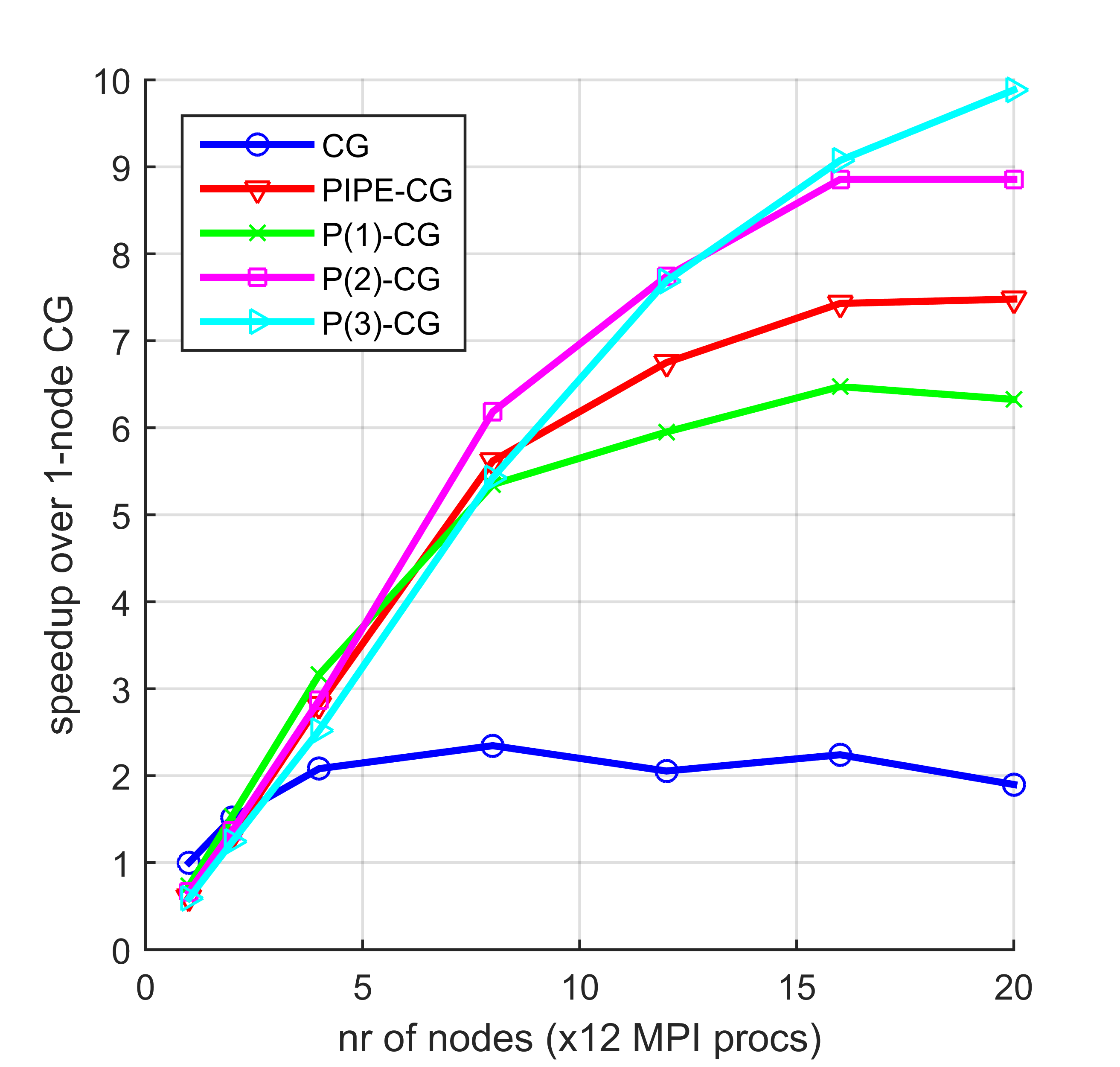}
\vspace{-0.5cm}
\caption{Strong scaling experiment on up to $20$ nodes ($240$ processes) for a 5-point stencil 2D Poisson problem with $1.000.000$ unknowns.
Speedup over single-node classic CG for various pipeline lengths. All methods converged to ${\|\bar{r}_i\|}_2/{\|b\|}_2 = 1.0\text{e-}5$ in 1342 iterations.}
\label{fig:speedup1}
\end{minipage}
\hfill
\begin{minipage}{0.48\textwidth}
\centering
\includegraphics[width=1.0\textwidth]{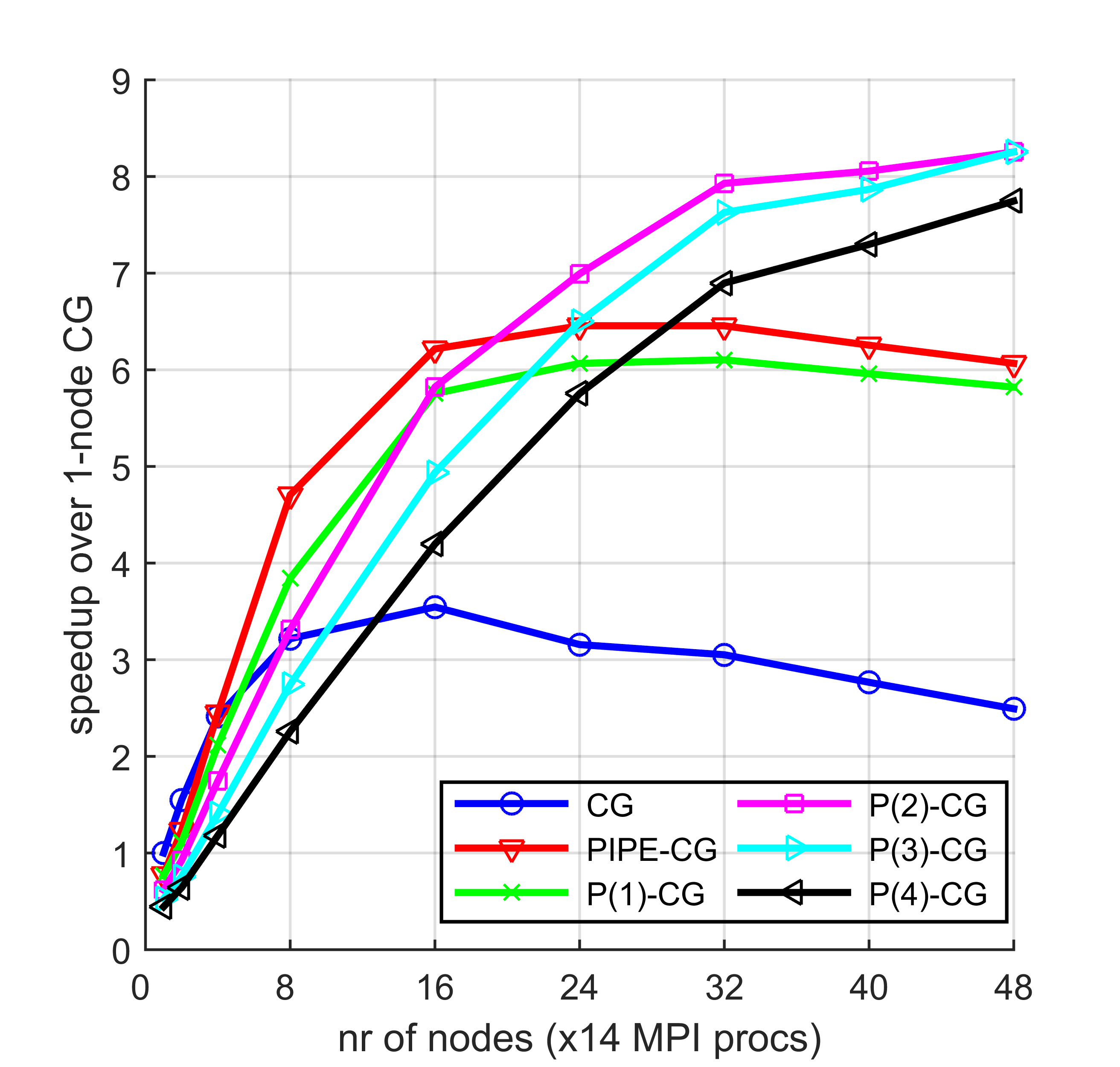}
\vspace{-0.5cm}
\caption{Strong scaling experiment on up to $48$ nodes ($672$ processes) for a 5-point stencil 2D Poisson problem with $3.062.500$ unknowns.
Speedup over single-node classic CG for various pipeline lengths. All methods performed $1500$ iterations with ${\|\bar{r}_i\|}_2/{\|b\|}_2 = 6.3\text{e-}4$.}
\label{fig:speedup2}
\end{minipage}
\end{figure}

\paragraph{\underline{Test setup 2}} A strong scaling experiment on a different hardware setup is shown in Fig.\,\ref{fig:speedup2}. Here a medium-sized cluster with $48$ compute nodes consisting of two $14$-core Intel E5-2680v4 Broadwell generation CPUs connected through an EDR InfiniBand network is used. PETSc version 3.7.6 and MPICH-3.3a2 are installed on the machine. 
A $1750\times 1750$ 2D Poisson type linear system with right-hand side $b = A\hat{x}$, where $\hat{x}_j = 1$, is solved on this system. 
The outcome of the 
 scaling experiment is shown in Fig.\,\ref{fig:speedup2}. The number of iterations was capped at 1500 for this problem, which is equivalent to a relative residual norm tolerance of $6.3$e-4. Similar observations as for Fig.\,\ref{fig:speedup1} can be made; the achievable parallel performance gain by using longer pipelines is apparent from the experiment. However, the balance between time spent in communication vs.~computation is clearly different from the experiment reported in Fig.\,\ref{fig:speedup1}. Note that on this problem for pipeline lengths $l \geq 4$ the computational overhead of the initial $l$ start-up iterations (in which the pipeline is filled) and the additional \textsc{axpy} operations required for the basis vector recurrences slow down the algorithm significantly. Hence, on up to 48 nodes the p($l$)-CG algorithm with $l = 2,3$ slightly outperforms the p($l$)-CG algorithm with $l \geq 4$. Deeper pipelined p($l$)-CG ($l \geq 4$) methods are expected to eventually scale further, achieving even better speedups beyond the number of nodes reported here.

\subsection{Preconditioning}

\begin{figure}
\begin{minipage}{0.48\textwidth}
\centering
\includegraphics[width=1.0\textwidth]{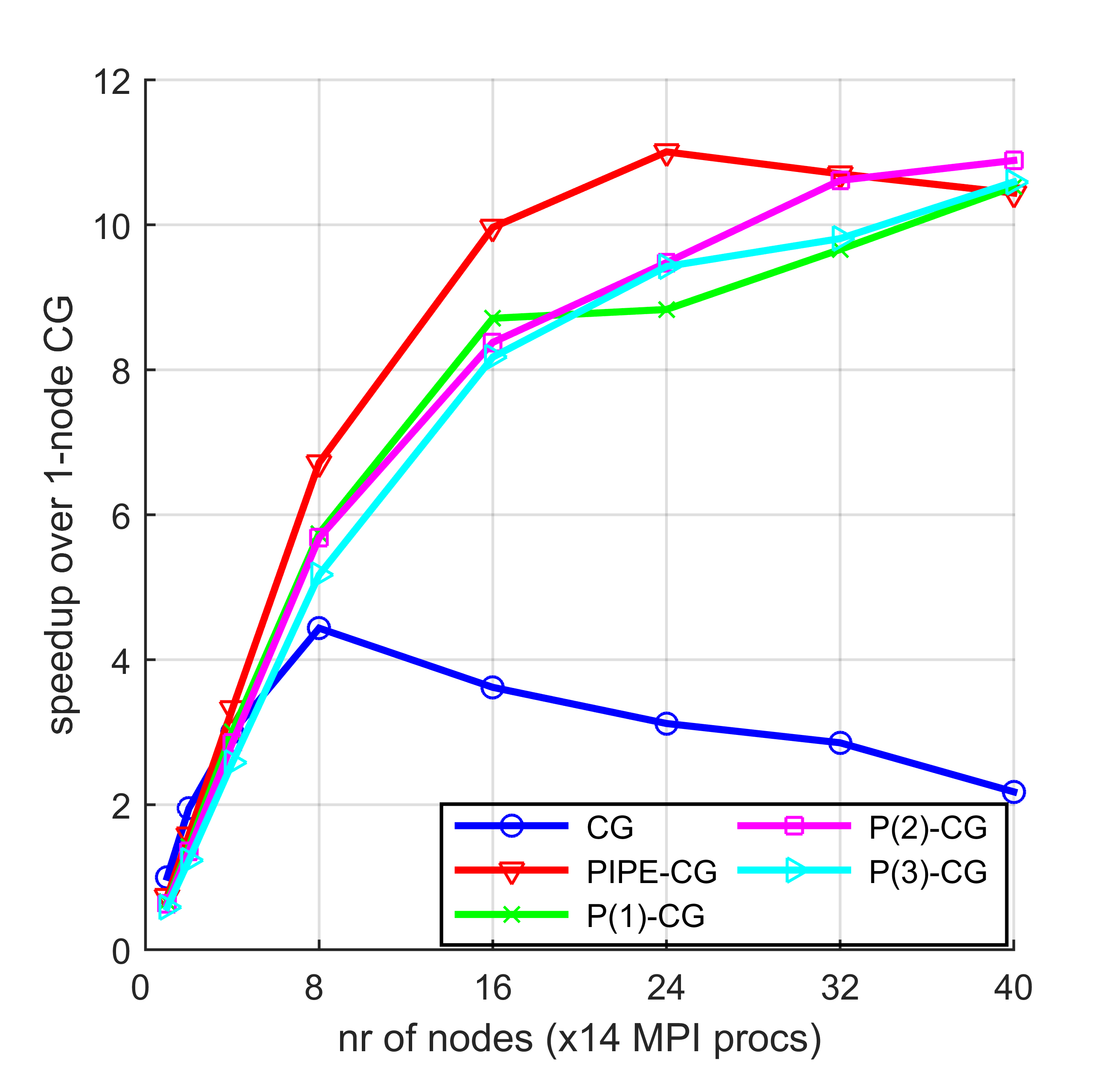}
\vspace{-0.5cm}
\caption{Strong scaling experiment on up to $40$ nodes ($560$ processes) for a block Jacobi preconditioned 2D Poisson problem with $3.062.500$ unknowns. All methods performed $600$ iterations with ${\|\bar{r}_i\|}_2/{\|b\|}_2 = 1.8\text{e-}4$ (on 1 node) and ${\|\bar{r}_i\|}_2/{\|b\|}_2 \leq 9.3\text{e-}4$ (on 40 nodes).}
\label{fig:speedup3}
\end{minipage}
\hfill
\begin{minipage}{0.48\textwidth}
\centering
\includegraphics[width=1.0\textwidth]{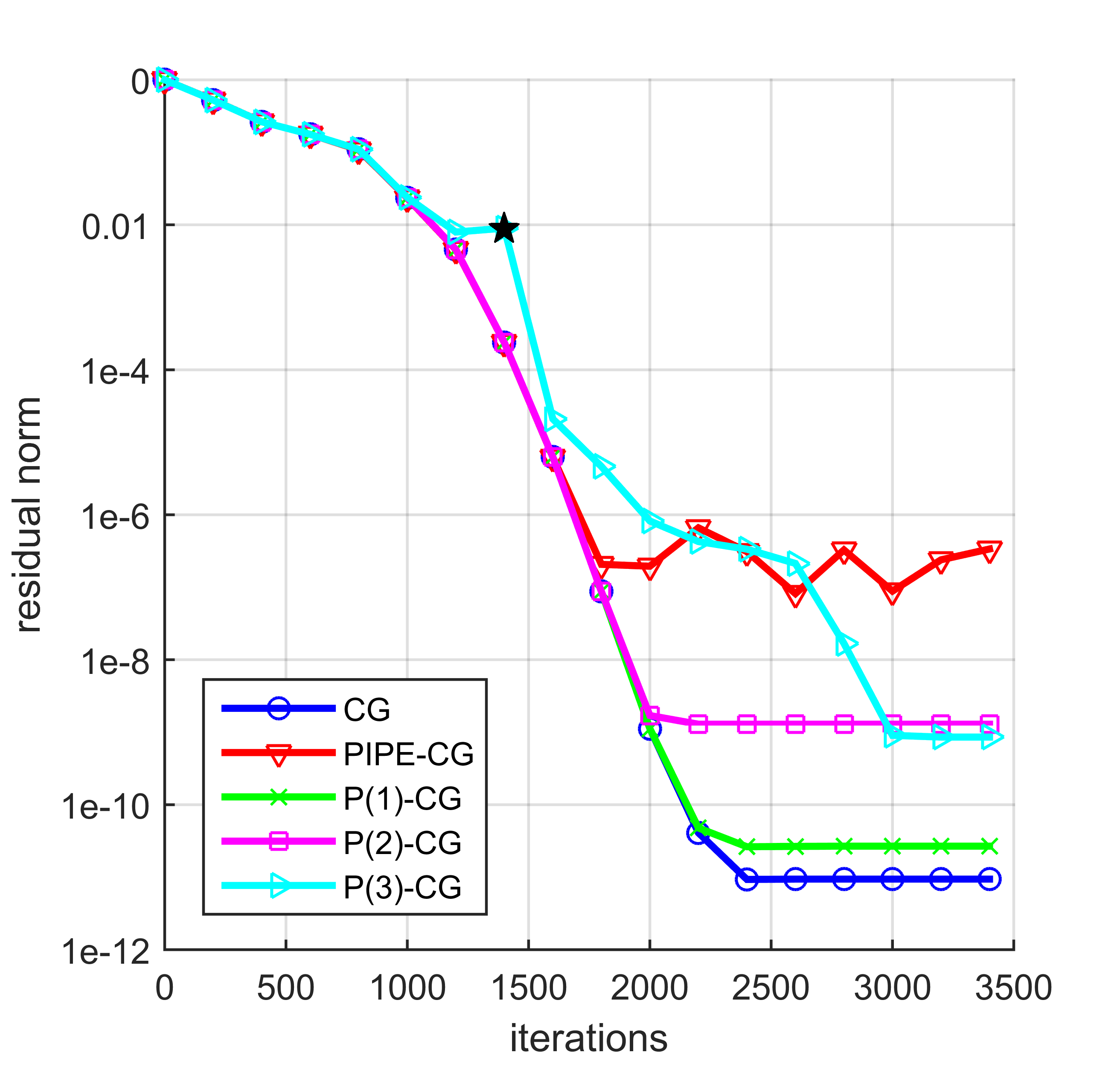}
\vspace{-0.5cm}
\caption{Accuracy experiment on 20 nodes (240 processes) for a 5-point stencil 2D Poisson problem with $1.000.000$ unknowns. 
Residual norm ${\|b-A\bar{x}_i\|}_2$ as a function of time spent by the algorithm. Minimal/maximal number of iterations is $200/3400$ for all methods.}
\label{fig:accuracy1}
\end{minipage}
\end{figure}

Fig.\,\ref{fig:speedup3} shows another parallel performance experiment in the setting of \emph{Test setup 2}. Contrary to Fig.\,\ref{fig:speedup2}, here a preconditioner is included and Alg.\,\ref{algo:plCG} is applied to solve the preconditioned system, see Section \ref{sec:preconditioned} for details. The preconditioner is a simple block Jacobi scheme, supplied to PETSc by the argument \texttt{-pc\_type bjacobi}, where the local blocks are approximately inverted using ILU. Its straightforward parallelism makes block Jacobi an ideal preconditioner for pipelined methods, although the convergence benefit of this preconditioner may deteriorate slightly as the number of nodes increases. For preconditioned p($l$)-CG Chebyshev shifts based on the interval $[\lambda_{\min}, \lambda_{\max}] = [0,1.5]$ are used, cf.\,Remark \ref{remark:remark4}.
After 600 iterations a relative residual accuracy ${\|\bar{r}_i\|}_2/{\|b\|}_2 \leq 1\text{e-}3$ is reached for all methods and node numbers. 
On 40 nodes the pipelined methods all show comparable speedups over classic CG. The p(2)-CG method outperforms p(1)-CG by a small fraction on 40 nodes. Compared to the unpreconditioned experiments reported in Fig.\,\ref{fig:speedup2} the performance gains of using longer pipelines are rather limited here. This can be understood by considering the balance between time spent in computations vs.\,communication. Figs.\,\ref{fig:breakdown1}-\ref{fig:breakdown2} give a detailed overview of the time spent in each phase for the CG and p(1)-CG methods respectively. 
In Fig.\,\ref{fig:breakdown2} the `\textsc{glred}' bar represents the part of the global communication phase that is not overlapped with the \textsc{spmv} and preconditioner. It is clear from Fig.\,\ref{fig:breakdown2} that on up to 32 nodes there is no more time to gain from overlapping the communication phase with more than one \textsc{spmv} and preconditioner application, and hence for this problem setup the additional speedup achievable by using pipelines with $l > 1$ compared to p(1)-CG is expected to be small, as illustrated in Fig.\,\ref{fig:speedup3}.

\subsection{Numerical accuracy}

Fig.\,\ref{fig:accuracy1} presents an accuracy experiment for \emph{Test setup 1}. The actual residual 2-norm ${\|b-A\bar{x}_i\|}_2$ is shown as a function of 
iterations. 
Fig.\,\ref{fig:accuracy1} indicates that the maximal attainable accuracy of the p($l$)-CG method decreases with growing pipeline lengths $l$. It was analyzed in \cite{cools2018analyzing,carson2016numerical} that the attainable accuracy of p-CG can be significantly worse compared to classic CG due to the propagation of local rounding errors in the vector recurrences. Whereas it is clear from the figure that the maximal attainable accuracy for p(2)-CG is worse than for p(1)-CG, the latter appears to be 
more 
robust to rounding errors 
compared to the 
length one pipelined p-CG method from \cite{ghysels2014hiding}. We again point out that the p-CG and p($l$)-CG methods are essentially different 
algorithms as stated in Remark \ref{remark:remark9}. Note that the p(3)-CG method encounters a square root breakdown in iteration 1393 (Fig.\,\ref{fig:accuracy1}, $\bigstar$) and consequently performs a restart. Due to the restart the accuracy 
attainable by p(3)-CG and p(2)-CG is comparable; however, the number of iterations required to reach it is considerably higher for p(3)-CG. 
Supplementary and insightful numerical experiments on maximal attainable accuracy for a variety of SPD matrices from the Matrix Market collection (\url{http://math.nist.gov/MatrixMarket/}) can be found in Table \ref{tab:matrix_market} in Appendix \ref{sec:supplementary}.

\begin{figure}
\begin{minipage}{0.48\textwidth}
\centering
\includegraphics[width=1.0\textwidth]{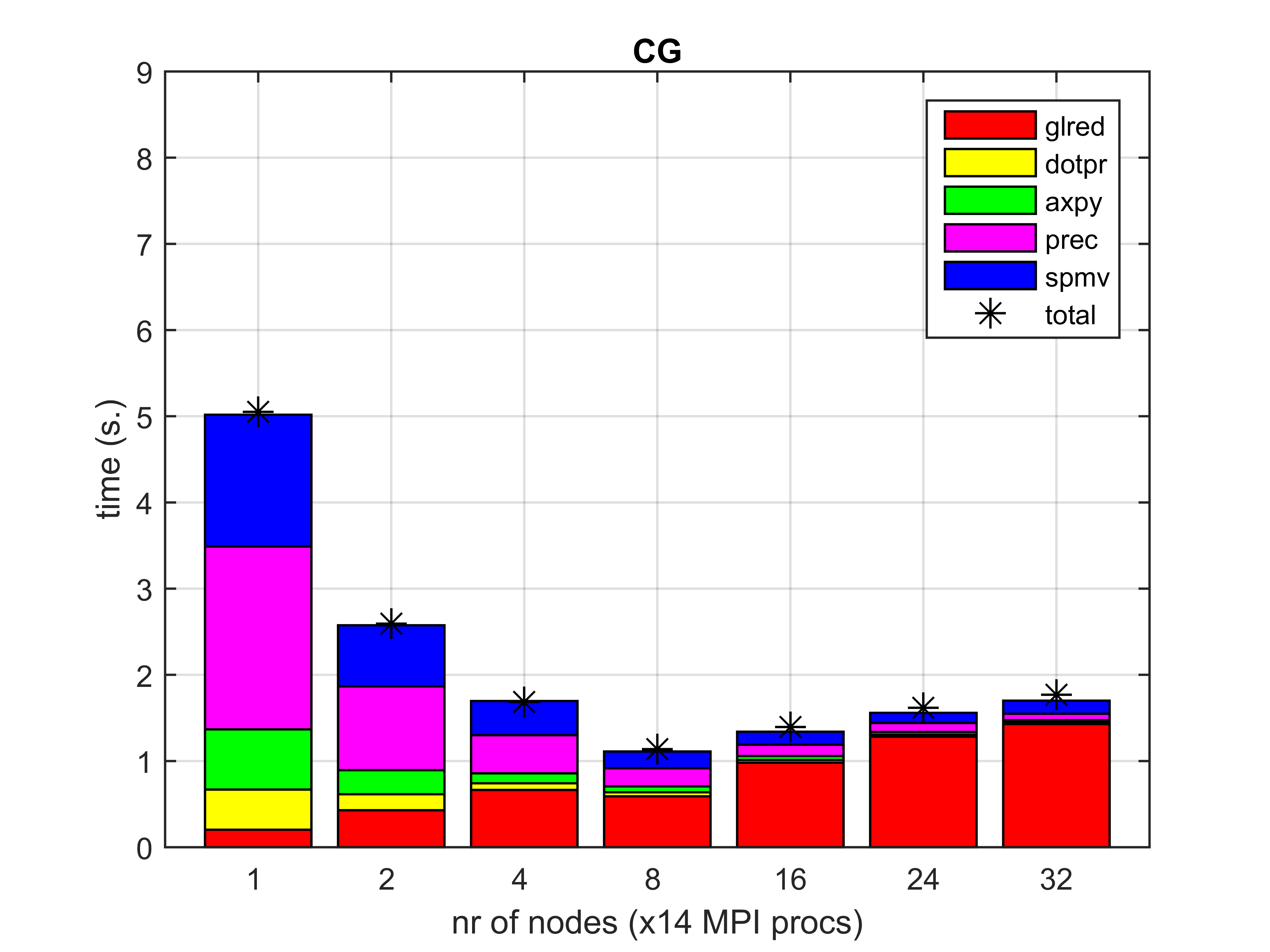}
\vspace{-0.5cm}
\caption{Detailed timing breakdown of the classic CG algorithm for the $3.062.500$ unknowns 2D Poisson strong scaling experiment in Fig.\,\ref{fig:speedup3} on up to $32$ nodes.}
\label{fig:breakdown1}
\end{minipage}
\hfill
\begin{minipage}{0.48\textwidth}
\centering
\includegraphics[width=1.0\textwidth]{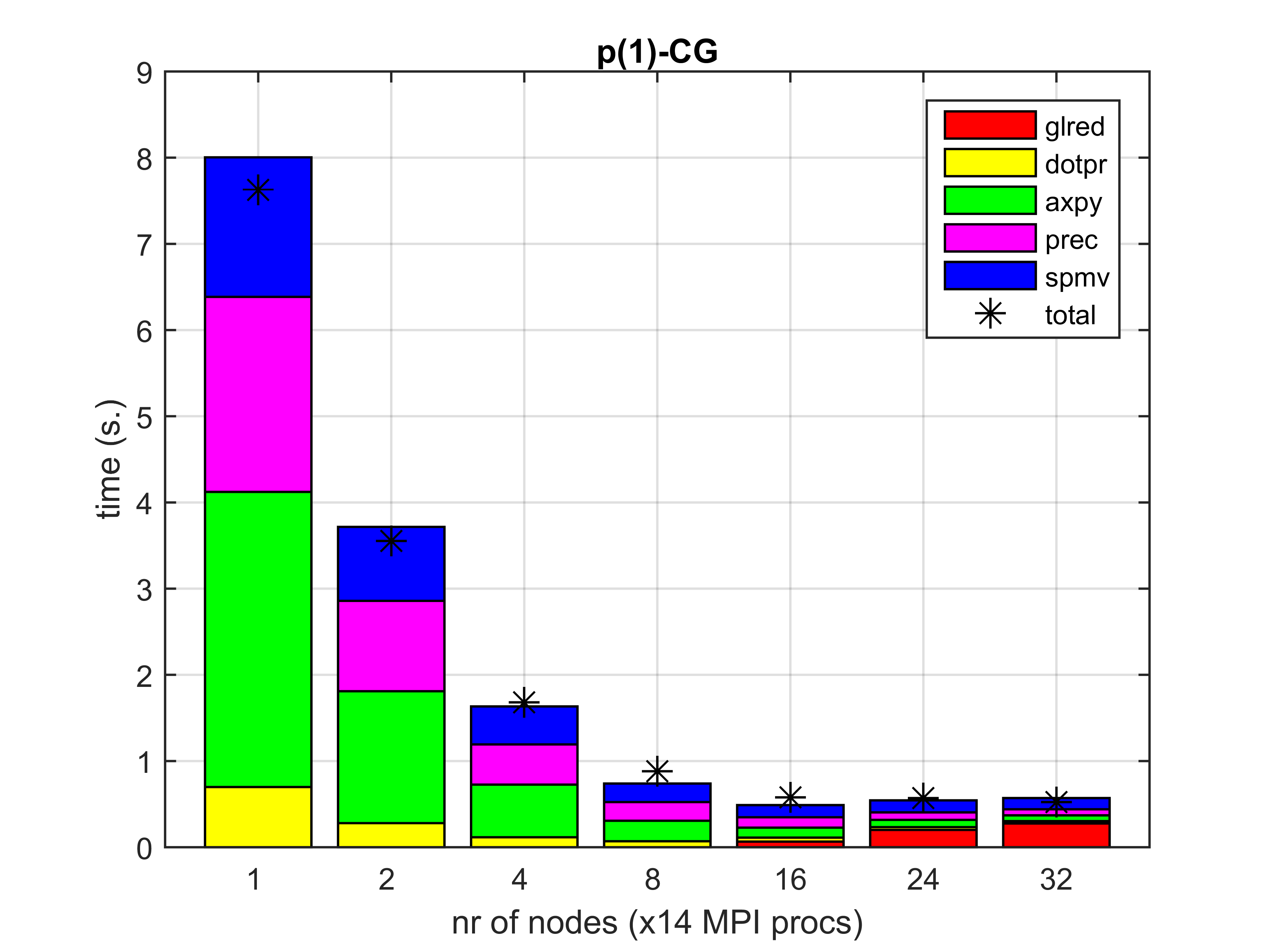}
\vspace{-0.5cm}
\caption{Detailed timing breakdown of the p(1)-CG algorithm for the $3.062.500$ unknowns 2D Poisson strong scaling experiment in Fig.\,\ref{fig:speedup3} on up to $32$ nodes.}
\label{fig:breakdown2}
\end{minipage}
\end{figure}

\begin{figure}
\begin{center}
\begin{tabular}{cc}
\includegraphics[width=0.47\textwidth]{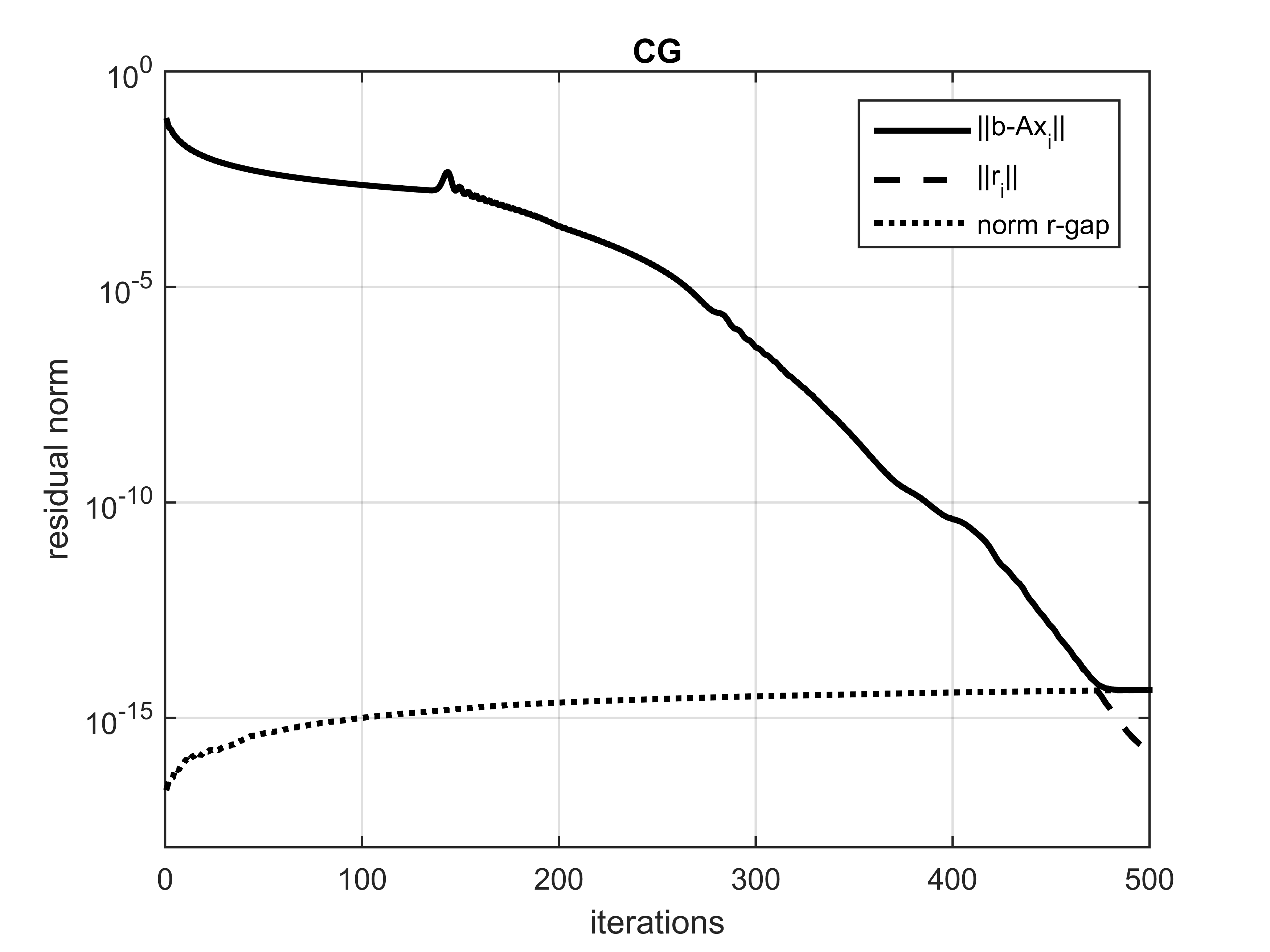} &
\includegraphics[width=0.47\textwidth]{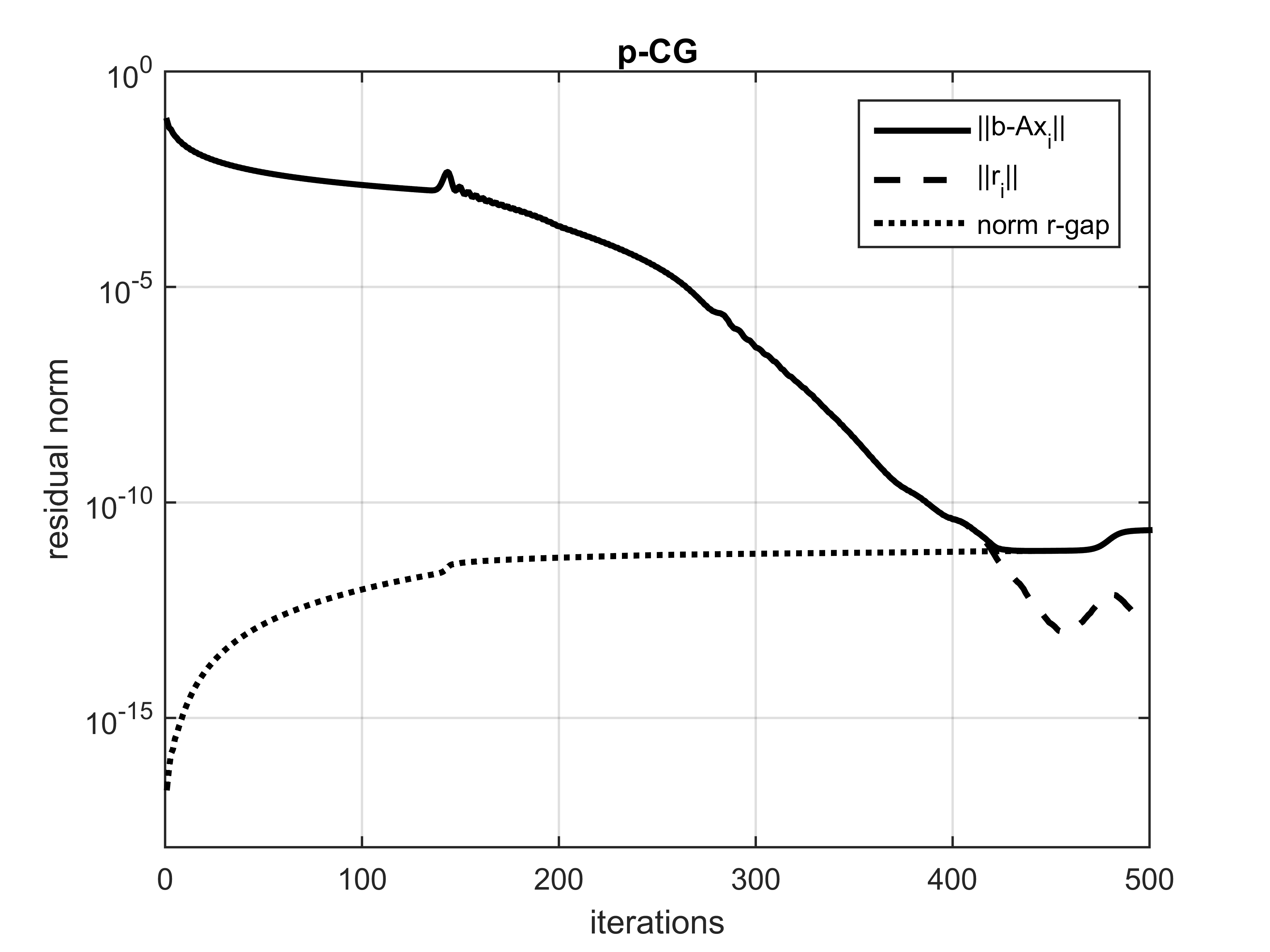} \\
\includegraphics[width=0.47\textwidth]{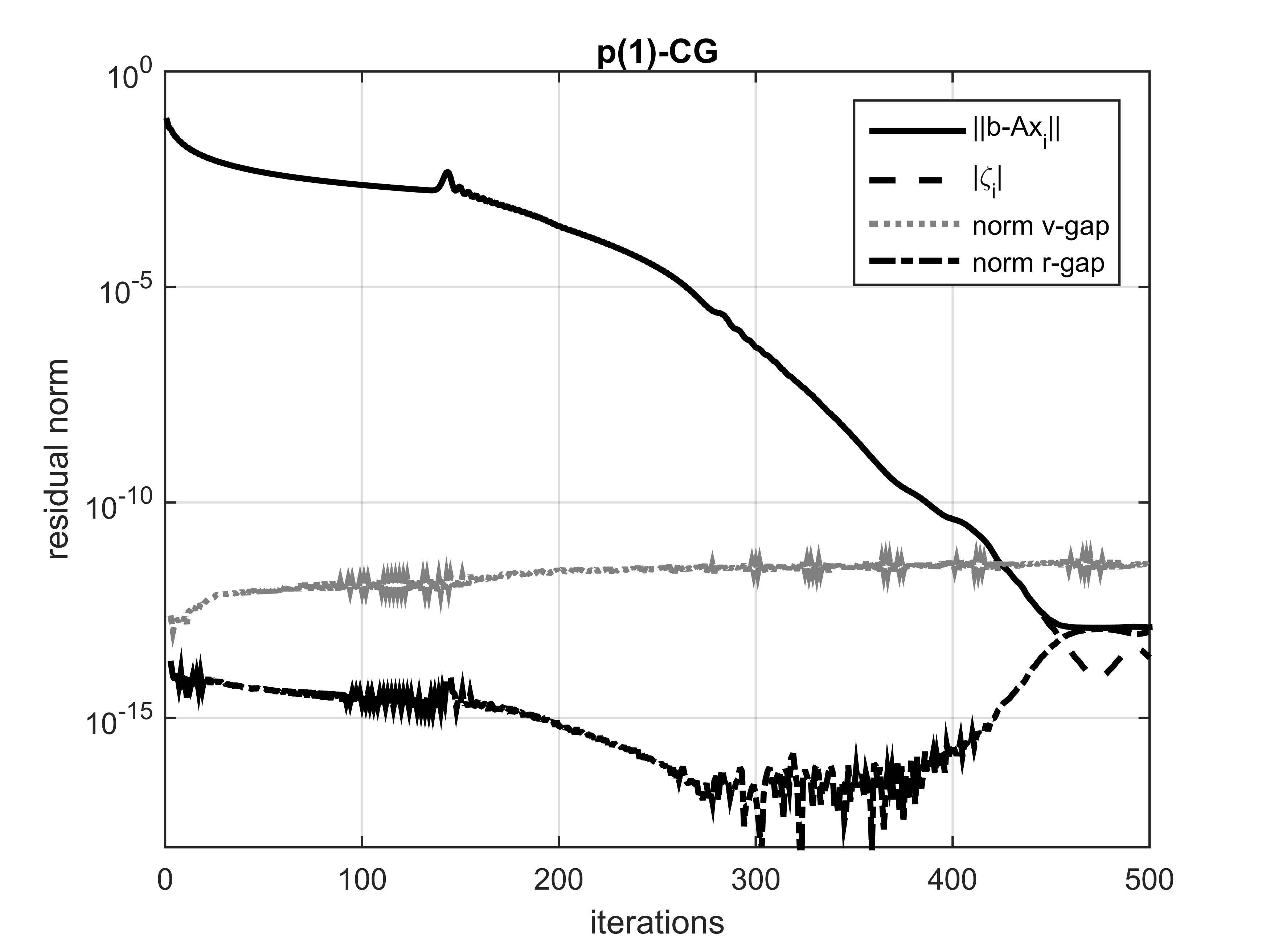} &
\includegraphics[width=0.47\textwidth]{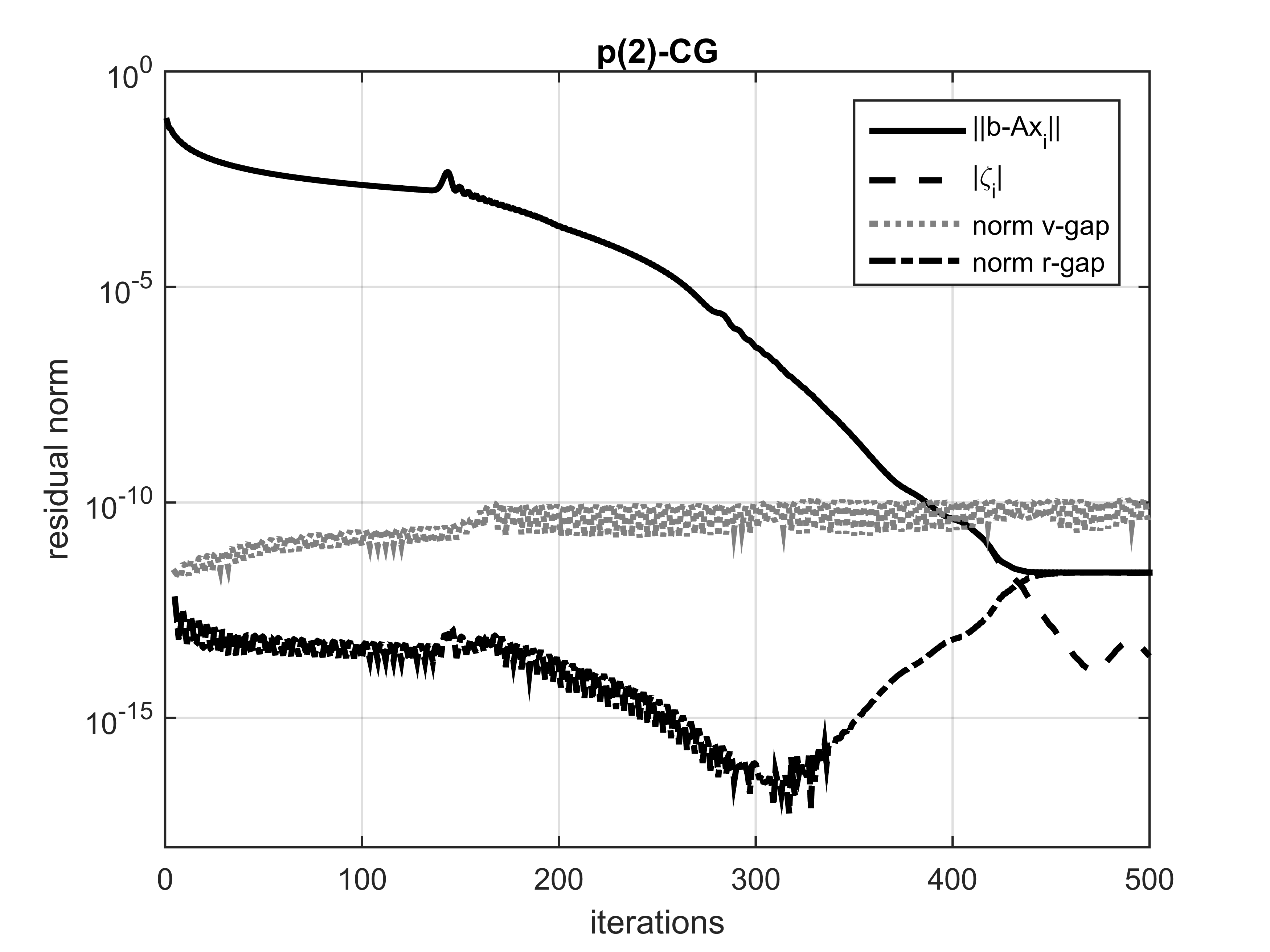} \\
\includegraphics[width=0.47\textwidth]{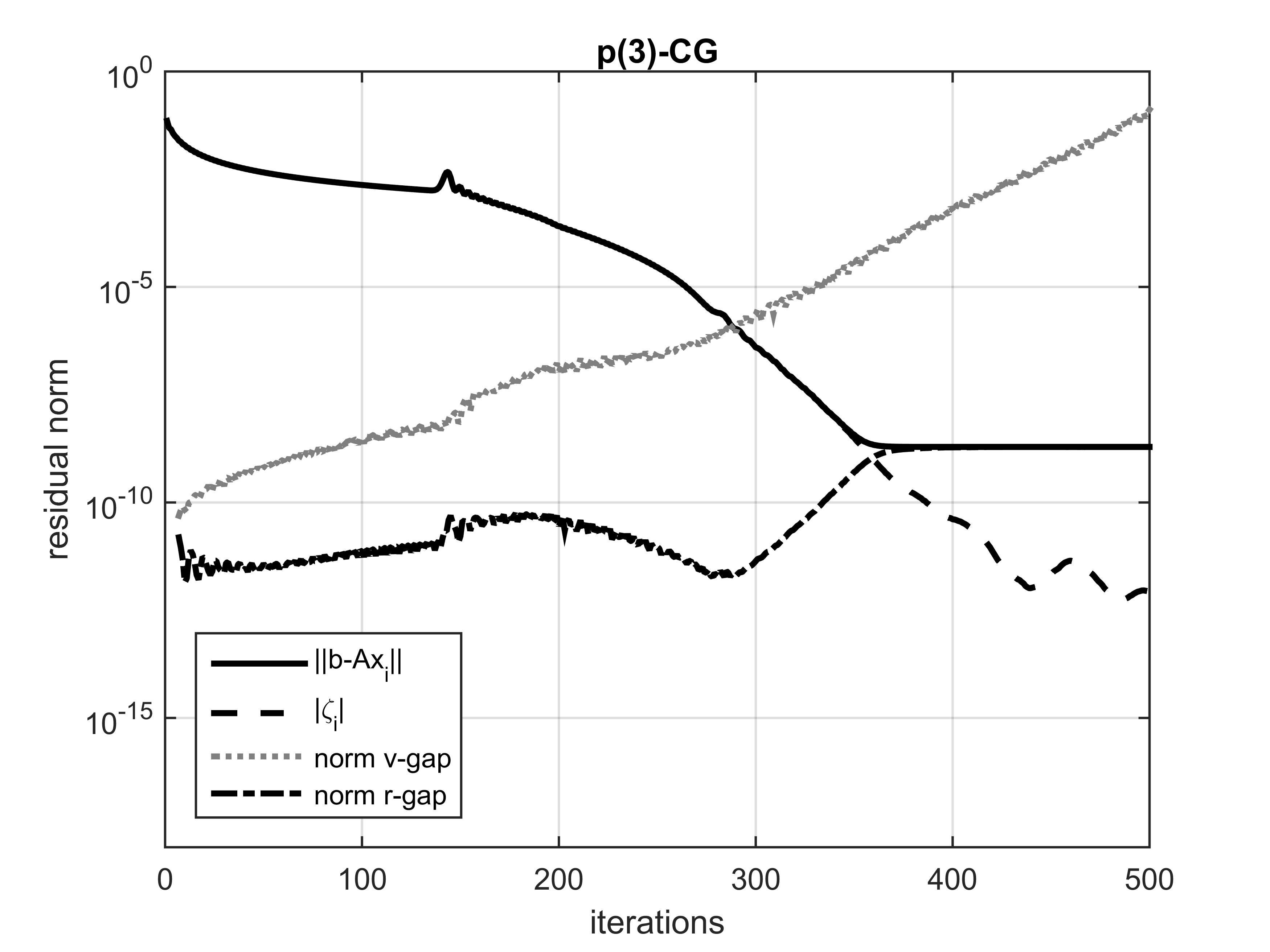} &
\includegraphics[width=0.47\textwidth]{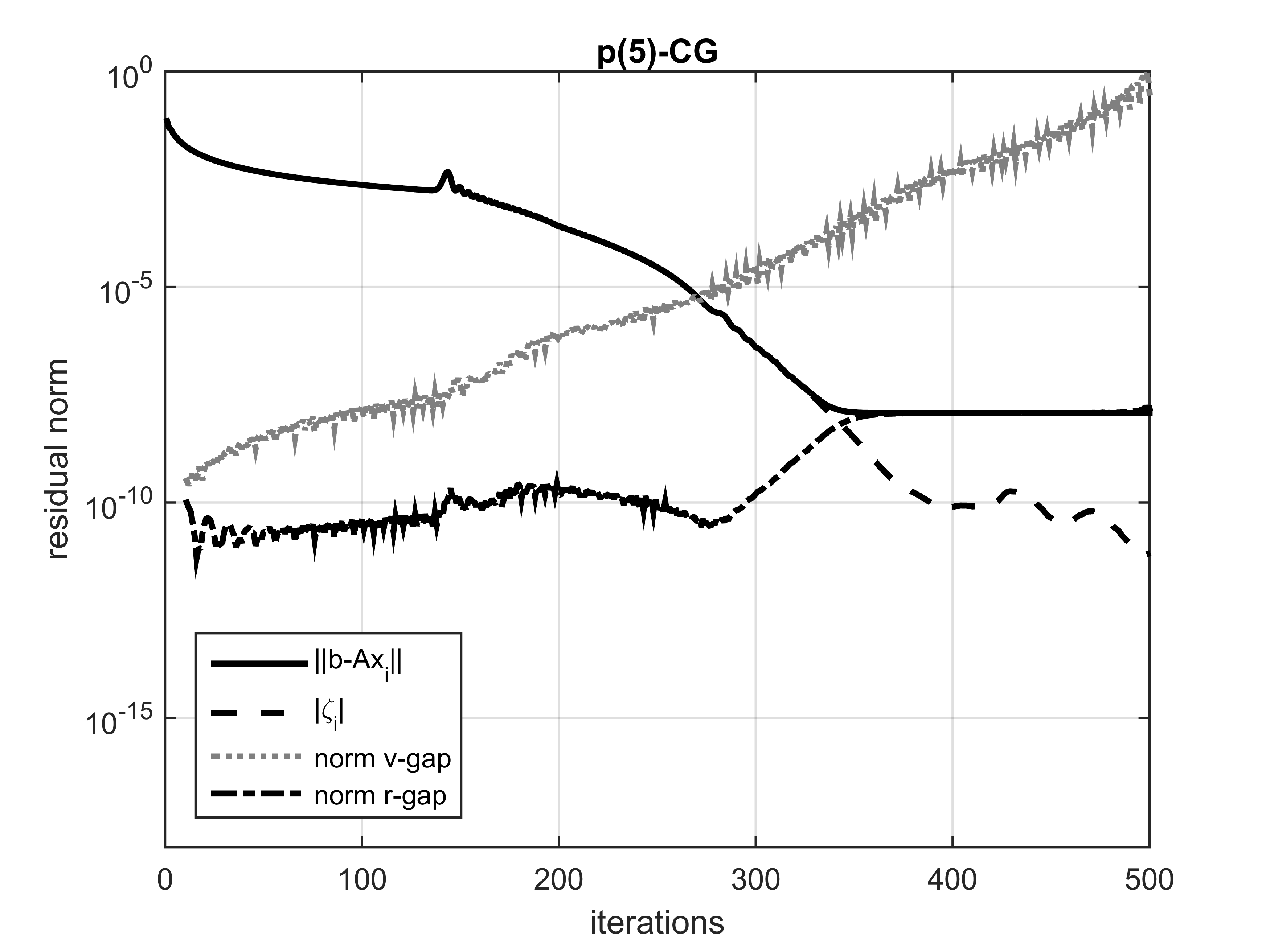} 
\end{tabular}
\end{center}
\caption{Actual residual norm $\|b-A\bar{x}_{j+1}\|$, recursively computed residual norm $\|\bar{r}_{j+1}\|$, basis gap norm $\|\bar{\bold{v}}_{j+1} - \bar{v}_{j+1}\|$ and residual gap norm $\|\bar{\bold{r}}_{j+1}-\bar{r}_{j+1}\|$ for different CG variants on a 2D Poisson problem with 200 $\times$ 200 unknowns corresponding to Fig.\,\ref{fig:residuals} (left). For CG and p-CG (top) the norm of the residual gap $(b-A\bar{x}_{j+1}) - \bar{r}_{j+1}$ is displayed, where $\bar{r}_{j+1}$ is computed using \eqref{eq:recs_xandr}. For p($\ell$)-CG with $l = 1,2,3,5$ (middle and bottom) the norms of the gaps $\bar{\bold{v}}_{j-l+1} - \bar{v}_{j-l+1}$ and $\bar{\bold{r}}_{j-l+1} - \bar{r}_{j-l+1}$ are shown, where $\bar{\bold{v}}_{j-l+1}$ satisfies \eqref{eq:v_arnoldi}, $\bar{v}_{j-l+1}$ is computed using the recurrence \eqref{eq:vbar_rec} and $\bar{\bold{r}}_{j-l+1} - \bar{r}_{j-l+1}$ is defined by \eqref{eq:extra_res5} using $\bar{\bold{v}}_{j-l+1}$ and $\bar{v}_{j-l+1}$.}
\label{fig:figure2}
\end{figure}


We now present numerical experiments regarding the numerical analysis of the p($l$)-CG method discussed in Section \ref{sec:analysis}.
Consider the $200 \times 200$ discretized 2D Poisson problem on 
$[0,1]^2$ for illustration purposes in this remainder of this section. The right-hand side is $b = A\hat{x}$ with $\hat{x} = 1/\sqrt{n}$, unless explicitly stated otherwise. 
This relatively simple problem is severely ill-conditioned and serves as an adequate tool to demonstrate the rounding error analysis from Section \ref{sec:analysis}. 

Fig.\,\ref{fig:figure2} shows the norms of the actual residual $\bar{\bold{r}}_j = b-A\bar{x}_j$ and computed residual $\bar{r}_j$ for various CG variants. The norm $\|\bar{r}_j\|$ is computed as $|\zeta_j|$ in p($\ell$)-CG, see Theorem \ref{th:resnorm}. The figure also displays the norm of the residual gap $(b-A\bar{x}_j) - \bar{r}_j$ for all methods, see expressions \eqref{eq:f_CG} and \eqref{eq:extra_res5}, and the norm of the gap $\bar{\bold{v}}_j - \bar{v}_j$ on the basis for p($\ell$)-CG, see \eqref{eq:gap_plcg2}. For p($\ell$)-CG the gap between $\bar{\bold{v}}_j$ and $\bar{v}_j$ increases dramatically as the iteration proceeds, particularly for large values of $l$, leading to significantly reduced maximal attainable accuracy as indicated by the related residual gap norms. Note that the residual gaps are not guaranteed to be monotonically increasing. The term $\|(\bar{\bold{V}}_{j+1}-\bar{V}_{j+1}) \bar{\Delta}_{j+1,j} \bar{U}^{-1}_{j} \bar{q}_j\|$ in expression \eqref{eq:extra_res5} does not necessarily increase, as error gaps on $\bar{v}_j$ may be averaged out by the linear combination. The following residuals norms $\|b-A\bar{x}_j\|$ are attained after 500 iterations: $4.47\text{e-}15$ (CG), $2.28\text{e-}11$ (p-CG), 1.27\text{e-}13 (p($1$)-CG), $2.37\text{e-}12$ (p($2$)-CG), $1.94\text{e-}09$ (p($3$)-CG), $1.19\text{e-}08$ (p($5$)-CG).

\begin{figure}
\begin{center}
\includegraphics[height=0.35\textwidth]{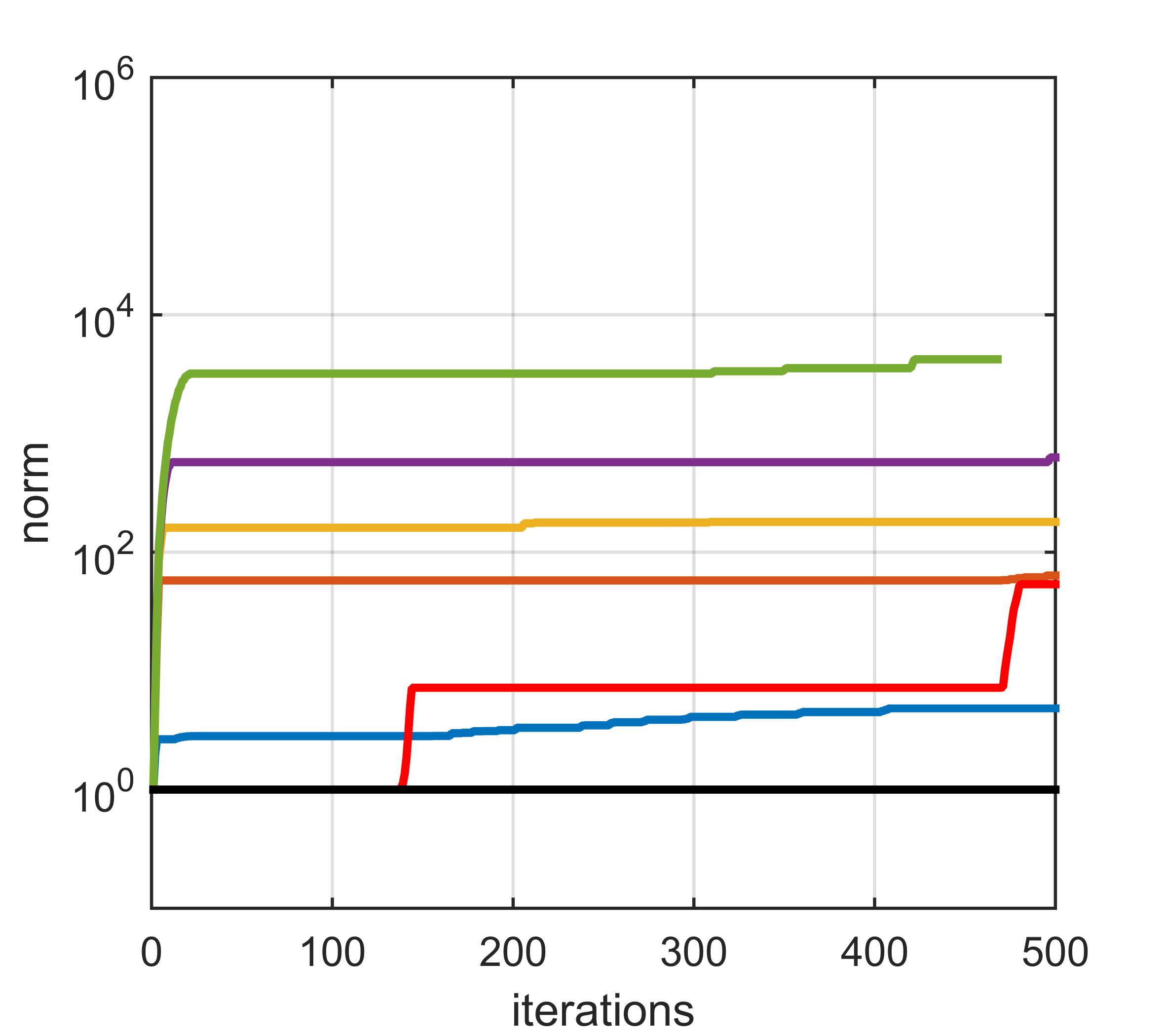} 
\includegraphics[height=0.35\textwidth]{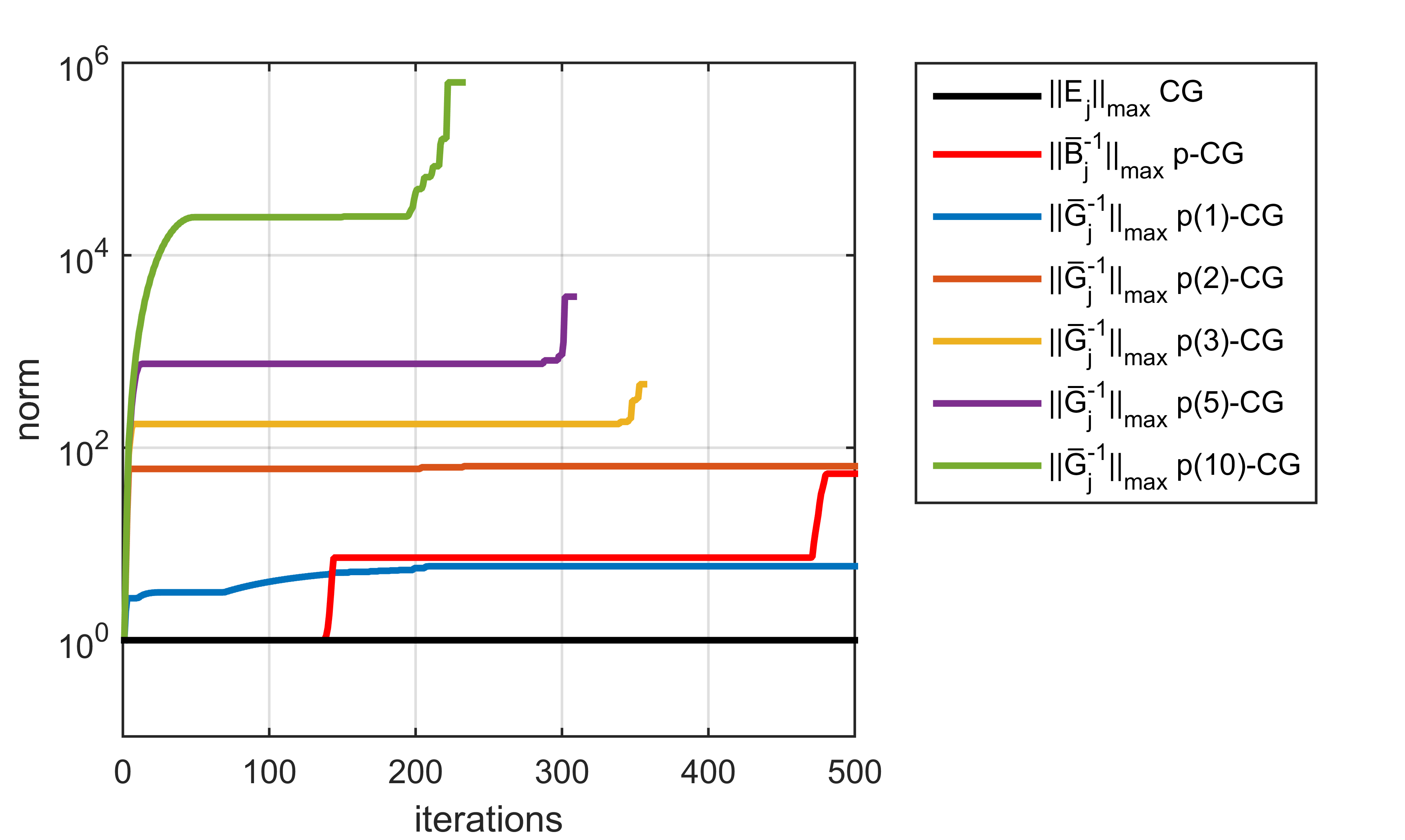} 
\end{center}
\caption{Maximum norms of the essential matrices $E_j$, $\bar{\mathcal{B}}_j^{-1}$ and $\bar{G}_j^{-1}$ involved in the local rounding error propagation for different variants of CG. See expression \eqref{eq:f_CG} for CG and reference \cite{cools2018analyzing} for p-CG. See also Appendix \ref{sec:analysis_cg} for definitions of the matrices $E_j$ and $\bar{\mathcal{B}}_j^{-1}$ that govern the residual gaps for CG and p-CG respectively. See expressions \eqref{eq:extra_res5} and \eqref{eq:gap_plcg2} for details on the residual gap in p($\ell$)-CG. Left: with optimal Chebyshev shifts on the interval [0,8], cf.~Fig.\,\ref{fig:residuals} (left). Right: with sub-optimal Chebyshev shifts on the interval [0,8*1.005], cf.~Fig.\,\ref{fig:residuals} (right).}
\label{fig:figure3}
\end{figure}

In Fig.\,\ref{fig:figure3} the maximum norm $\|\bar{G}_j^{-1}\|_{\max}$ is shown as a function of the iteration $j$ for different pipeline lengths $l$. The maximum norms $\|E_j\|_{\max}$ for CG, see \eqref{eq:f_CG}, and $\|\mathcal{B}^{-1}_j\|_{\max}$ for p-CG, see \cite{cools2018analyzing}, are also displayed as a reference, see also Appendix \ref{sec:analysis_cg} for more details. The impact of increasing pipeline lengths on numerical stability is clear from the figure. A comparison between the left panel (optimal shifts) and the right panel (sub-optimal shifts) in Fig.\,\ref{fig:figure3} illustrates the influence of the basis choice on the norm of $\bar{G}_j^{-1}$, see Section \ref{sec:bounding}. No data is plotted when the matrix $\bar{G}_j$ becomes numerically singular, which corresponds to iterations in which a square root breakdown occurs
, see Fig.\,\ref{fig:residuals}.
Relating Fig.\,\ref{fig:figure3} to the corresponding convergence histories in Fig.\,\ref{fig:residuals}, 
it is clear that the maximal attainable accuracy for p-CG is comparable to that of p(2)-CG, whereas the p(1)-CG algorithm is able to attain a better final precision. The final accuracy level at which the residual stagnates degrades significantly for longer pipelines.

Fig.\,\ref{fig:figure7} combines performance and accuracy results of several variants to the CG method into a single figure. The figure shows the relative actual residuals $\|b-A\bar{x}_j\|/\|b\|$ for the $750 \times 750$ 2D Laplace problem as a function of total time spent by the algorithm. The time spent to compute the actual residuals was not included in the reported timings, since they are in principle not computed in Alg.\,\ref{algo:plCG}. The experiment is executed on 10 of the nodes specified in \emph{Test setup 1}. 
The PETSc version used was 3.8.3 in this experiment and communication was performed using Intel MPI 2018.1.163. 
Similar to the results in Fig.\,\ref{fig:speedup1}-\ref{fig:speedup2}, the pipelined methods require significantly less overall time compared to classic CG.
The p(2)-CG algorithm outperforms the other CG variants 
in terms of time to solution, 
although it cannot reach the same maximal accuracy as the CG or p($1$)-CG methods. Note that p($3$)-CG is able to attain a residual that \emph{does} satisfy $\|b-A\bar{x}_j\|/\|b\| \leq 1.0\text{e-}13$, whereas p($1$)-CG and p($2$)-CG are not. This is due to the square root breakdown and subsequent restart of the p($3$)-CG algorithm as described in Remark \ref{remark:sqrt_breakdown}. The restart improves final attainable accuracy but delays the algorithm's convergence 
compared to other pipelined variants.

\begin{figure}[t]
\begin{center}
\includegraphics[width=0.98\textwidth]{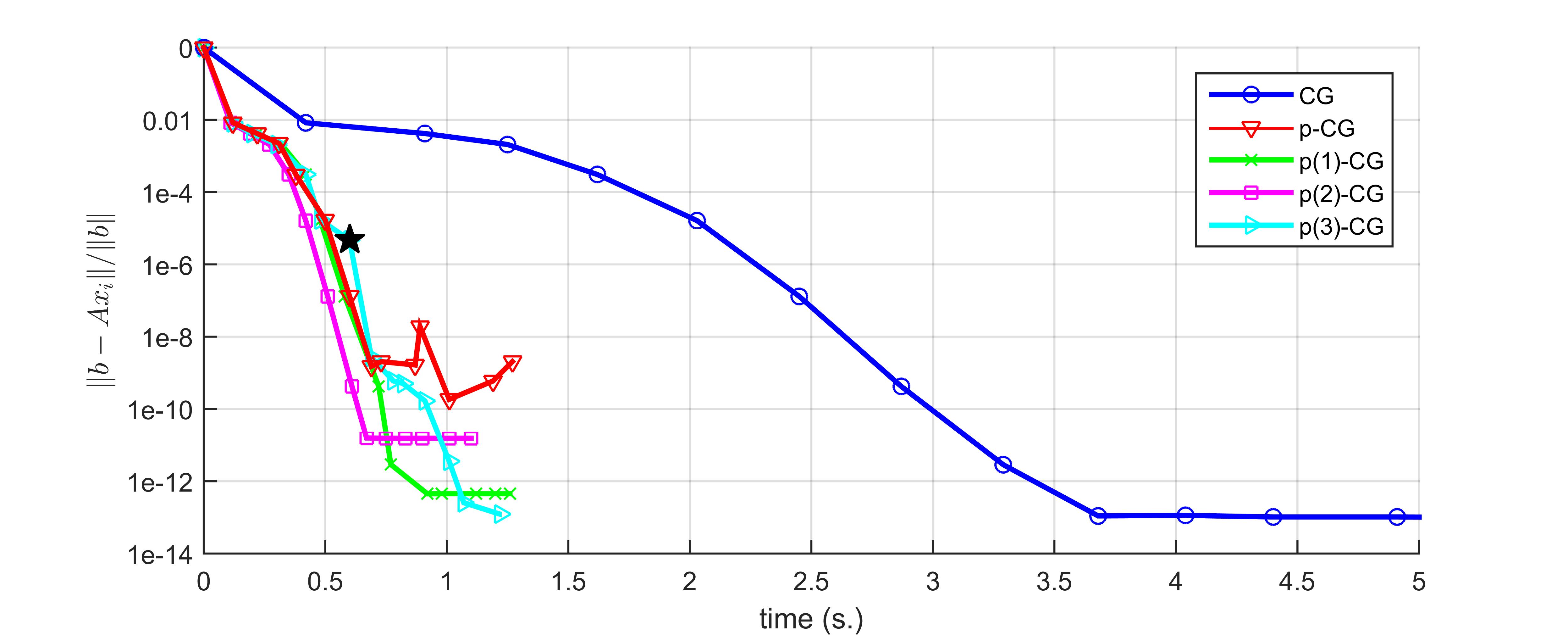}
\end{center}
\caption{Performance/accuracy experiment on 10 nodes (120 processes) for a 5-point stencil 2D Poisson problem with 562,500 unknowns. Relative residual norm $\|b - A \bar{x}_j\|/\|b\|$ as a function of total time spent by the algorithm for various (pipelined) CG variants. Square root breakdown in p(3)-CG is indicated by the $\bigstar$ symbol. 
}
\label{fig:figure7}
\end{figure}

\section{Conclusions}
\label{sec:conclusions}

As HPC hardware keeps evolving towards exascale the gap between computational performance 
and communication latency keeps increasing. Many numerical methods that 
are historically optimized towards flop performance 
now need to be revised towards also (or even: primarily) 
minimizing communication overhead. Several research teams are currently working towards 
this goal \cite{carson2013avoiding,mcinnes2014hierarchical,ghysels2013hiding,grigori2016enlarged,imberti2017varying}, 
resulting in a variety of communication reducing variants to classic Krylov subspace methods 
that feature improved scalability on massively parallel hardware.

This work reports on our efforts to extend the communication-hiding
pipe\-lined Conjugate Gradient (p-CG) method to deeper pipelines. The paper derives a variant
of CG with deep pipelines, discusses implementation issues,
comments on the numerical stability of the algorithm in finite precision, and presents
proof-of-concept scaling results.
Contrary to the p-CG method (with pipeline length one) introduced by Ghysels et al.\,in 2014
\cite{ghysels2014hiding}, the theoretical framework for the p($l$)-CG
algorithm is derived starting from the p($l$)-GMRES method
\cite{ghysels2013hiding}, rather than the original CG method \cite{hestenes1952methods}. 
The p($l$)-CG method is shown to be a simplification of the p($l$)-GMRES
variant from which it was derived in terms of computational and
storage costs, which is achieved by exploiting the 
symmetry of the system matrix and by imposing residual orthogonality.

On massively parallel machines, where the overall solution time is
dominated by global reduction latency, deep pipelined methods 
outperform the classic CG and p-CG algorithms,
as illustrated by the 
experiments 
in this work.
Initial test results show improved scalability when 
longer pipelines are used on distributed-multicore hardware.
However, contrary to many existing CG method variants
\cite{meurant1987multitasking,chronopoulos1989s,ghysels2014hiding,carson2015communication},
the p($l$)-CG algorithm may encounter square root
breakdowns. 
Implementation issues and corresponding solutions for the new CG variant are presented, including
discussions on overlapping communication latency with computational
work and on memory requirements.

It is observed that longer pipelines have an impact
on the propagation of local rounding errors and may affect the 
attainable accuracy on the solution, cf.\,\cite{carson2016numerical,cools2018analyzing}.
This observation is supported by the numerical analysis presented in the manuscript.
Practical bounds for the propagation of the local rounding errors are derived, 
leading to insights into 
the influence of the pipelined length 
and the choice of the auxiliary Krylov basis on 
attainable accuracy. 
It should be noted that the analysis in this work 
does not take into account the impact 
of loss of orthogonality due to rounding error propagation. 
We also do not be expect the analysis to be directly 
applicable to other pipelined Krylov subspace methods such as p($\ell$)-GMRES, 
although the general approach would likely show resemblances.

In summary, the main result of the paper is to show that it is indeed
possible to introduce longer pipelines in the CG algorithm and to give a
first experimental verification of the improved scalability. However,
the algorithm is rather technical to implement, requires additional storage for
the auxiliary variables and features multi-term recurrences that may affect the
numerical accuracy.
Several directions for future research are suggested by the remarks
throughout this manuscript. The robustness of the p($l$)-CG method to
rounding error propagation 
and the impact of deeper pipelines on attainable accuracy should be
improved in future work. 
An interesting technique was recently presented by Imberti et al.~\cite{imberti2017varying}
for $s$-step GMRES; however, it remains to be determined whether a similar 
idea is suitable to `stabilize' p($l$)-CG.
Finally, although the performance results reported in this
work validate the scalability of p($l$)-CG for deeper pipelines, 
it would be interesting to perform large-scale experiments on
even bigger parallel systems where very deep pipelines are expected to
be even more beneficial, cf.\,\cite{yamazaki2017improving} for
p($l$)-GMRES.

\section*{Acknowledgments}
J.\,C.\,acknowledges funding by the University of Antwerp Research Council under the University Research Fund (BOF).
S.\,C.\,gratefully acknowledges funding by the Flemish Research Foundation (FWO Flanders) under grant 12H4617N. 
The authors would like to cordially thank Pieter Ghysels (LBNL) for useful comments on previous versions 
of this manuscript and related discussions on the topic. 
Additionally, the authors gratefully acknowledge the input of the anonymous referees who aided in optimizing the contents of this paper.

\bibliographystyle{plain}
\bibliography{refs2}

\newpage

\appendix

\begin{center}{\uppercase{\textbf{The communication-hiding Conjugate Gradient method with deep pipelines} \\ \ \\ 
{\footnotesize Jeffrey Cornelis, Siegfried Cools, and Wim Vanroose} \\ \ \\ \vspace{0.2cm}
\Large{\textbf{Appendix }}}}\end{center} \vspace{0.3cm}

\noindent \hrulefill \ \\ \vspace{-0.2cm}

\noindent \textbf{This document is intended as Supplementary Materials to the SIAM Journal on Scientific Computing manuscript 
``The communication-hiding Conjugate Gradient method with deep pipelines''
by J.~Cornelis, S.~Cools and W.~Vanroose.}

\noindent \textbf{Index of Supplementary Materials} \vspace{0.2cm}

\begin{itemize}

\item[A.] An alternative characterization of the basis transformation matrix $G_j$.
\item[B.] A practical basis storage framework using sliding windows.
\item[C.] Summary of rounding error analysis for classic CG and p-CG.
\item[D.] Supplementary numerical results on maximal attainable accuracy.

\end{itemize}

\noindent \hrulefill

{\parindent0pt\section{An alternative characterization of the basis transformation matrix $G_j$}} \label{sec:lemma} \\ 

\noindent We provide an interesting alternative characterization of the basis transformation matrix $G_j$, which is defined as $G_j = V_j^T Z_j$ in the p($l$)-CG method, see Theorem \ref{zisvg}. Lemma \ref{lemma:lemma_1} relates the matrix $G_j$ to the tridiagonal Lanczos matrix $T_{j-l}$.

\begin{lemma} \label{lemma:lemma_1}
Let $j \geq l+1$ and let $V_{l+1:j} = [v_l,\ldots,v_{j-1}]$  and $Z_{l+1:j} = [z_l,\ldots,z_{j-1}]$ denote subsets of the Krylov bases $V_j$ and $Z_j$. Let $G_{l+1:j}$ be the principal submatrix of $G_{j}$ that is obtained by removing the first $l$ rows and columns of $G_{j}$. Then
\begin{equation} \label{eq:G_sub}
  G_{l+1:j} = V^T_{l+1:j} \, V_{1:j-l} \, P_l(T_{j-l}) = \left( P_l(T_{j-l}) \, V^T_{1:j-l} \, V_{l+1:j} \right)^T,
\end{equation}
where 
\begin{equation}
	V^T_{l+1:j} \, V_{1:j-l} = 
	\left(\begin{array}{ccccc} 
		0&&1&& \\
		&\ddots&&\ddots& \\
		&&\ddots&&1 \\
		&&&\ddots& \\
		&&&&0 \\
	\end{array}\right) \leftarrow (l+1)^{\text{th}}~\text{row}.
\end{equation}
\end{lemma}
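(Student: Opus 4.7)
The plan is to establish the alternative characterization by combining the direct definition of $G_{l+1:j}$ with a polynomial representation of $P_l(A)$ acting on the orthonormal Krylov basis, exploiting the symmetric Lanczos three-term recurrence that underlies the $V_j$ basis.

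First, I would use the auxiliary-basis definition in \eqref{eq:defz}, which gives $z_{l+q-1} = P_l(A) v_{q-1}$ for $1 \leq q \leq j-l$. Reading this columnwise yields $Z_{l+1:j} = P_l(A) V_{1:j-l}$, and therefore the direct formula
\[
G_{l+1:j} = V^T_{l+1:j} Z_{l+1:j} = V^T_{l+1:j} P_l(A) V_{1:j-l}.
\]
Second, I would iterate the Lanczos relation $A V_{1:m} = V_{1:m} T_m + \delta_{m-1} v_m e_m^T$ with $m = j-l$ and argue by induction on polynomial degree that
\[
P_l(A) V_{1:j-l} = V_{1:j-l} P_l(T_{j-l}) + R,
\]
where the remainder $R$ is supported on the additional basis vectors $v_{j-l}, v_{j-l+1}, \ldots, v_{j-1}$ produced when $A$ is applied up to $l$ times to the last column of $V_{1:j-l}$. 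Third, I would left-multiply by $V^T_{l+1:j}$ and evaluate each factor using orthonormality of $V_j$. The factor $V^T_{l+1:j} V_{1:j-l}$ has entries $\delta_{q,l+p}$, producing precisely the banded matrix with $1$'s on the $l$-th superdiagonal displayed in the statement. The remaining contribution $V^T_{l+1:j} R$ is then reconciled against the principal product using the banded structure of $P_l(T_{j-l})$ (bandwidth $2l+1$) together with the symmetry relation $g_{j,i} = g_{i-l,j+l}$ established in the proof of Lemma \ref{lemma:lemmaband}. The second equality follows by transposing the first: $P_l(T_{j-l})$ is symmetric since $T_{j-l}$ is, and $(V^T_{l+1:j} V_{1:j-l})^T = V^T_{1:j-l} V_{l+1:j}$.

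The main obstacle will be the careful accounting of the boundary terms in the second step. The Lanczos recurrence is not closed on $V_{1:j-l}$, so iterated applications of $A$ produce spillover into the vectors $v_{j-l}, \ldots, v_{j-1}$ which lie within the row range of $V^T_{l+1:j}$ and therefore survive the projection onto the left. Matching these contributions against the shifted, banded structure induced by $V^T_{l+1:j} V_{1:j-l} P_l(T_{j-l})$ — and in particular ensuring that the trailing rows and columns of $G_{l+1:j}$ are captured correctly — is the delicate part of the proof; it is precisely where the banded/symmetry structure of $G_j$ (Lemma \ref{lemma:lemmaband}) and the sparsity pattern of $P_l$ of a tridiagonal matrix must be exploited in tandem to recover the claimed identity.
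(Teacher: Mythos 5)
Your outline follows the same route as the paper's (very terse) proof: definition \eqref{eq:defz} gives $Z_{l+1:j}=P_l(A)V_{1:j-l}$, the Lanczos relation converts $P_l(A)$ acting on the basis into $P_l$ of a tridiagonal matrix, and orthonormality identifies $V^T_{l+1:j}V_{1:j-l}$ with the shift matrix. You are in fact more careful than the paper, whose proof simply asserts $P_l(A)V_j=V_jP_l(T_{j,j})$ and chains three equalities, ignoring that the Lanczos relation is not closed on a finite basis.

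However, there is a genuine gap precisely at the step you defer: the ``reconciliation'' of $V^T_{l+1:j}R$ against the banded structure is not a bookkeeping exercise that Lemma \ref{lemma:lemmaband} can settle, because it cannot succeed in the form you propose. The shift matrix $S:=V^T_{l+1:j}V_{1:j-l}$, with entries $S_{p,q}=\delta_{q,p+l}$, has its last $l$ rows identically zero (once $p>j-2l$ there is no admissible $q=p+l\le j-l$), so $S\,P_l(T_{j-l})$ vanishes in its last $l$ rows and has rank at most $j-2l$, whereas $G_{l+1:j}$ is a nonsingular upper triangular matrix with nonzero diagonal entries $g_{l,l},\ldots,g_{j-1,j-1}$. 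Hence the spillover contribution $V^T_{l+1:j}R$ genuinely survives and cannot be absorbed; the displayed identity is only recovered entrywise on the truncation-free block where neither $P_l(A)v_{q-1}$ nor $P_l(T_{j-l})e_q$ feels the boundary of the basis. The clean way to close the argument --- and the one consistent with the remark following the lemma, which speaks of shifting the entries of $P_l(H_{j,j})$ --- is to avoid introducing a remainder at all: for $q\le j-l$ the degree-$l$ polynomial applied to $v_{q-1}$ stays inside $V_{1:j}$, so $P_l(A)V_{1:j-l}=V_{1:j}\,[P_l(T_j)]_{:,1:j-l}$ holds exactly, and projecting with $V^T_{l+1:j}$ gives $G_{l+1:j}=[P_l(T_j)]_{l+1:j,\,1:j-l}$, i.e.\ rows $l+1$ through $j$ and columns $1$ through $j-l$ of $P_l$ of the \emph{full} tridiagonal matrix $T_j$ rather than of the truncated $T_{j-l}$. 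Your transposition argument for the second equality in \eqref{eq:G_sub} is fine as far as it goes, but it inherits the same boundary defect.
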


\begin{proof}
The proof follows directly from the definition \eqref{eq:defz} and the Lanczos relation, which when combined imply $P_l(A)V_j = V_j P_l(T_{j,j})$, and hence
\begin{equation}
  G_{l+1:j} = V^T_{l+1:j} \, Z_{l+1:j} = V^T_{l+1:j} \, P_l(A) \, V_{1:j-l} = V^T_{l+1:j} \, V_{1:j-l} \, P_l(T_{j-l}).
\end{equation}
\end{proof}

\noindent Lemma \ref{lemma:lemma_1} effectively states that for any $j \geq l+1$ the principal submatrix $G_{l+1:j}$ of the basis transformation matrix $G_{j}$ can be obtained by shifting all entries of the matrix $P_l(H_{j,j})$ upward by $l$ places and subsequently selecting the leading $(j-l)$-by-$(j-l)$ block.

\newpage

{\parindent0pt \section{A practical basis storage framework using sliding windows}} \label{sec:sliding} \\ 

\noindent Since in each iteration $i$ of p($l$)-CG, Alg.\,\ref{algo:plCG}, only the last $l+1$ vectors $z_{i-l+1}$, \ldots, $z_{i+1}$ are required, these vectors are stored in a sliding window of $l+1$ vectors. We denote this window by $\vec{Z}_i$, where the index $i$ indicates the \emph{current iteration}. Note that this is in contrast to the indexing for the basis $Z_{i} := [z_0,\ldots,z_{i-1}]$, where the index $i$ denotes the \emph{number of vectors} in the basis. The sliding window $\vec{Z}_i$ is defined as
\begin{equation*} 
\vec{Z}_i = [\vec{Z}_i(0),\vec{Z}_i(1),\ldots,\vec{Z}_i(l)] = \left\{ 
  \begin{matrix}
	  [~~, \ldots, ~~, z_{i+1}, \ldots, z_0], & \qquad 0 \leq i < l, \\ 
	  [z_{i+1}, z_i, \ldots, z_{i-l+1}], & \qquad i \geq l.
	\end{matrix}
\right. 
\end{equation*}
Vectors $z_j$ in the sliding window are listed from highest to lowest index, i.e.\,for $i \geq l$ the vector in the first position in the sliding window is $z_{i+1}$, and the vector in the last position is $z_{i-l+1}$, see also Fig.\,\ref{fig:basis_storage}. A particular vector $z_j$ with $\max(0,i-l+1) \leq j \leq i+1$ can be accessed from $\vec{Z}_i$ as follows:
\begin{equation*} 
z_j = \left\{ 
  \begin{matrix}
	  \vec{Z}_i(l-j), & 0 \leq i < l,\\ 
	  \vec{Z}_i(i-j+1), & i \geq l.
	\end{matrix}
\right. 
\end{equation*}
For iterations $0 \leq i \leq l-1$ the window $\vec{Z}_i$ is being filled up by simply adding vectors as follows:
\begin{equation*}
\vec{Z}_0 = [~~,\ldots,~~,~~,z_1,z_0], \quad \vec{Z}_1 = [~~,\ldots,~~,z_2,z_1,z_0], \quad \ldots, \quad \vec{Z}_{l-1} = [z_l,\ldots,z_3,z_2,z_1,z_0].
\end{equation*}
From iteration $i = l$ onward the window effectively starts to \emph{slide}: the most recently computed basis vector $z_{i+1}$ is written to position $\vec{Z}_i(0)$, and the vectors $z_i,\ldots,z_{i-l}$ from the previous window $\vec{Z}_{i-1}$ are all moved one space to a higher position in the window $\vec{Z}_i$. As a result of this procedure the vector $z_{i-l}$ is dropped from the window $\vec{Z}_i$. Hence we obtain for $i \geq l$:
\begin{equation*}
\vec{Z}_l = [z_{l+1},\ldots, z_3, z_2, z_1],\quad \vec{Z}_{l+1} = [z_{l+2},\ldots, z_4, z_3, z_2], \quad \ldots, \quad \vec{Z}_{i} =  [z_{i+1}, z_i, \ldots, z_{i-l+1}].
\end{equation*}
The procedure for filling and maintaining the sliding window $\vec{Z}_{i}$ is illustrated in Fig.\,\ref{fig:slidingVZ} (left) for pipeline length $l = 2$. In this case the window $\vec{Z}_{i}$ contains $l+1 = 3$ vectors in each iteration $i \geq l-1$.

Similarly to the auxiliary basis $Z_{i+2}$, each p($l$)-CG iteration $i$ uses the last $2l+1$ basis vectors $v_{i-3l+1}$, \ldots, $v_{i-l+1} \in V_{i-l+2}$ to update the solution, see Alg.\,\ref{algo:plCG}. These basis vectors are stored using a second sliding window $\vec{V}_i$ consisting of $2l+1$ vectors, where again the index $i$ refers to the iteration. The window $\vec{V}_i$ is defined similarly to the window $\vec{Z}_i$ above, i.e.:
\begin{equation*} 
\vec{V}_i = [\vec{V}_i(0),\vec{V}_i(1),\ldots,\vec{V}_i(2l)] = \left\{ 
  \begin{matrix}
	  [~~, \ldots, ~~, v_0], & \qquad 0 \leq i < l, \\ 
	  [~~, \ldots, ~~, v_{i-l+1}, \ldots, v_0], & \qquad l \leq i < 3l, \\ 
	  [v_{i-l+1}, v_{i-l}, \ldots, v_{i-3l+1}], & \qquad i \geq 3l,
	\end{matrix}
\right. 
\end{equation*}
and a particular vector $v_j$ with $\max(0,i-3l+1) \leq j \leq i-l+1$ can be accessed from $\vec{V}_i$ as follows:
\begin{equation*} 
v_j = \left\{ 
  \begin{matrix}
	  \vec{V}_i(2l-j), & l \leq i < 3l,\\ 
	  \vec{V}_i(i-l-j+1), & i \geq 3l.
	\end{matrix}
\right. 
\end{equation*}
The window $\vec{V}_i$ for the basis $V_{i-l+2}$ only starts to get filled once the window $\vec{Z}_i$ for the auxiliary basis $Z_{i+2}$ has been completely filled. Indeed, for iterations $0$ up to $l-1$ no vectors $v_{i-l+1}$ are computed in Alg.\,\ref{algo:plCG}, i.e.:
\begin{equation*}
\vec{V}_0 = [~~,\ldots,~~,~~,~~,v_0], \quad \vec{V}_1 = [~~,\ldots,~~,~~,~~,v_0], \quad \ldots, \quad \vec{V}_{l-1} = [~~,\ldots,~~,~~,~~,v_0].
\end{equation*}
In iterations $l$ up to $3l-1$ the window $\vec{V}_i$ is gradually filled by adding one vector $v_{i-l+1}$ in each iteration $i$:
\begin{equation*}
\vec{V}_l = [~~,\ldots,~~,~~,v_1,v_0], \quad \vec{V}_{l+1} = [~~,\ldots,~~,v_2,v_1,v_0], \quad \ldots, \quad \vec{V}_{3l-1} = [v_{2l},\ldots,v_3,v_2,v_1,v_0].
\end{equation*}
The window $\vec{V}_i$ is completely filled for the first time in iteration $3l-1$.
Consequently, $\vec{V}_i$ effectively starts to slide from iteration $3l$ onwards, i.e.:
\begin{equation*}
\vec{V}_{3l} = [v_{2l+1},\ldots, v_3, v_2, v_1],\quad \vec{V}_{3l+1} = [v_{2l+2},\ldots, v_4, v_3, v_2],\quad\ldots,\quad \vec{V}_{i} =  [v_{i-l+1}, v_{i-l}, \ldots, v_{i-3l+1}].
\end{equation*}
Fig.\,\ref{fig:slidingVZ} (right) illustrates the above procedure by showing a schematic overview of the sliding window $\vec{V}_i$ in the first iterations of Alg.\,\ref{algo:plCG} for pipeline length $l = 2$. Notice how in iteration $i$ the basis $V_{i-l+2}$, characterized by the sliding window $\vec{V}_i$, contains the last updated basis vector $v_{i-l+1} = v_{i-1}$. The basis $V_{i-l+2}$ thus runs $l$ vectors behind compared to the auxiliary basis $Z_{i+2}$ represented by the sliding window $\vec{Z}_i$, which contains the most recent auxiliary basis vector $z_{i+1}$, see Fig.\,\ref{fig:slidingVZ} (left).

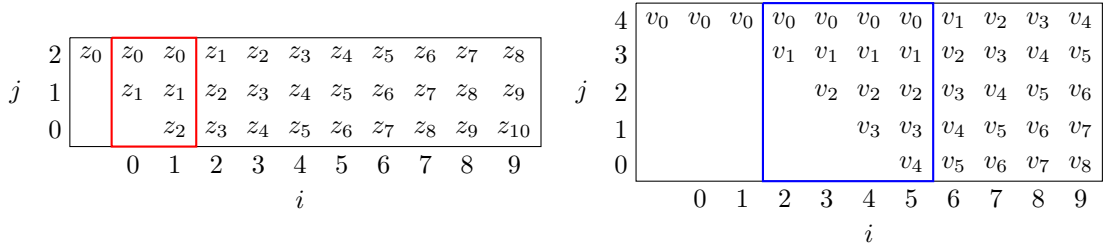
\begin{figure}\label{fig:slidingVZ}
\begin{center}
\begin{tikzpicture}[ node distance=1mm and 0mm, baseline,   proces/.style={
    anchor=center,
  }]
\matrix [matrix of nodes, nodes=proces, nodes in empty cells] (magic)
        {
          & 2 & $z_0$ & $z_0$ &  $z_0$  &  $z_1$ &  $z_2$ &  $z_3$ &  $z_4$ & $z_5$ &  $z_6$ & $z_7$ & $z_8$ \\
     $j$~ & 1 &       & $z_1$ &  $z_1$  &  $z_2$ &  $z_3$ &  $z_4$ &  $z_5$ & $z_6$ &  $z_7$ & $z_8$ & $z_9$ \\
          & 0 &       &       &  $z_2$  &  $z_3$ &  $z_4$ &  $z_5$ &  $z_6$ & $z_7$ &  $z_8$ & $z_9$ & $z_{10}$ \\
          &   &       & 0     &     1   &    2   &   3    &  4     &  5     &  6    & 7      &  8    &  9    \\
					&	  &       &       &         &        &        &   $i$   &       &       &        &       &       \\
        };
\draw[black,thin] 
(magic-1-3.north west) -| (magic-3-13.south east) -| (magic-1-3.north west);
\draw[red,thick] 
(magic-1-4.north west) -| (magic-3-5.south east) -| (magic-1-4.north west) ;
\end{tikzpicture}
\begin{tikzpicture}[ node distance=1mm and 0mm, baseline,   proces/.style={
    anchor=center,
  }]
\matrix [matrix of nodes, nodes=proces, nodes in empty cells] (magic)
        {
         & 4 & $v_0$& $v_0$&   $v_0$ & $v_0$ &   $v_0$ &  $v_0$  & $v_0$  & $v_1$ & $v_2$ & $v_3$ & $v_4$\\
         & 3 & $ $  & $ $  &   $ $   & $v_1$ &   $v_1$ &  $v_1$  & $v_1$  & $v_2$ & $v_3$ & $v_4$ & $v_5$\\
    $j$~ & 2 & $ $  & $ $  &   $ $   & $ $   &   $v_2$ &  $v_2$  & $v_2$  & $v_3$ & $v_4$ & $v_5$ & $v_6$\\
         & 1 & $ $  & $ $  &   $ $   & $ $   &   $ $   &  $v_3$  & $v_3$  & $v_4$ & $v_5$ & $v_6$ & $v_7$\\
         & 0 & $ $  & $ $  &   $ $   & $ $   &   $ $   &  $ $    & $v_4$  & $v_5$ & $v_6$ & $v_7$ & $v_8$\\
         &   &      & 0    &     1   &    2  &   3     &  4      &  5     &  6    & 7     &  8 & 9 \\
				 &   &      &      &         &       &         &   $i$   &        &       &       &    &   \\
        };
\draw[black,thin] 
(magic-1-3.north west) -| (magic-5-13.south east) -| (magic-1-3.north west) ;
\draw[blue,thick] 
(magic-1-6.north west) -| (magic-5-9.south east) -| (magic-1-6.north west) ;
\end{tikzpicture}

\vspace{-0.5cm} 

\end{center}
\caption{Schematic representation of the sliding storage windows $\vec{Z}_i(j)$ and $\vec{V}_i(j)$ in the first 10 iterations of Alg.\,\ref{algo:plCG} for pipeline length $l = 2$. \textbf{Left:} red box indicates the initial filling of the window $\vec{Z}_i$ in iterations $0$ up to $l-1$. The window $\vec{Z}_i$ effectively starts sliding from iteration $l$ onward.
\textbf{Right:} blue box indicates the initial filling of the window $\vec{V}_i$ in iterations $l$ up to $3l-1$, after the window $\vec{Z}_i$ has been completely filled (see left). The window $\vec{V}_i$ starts sliding from iteration $3l$ onward.}
\end{figure}

The concept of sliding windows can analogously be applied to the basis $\hat{Z}_i$ in the preconditioned version of the $l$-length pipelined CG method.
The sliding window for the preconditioned auxiliary basis $\hat{Z}_{i+2}$ is however limited to the last three vectors $\hat{z}_{i-1}$, $\hat{z}_{i}$ and $\hat{z}_{i+1}$, since only these vectors need to be stored in iteration $i$ of the algorithm, see Remark \ref{remark:prec_storage}.

The sliding of the window can easily be implemented in practice by re-addressing the pointers to the array elements in the windows $\vec{Z}_i$ and $\vec{V}_i$. For example, the C-code snippet \vspace{0.2cm}\\
\fbox{\parbox{0.97\textwidth}{
\begin{itemize}
\item[] \emph{\texttt{Vec *Z\_VEC, temp;}}
\item[] \emph{\texttt{temp = Z\_VEC[l];}}
\item[] \emph{\texttt{for(i = l; i>0; i--) Z\_VEC[i] = Z\_VEC[i-1];}}
\item[] \emph{\texttt{Z\_VEC[0] = temp;}}
\item[] \emph{\texttt{MatMult(A, Z\_VEC[1], Z\_VEC[0]);}}
\end{itemize} }}\vspace{0.2cm} \ \\
illustrates the sliding of the window $\vec{Z}_i$ in iteration $i \geq l$, where \emph{\texttt{Z\_VEC}} is an array of pointers to the vectors in $\vec{Z}_i$. The pointers are cycled such that $z_{i+1} =  Az_i$ can be added to the window as $\vec{Z}_i(0)$.

\newpage

{\parindent0pt\section{Summary of rounding error analysis for classic CG and p-CG}} \label{sec:analysis_cg} \\ 

\noindent We provide a brief overview of the analysis of local rounding errors in classic CG and p-CG, which was performed in detail in \cite{cools2018analyzing} and the related work \cite{carson2016numerical}. This section is intended as an easy reference to compare the numerical analysis of the p($l$)-CG method to the existing CG and p-CG methods.

\subsection{Local rounding error behavior in finite precision classic CG}

Consider the propagation of local rounding errors by the recurrence relations of classic CG, Alg.\,\ref{algo:CG}, given by expression \eqref{eq:f_CG}.
By introducing the matrix notation $\bar{\bold{R}}_{j+1} - \bar{R}_{j+1} = [\bar{\bold{r}}_0 - \bar{r}_0, \ldots, \bar{\bold{r}}_j - \bar{r}_j]$ for the residual gaps in the first $j+1$ iterations and by analogously defining $\Theta_j^{\bar{x}} = [0,-\xi_1^{\bar{x}},\ldots, -\xi_{j-1}^{\bar{x}}]$ and $\Theta_j^{\bar{r}} = [f_0,-\xi_1^{\bar{r}},\ldots, -\xi_{j-1}^{\bar{r}}]$ for the local rounding errors, expression \eqref{eq:f_CG} can be formulated as
\begin{equation*} 
  \bar{\bold{R}}_{j+1} - \bar{R}_{j+1} = (A \Theta_{j+1}^{\bar{x}} + \Theta_{j+1}^{\bar{r}}) \, E_{j+1},
\end{equation*}
where $E_{j+1}$ is a $(j+1) \times (j+1)$ upper triangular matrix of ones. 
Since all entries of $E_{j+1}$ are one, local rounding errors are merely accumulated (not amplified) in the classic CG algorithm.

\subsection{Local rounding error behavior in finite precision p-CG}

The pipelined p-CG method proposed in \cite{ghysels2014hiding}, see Alg.\,\ref{algo:PIPECG}, uses additional recurrence relations for auxiliary vector quantities defined as $w_j := A r_j$, $s_j := A p_j$ and $z_j := A s_j$. The coupling between these recursively defined variables may cause local rounding error amplification. In finite precision p-CG the following recurrence relations are computed:
\begin{align*}
	\bar{x}_{j+1} &= \bar{x}_j + \bar{\alpha}_j \bar{p}_j + \xi_{j+1}^{\bar{x}}, &
	\bar{w}_{j+1} &= \bar{w}_j - \bar{\alpha}_j \bar{z}_j + \xi_{j+1}^{\bar{w}}, &
	\bar{r}_{j+1} &= \bar{r}_j - \bar{\alpha}_j \bar{s}_j + \xi_{j+1}^{\bar{r}}, \notag \\
	\bar{s}_j 		&= \bar{w}_j + \bar{\beta}_j \bar{s}_{j-1} + \xi_j^{\bar{s}}, &
	\bar{p}_j 		&= \bar{r}_j + \bar{\beta}_j \bar{p}_{j-1} + \xi_j^{\bar{p}} , &
	\bar{z}_j     &= A\bar{w}_j + \bar{\beta}_j \bar{z}_{j-1} + \xi_j^{\bar{z}}. \label{eq:pxr_pipecg}
\end{align*}
The respective bounds for the local rounding errors $\xi_{k}^{\bar{x}},\xi_{k}^{\bar{r}},\xi_{k}^{\bar{p}},\xi_{k}^{\bar{s}},\xi_{k}^{\bar{w}}$ and $\xi_{k}^{\bar{z}}$ in these recurrence relations can be found in \cite{cools2018analyzing}, where it is also shown that the residual gap $(b-A\bar{x}_j) - \bar{r}_j$ is coupled to the gaps $A\bar{p}_j - \bar{s}_j$, $A\bar{r}_j - \bar{w}_j$ and $A\bar{s}_j - \bar{z}_j$ on the auxiliary variables in p-CG.

Let $B = [b,b,\ldots,b]$, $\bar{X}_{j+1} = [\bar{x}_0,\bar{x}_1,\ldots, \bar{x}_j]$ and $\bar{P}_{j+1} = [\bar{p}_0,\bar{p}_1,\ldots, \bar{p}_j]$. Writing 
the gaps defined in \cite{cools2018analyzing} (Section 2.3) 
in matrix notation as 
$\bar{\bold{R}}_{j+1} - \bar{R}_{j+1}$, $\bar{\bold{S}}_{j+1} - \bar{S}_{j+1}$, $\bar{\bold{W}}_{j+1} - \bar{W}_{j+1}$, $\bar{\bold{Z}}_{j+1} - \bar{Z}_{j+1}$ with actual variables that are defined as $\bar{\bold{R}}_{j+1} = B-A\bar{X}_{j+1}$, $\bar{\bold{S}}_{j+1} = A\bar{P}_{j+1}$, $\bar{\bold{W}}_{j+1} = A\bar{R}_{j+1}$ and $\bar{\bold{Z}}_{j+1} = A\bar{S}_{j+1}$,
and using the expressions for the local rounding errors on the auxiliary variables: 
$\Theta_j^{\bar{x}} = -[0,\xi_1^{\bar{x}},..., \xi_{j-1}^{\bar{x}}]$, 
$\Theta_j^{\bar{r}} = [f_0,-\xi_1^{\bar{r}},..., -\xi_{j-1}^{\bar{r}}]$, 
$\Theta_j^{\bar{p}} = [0,\xi_1^{\bar{p}},..., \xi_{j-1}^{\bar{p}}]$,  
$\Theta_j^{\bar{s}} = [g_0,-\xi_1^{\bar{s}},..., -\xi_{j-1}^{\bar{s}}]$,
$\Theta_j^{\bar{u}} = [0,\xi_1^{\bar{r}},..., \xi_{j-1}^{\bar{r}}]$,  
$\Theta_j^{\bar{w}} = [h_0,-\xi_1^{\bar{w}},..., -\xi_{j-1}^{\bar{w}}]$, 
$\Theta_j^{\bar{q}} = [0,\xi_1^{\bar{s}},..., \xi_{j-1}^{\bar{s}}]$,  
$\Theta_j^{\bar{z}} = [e_0,-\xi_1^{\bar{z}},..., -\xi_{j-1}^{\bar{z}}]$, 
the following matrix expressions for the gaps in p-CG are obtained: 
\begin{align*} 
  \bar{\bold{R}}_{j+1} - \bar{R}_{j+1} &= (A\Theta_{j+1}^{\bar{x}}+\Theta_{j+1}^{\bar{r}}) \, E_{j+1} + (\bar{\bold{S}}_{j+1} - \bar{S}_{j+1}) \bar{\mathcal{A}}_{j+1}, \\
  \bar{\bold{S}}_{j+1} - \bar{S}_{j+1} &= (A\Theta_{j+1}^{\bar{p}}+\Theta_{j+1}^{\bar{s}}) \, \bar{\mathcal{B}}^{-1}_{j+1} + (\bar{\bold{W}}_{j+1} - \bar{W}_{j+1}) \bar{\mathcal{B}^{-1}_{j+1}}, \\ 
  \bar{\bold{W}}_{j+1} - \bar{W}_{j+1} &= (A\Theta_{j+1}^{\bar{u}}+\Theta_{j+1}^{\bar{w}}) \, E_{j+1} + (\bar{\bold{Z}}_{j+1} - \bar{Z}_{j+1}) \bar{\mathcal{A}}_{j+1}, \\
  \bar{\bold{Z}}_{j+1} - \bar{Z}_{j+1} &= (A\Theta_{j+1}^{\bar{q}}+\Theta_{j+1}^{\bar{z}}) \, \bar{\mathcal{B}}^{-1}_{j+1}.
\end{align*}
By substituting these expressions we obtain the following expression for the residual gaps in p-CG:
\begin{align*}
  \bar{\bold{R}}_{j+1} - \bar{R}_{j+1} &= (A\Theta_{j+1}^{\bar{x}}+\Theta_{j+1}^{\bar{r}}) \, E_{j+1} + (A\Theta_{j+1}^{\bar{p}}+\Theta_{j+1}^{\bar{s}}) \, \bar{\mathcal{B}}^{-1}_{j+1} \bar{\mathcal{A}}_{j+1} + \ldots \notag \\
	& \quad + (A\Theta_{j+1}^{\bar{u}}+\Theta_{j+1}^{\bar{w}}) \, E_{j+1} \bar{\mathcal{B}^{-1}_{j+1}} \bar{\mathcal{A}}_{j+1} + (A\Theta_{j+1}^{\bar{q}}+\Theta_{j+1}^{\bar{z}}) \, \bar{\mathcal{B}}^{-1}_{j+1} \bar{\mathcal{A}}_{j+1} \bar{\mathcal{B}^{-1}_{j+1}} \bar{\mathcal{A}}_{j+1}, 
\end{align*}
where \vspace{-0.2cm}
\begin{equation*}
  \bar{\mathcal{A}}_{j+1} = -
  \left(\begin{array}{ccccc} 
    0&\bar{\alpha}_0&\bar{\alpha}_0&\cdots&\bar{\alpha}_0 \\
    &0&\bar{\alpha}_1&\cdots&\bar{\alpha}_1 \\
    &&\ddots&& \vdots\\
    &&&0& \bar{\alpha}_{j-1}\\
    &&&&0
  \end{array}\right),
\qquad
	\bar{\mathcal{B}}_{j+1}^{-1} = 
	\left(\begin{array}{ccccc} 
		1&\bar{\beta}_1&\bar{\beta}_1 \bar{\beta}_2&\cdots&\bar{\beta}_1 \bar{\beta}_2 \ldots \bar{\beta}_j \\
		&1&\bar{\beta}_2& \ddots &\bar{\beta}_2 \ldots \bar{\beta}_j\\
		&&\ddots&\ddots&\vdots \\
		&&&\ddots&\bar{\beta}_j \\
		&&&&1
	\end{array}\right).
\end{equation*}
Hence, the entries of the coefficient matrices $\bar{\mathcal{B}}^{-1}_{j+1}$ and $\bar{\mathcal{A}}_{j+1}$ determine the propagation of the local rounding errors in p-CG.
The entries of $\bar{\mathcal{B}}^{-1}_{j+1}$ consist of a product of the scalar coefficients $\bar{\beta_{j}}$. In exact arithmetic these coefficients equal $\beta_j = \|r_j\|^2/\|r_{j-1}\|^2$, such that
\begin{equation*}
	\beta_i \, \beta_{i+1} \, \ldots \, \beta_{j} = \frac{\|r_i\|^2}{\|r_{i-1}\|^2} \frac{\|r_{i+1}\|^2}{\|r_{i}\|^2} \ldots \frac{\|r_j\|^2}{\|r_{j-1}\|^2} = \frac{\|r_j\|^2}{\|r_{i-1}\|^2}, \quad i \leq j.
\end{equation*}
Since the residual norm in CG is not guaranteed to decrease monotonically, the factor $\|r_j\|^2/\|r_{i-1}\|^2$ may for some $i \leq j$ be much larger than one. A similar argument may be used in the finite precision framework to derive that some entries of $\bar{\mathcal{B}}^{-1}_{j+1}$ may be significantly larger than one, and may hence (possibly dramatically) amplify the corresponding local rounding errors. 
This behavior is illustrated in Section \ref{sec:experiments} by Fig.\,\ref{fig:accuracy1}, \ref{fig:figure2}, \ref{fig:figure3} and \ref{fig:figure7}, where the p-CG residual norm typically stagnates at a reduced maximal attainable accuracy level compared to classic CG.

\begin{minipage}[t]{0.46\textwidth}
\begin{algorithm}[H]
{\small
  \caption{Conjugate Gradient method (CG) \hfill \textbf{Input:} $A$, $b$, $x_0$, $m$, $\tau$} \label{algo:CG}
  \begin{algorithmic}[1]
    \State $r_0 := b - Ax_0$;
		\State $p_0 := r_0$; 
    \For{$i = 0, \dots, m$}
    \State $s_i := Ap_{i}$;
    \State $\alpha_{i} := \left( r_i, r_i \right) / \left( s_i, p_i \right)$;
		\State \textbf{end if} 
    \State $x_{i+1} := x_i + \alpha_{i} p_i$;
    \State $r_{i+1} := r_i - \alpha_{i} s_i$;
    \State $\beta_{i+1} := \left( r_{i+1}, r_{i+1} \right) / \left( r_i, r_i \right)$;
    \State $p_{i+1} := r_{i+1} + \beta_{i+1} p_i$;
    \EndFor
		\State \textbf{end for} 
  \end{algorithmic}
}
\end{algorithm}
\end{minipage}
\hfill
\begin{minipage}[t]{0.46\textwidth}
\begin{algorithm}[H]
{\small
  \caption{Pipelined Conjugate Gradient method (p-CG) \hfill \textbf{Input:} $A$, $b$, $x_0$, $m$, $\tau$}  \label{algo:PIPECG}
  \begin{algorithmic}[1]
    \State $r_0 := b - Ax_0$;
		\State $w_0 := Ar_0$;
    \For{$i = 0,\dots,m$}
    \State $\gamma_i := (r_i,r_i)$;
    \State $\delta_i := (w_i,r_i)$;
    \State $v_i := A w_i$;
		\State \textbf{end if} 
    \If{$i>0$}
    \State $\beta_i := \gamma_i/\gamma_{i-1}$; 
		\State $\alpha_i := (\delta_i/\gamma_i - \beta_i/\alpha_{i-1})^{-1}$;
    \Else
    \State $\beta_i := 0$; 
		\State $\alpha_i := \gamma_i/\delta_i$;
    \EndIf
		\State \textbf{end if} 
    \State $z_i := v_i + \beta_i z_{i-1}$;
    \State $s_i := w_i + \beta_i s_{i-1}$;
    \State $p_i := r_i + \beta_i p_{i-1}$;
    \State $x_{i+1} := x_i + \alpha_i p_i$;
    \State $r_{i+1} := r_i - \alpha_i s_i$;
    \State $w_{i+1} := w_i - \alpha_i z_i$;
    \EndFor
		\State \textbf{end for} 
  \end{algorithmic}
}
\end{algorithm}
\end{minipage}

\newpage

\begin{sidewaystable}
\section{Supplementary numerical results on maximal attainable accuracy\vspace{1.0cm}} \label{sec:supplementary}
\centering
\footnotesize
\begin{tabular}{| l | r r r r | r | r | r | r | r | r | r | r |r |}
\hline 
    Matrix   & Prec   & $\kappa(A)$&$n$& \#$nnz$     &$\|b\|_2$ & CG & p-CG & \hspace{-0.1cm}p($1$)-CG\hspace{-0.1cm} & \hspace{-0.1cm}p($2$)-CG\hspace{-0.1cm} & \hspace{-0.1cm}p($3$)-CG\hspace{-0.1cm} & \hspace{-0.1cm}p($4$)-CG\hspace{-0.1cm} & \hspace{-0.1cm}p($5$)-CG\hspace{-0.1cm} & iter \\
\hline \hline
    bcsstk14 & JAC    & 1.3e+10 & 1806   & 63,454    & 2.1e+09  & 8.2e-16  & 2.3e-12  & 2.3e-16  & 5.9e-13 & 7.3e-12  & 3.0e-12  & 1.2e-10  & 700\\
    bcsstk15 & JAC    & 8.0e+09 & 3948   & 117,816   & 4.3e+08  & 3.7e-15  & 2.4e-12  & 3.2e-14  & 2.2e-06 & 3.5e-06  & 2.1e-06  & 2.0e-06  & 780 \\
    bcsstk16 & JAC    & 65      & 4884   & 290,378   & 1.5e+08  & 3.7e-15  & 6.3e-12  & 1.1e-14  & 8.5e-12 & 5.6e-11  & 1.5e-10  & 8.5e-11  & 300 \\
    bcsstk17 & JAC    & 65      & 10,974 & 428,650   & 9.0e+07  & 1.5e-14  & 4.4e-09  & 1.4e-04  & 2.1e-06 & 3.5e-06  & 1.8e-06  & 5.3e-06  & 3600 \\
    bcsstk18 & JAC    & 65      & 11,948 & 149,090   & 2.6e+09  & 2.3e-15  & 1.2e-10  & 5.4e-11  & 3.9e-13 & 1.1e-12  & 3.0e-11  & 1.6e-11  & 2400\\
    bcsstk27 & JAC    & 7.7e+04 & 1224   & 56,126    & 1.1e+05  & 3.6e-15  & 1.8e-11  & 1.2e-14  & 2.3e-11 & 9.2e-09  & 1.1e-08  & 7.7e-09  & 350 \\
  gr\_30\_30 &  -     & 3.8e+02 & 900    & 7744      & 1.1e+00  & 2.8e-15  & 3.1e-13  & 8.9e-15  & 1.6e-14 & 1.9e-15  & 1.9e-15  & 2.1e-15  & 60\\
    nos1     & *ICC   & 2.5e+07 & 237    & 1017      & 5.7e+07  & 1.1e-14  & 4.2e-10  & 4.3e-11  & 1.3e-05 & 7.9e-05  & 6.0e-03  & 6.6e-05  & 350\\
    nos2     & *ICC   & 6.3e+09 & 957    & 4137      & 1.8e+09  & 8.3e-14  & 1.0e-07  & 4.4e-06  & 1.4e-05 & 1.4e-05  & 9.7e-06  & 1.0e-05  & 3180\\
    nos3     & ICC    & 7.3e+04 & 960    & 15,844    & 1.0e+01  & 9.6e-15  & 1.3e-12  & 2.4e-14  & 4.1e-14 & 2.4e-14  & 9.8e-15  & 5.4e-13  & 65\\
    nos4     & ICC    & 2.7e+03 & 100    & 594       & 5.2e-02  & 1.9e-15  & 3.5e-14  & 7.2e-16  & 7.6e-16 & 3.6e-15  & 4.0e-15  & 6.0e-15  & 33\\
    nos5     & ICC    & 2.9e+04 & 468    & 5172      & 2.8e+05  & 3.1e-16  & 6.7e-14  & 2.8e-16  & 1.8e-16 & 2.3e-16  & 6.8e-16  & 2.7e-16  & 63\\
    nos6     & ICC    & 8.0e+06 & 675    & 3255      & 8.6e+04  & 5.0e-15  & 4.1e-11  & 4.8e-14  & 6.1e-09 & 5.0e-08  & 4.3e-08  & 2.4e-08  & 34\\
    nos7     & ICC    & 4.1e+09 & 729    & 4617      & 8.6e-03  & 3.1e-08  & 1.1e-07  & 5.4e-08  & 9.9e-08 & 1.6e-07  & 2.7e-05  & 5.9e-04  & 31\\
    s1rmq4m1 & ICC    & 1.8e+06 & 5489   & 262,411   & 1.5e+04  & 4.7e-15  & 5.5e-12  & 8.5e-15  & 8.1e-14 & 4.9e-15  & 3.1e-15  & 2.3e-14  & 135\\
    s1rmt3m1 & ICC    & 2.5e+06 & 5489   & 217,651   & 1.5e+04  & 8.9e-15  & 4.1e-11  & 3.0e-15  & 2.7e-13 & 3.2e-13  & 2.7e-12  & 4.0e-13  & 245\\
    s2rmq4m1 & *ICC   & 1.8e+08 & 5489   & 263,351   & 1.5e+03  & 7.1e-15  & 3.0e-10  & 3.4e-13  & 6.0e-11 & 6.8e-06  & 1.4e-05  & 1.1e-05  & 370\\
    s2rmt3m1 & ICC    & 2.5e+08 & 5489   & 217,681   & 1.5e+03  & 2.3e-14  & 7.4e-10  & 1.2e-12  & 4.9e-12 & 9.6e-06  & 7.1e-06  & 6.9e-06  & 265\\
    s3rmq4m1 & *ICC   & 1.8e+10 & 5489   & 262,943   & 1.5e+02  & 1.5e-14  & 2.9e-08  & 1.9e-06  & 1.8e-06 & 2.5e-05  & 9.5e-07  & 3.9e-07  & 1650\\
    s3rmt3m1 & *ICC   & 2.5e+10 & 5489   & 217,669   & 1.5e+02  & 2.9e-14  & 1.0e-07  & 2.3e-09  & 2.3e-07 & 3.5e-07  & 4.9e-07  & 5.2e-07  & 2282\\
    s3rmt3m3 & *ICC   & 2.4e+10 & 5357   & 207,123   & 1.3e+02  & 3.2e-14  & 2.4e-07  & 1.4e-07  & 1.7e-07 & 4.8e-06  & 4.9e-06  & 1.6e-05  & 2862\\
\hline
\end{tabular}
\caption{A random selection of real, non-diagonal and symmetric positive definite matrices from Matrix Market, listed with their respective condition number $\kappa(A)$, number of rows/columns $n$ and total number of nonzeros \#$nnz$. A linear system with right-hand side $b = A\hat{x}$ where $\hat{x}_i = 1/\sqrt{n}$ is solved with each matrix. The initial guess is all-zero $\bar{x}_0 = 0$. Preconditioners Jacobi (JAC), Incomplete Cholesky factorization (ICC) and compensated Incomplete Cholesky with global diagonal shift (ICC*) are included where required. For each problem the number of iterations (and thus the number of \textsc{spmv} and preconditioner applications) performed is fixed for all methods. The number of iterations performed is fixed per problem for all methods and is based on classic CG reaching maximal attainable accuracy (stagnation point). The relative residuals $\|b-A \bar{x}_i\|_2/\|b\|_2$ are shown for all methods. Pipelined methods generally reach a lower precision compared to classic CG for the same number of \textsc{spmv}s (iterations) and the loss of attainable accuracy is more pronounced for longer pipelines, cf. Section~\ref{sec:analysis}.}
\label{tab:matrix_market}
\end{sidewaystable}

\newpage

\end{document}